\documentclass{ws-m3as}
\let\trimmarks\relax
\usepackage[T1]{fontenc}
\usepackage{booktabs,tabularx,microtype}
\usepackage{algorithm,algpseudocode,enumitem,placeins,needspace}
\usepackage[super]{cite}
\usepackage{xcolor}
\usepackage[hypertexnames=false]{hyperref}
\hypersetup{colorlinks=true,citecolor=blue,linkcolor=blue,urlcolor=blue,
 pdftitle={A posteriori error bounds for cumulative energy fields in irreversible gradient damage},
 pdfauthor={Yubo Huang, Xiongyu Hu, Xin Lai, Cheng Lyu}}
\makeatletter
\let\c@lemma\c@theorem
\let\c@proposition\c@theorem
\let\c@corollary\c@theorem
\let\c@definition\c@theorem
\let\c@remark\c@theorem
\makeatother

\newcommand{\dd}{\,\mathrm{d}}
\newcommand{\raw}{\mathrm{raw}}
\newcommand{\refe}{\mathrm{ref}}
\newcommand{\TV}{\operatorname{TV}}

\newcommand{\norm}[1]{\left\|#1\right\|}

\newcommand{\LL}{\mathcal L}
\newcommand{\EE}{\mathcal E}

\setlist[itemize]{leftmargin=*,itemsep=2pt,topsep=4pt}
\setlist[enumerate]{leftmargin=*,itemsep=2pt,topsep=4pt}
\allowdisplaybreaks[2]
\begin{document}
\markboth{Y. Huang, X. Hu, X. Lai and C. Lyu}{Error bounds for cumulative energy fields}
\title{A posteriori error bounds for cumulative energy fields\\
in irreversible gradient damage}
\author{Yubo Huang, Xiongyu Hu\footnote{Corresponding author.}, and Xin Lai}
\address{Key Laboratory of Transportation Tunnel Engineering, Ministry of Education\\
Southwest Jiaotong University, Chengdu, China\\
\href{mailto:ybforever@my.swjtu.edu.cn}{ybforever@my.swjtu.edu.cn}\\
\href{mailto:huxiongyu@swjtu.edu.cn}{huxiongyu@swjtu.edu.cn}\\
\href{mailto:xinlai@my.swjtu.edu.cn}{xinlai@my.swjtu.edu.cn}}
\author{Cheng Lyu}
\address{Faculty of Geosciences and Engineering\\
Southwest Jiaotong University, Chengdu, China\\
\href{mailto:lvchengaust@163.com}{lvchengaust@163.com}}
\maketitle
\begin{abstract}
Cumulative energy fields in irreversible evolution depend on the loading path, even when terminal states and global work agree. We derive a posteriori bounds for these fields in scalar gradient damage with a prescribed driving density. A perturbation estimate for the incremental obstacle map yields a continuous loading history, which is enclosed by nondecreasing residual supersolutions. A primitive identity transfers this enclosure to the cumulative field. An exact endpoint-rule defect and outward arithmetic certify the output delivered by the numerical method. The analysis separates spatial, loading, and algebraic errors and gives sufficient conditions for certificate decay. Computable realizations use splines and verified supersolutions in one dimension and on two-dimensional rectangles. The production bounds cover the independently enclosed errors of all 417 saved histories from 28 adaptive calculations and all 25 fixed-parameter controls. In a narrow-localization case, the verified relative error decreases from approximately 158.78\% to below 2.125\% at a 5\% certified target. Source-oriented marking reduces accumulated matrix--vector work by 35.9\% and 51.9\% in two comparisons. A mechanically generated two-dimensional benchmark shows how auxiliary supersolutions sharpen constant barriers and enable acceptance of both loading orders.
\end{abstract}
\keywords{A posteriori error bounds; variational inequalities; irreversible evolution;
cumulative energy fields; admissible reconstruction; verified computation.}
\ccode{AMS Subject Classification: 65N15, 65N30, 49J40, 65G50}
\section{Introduction}\label{snum:introduction}
Gradient damage models use an evolving internal field and a regularized
spatial interaction to describe material degradation
\cite{Peerlings1996}. Variational fracture \cite{Francfort1998} and
phase-field regularizations \cite{Zhuang2022} provide a related energetic
description of crack evolution, while rate-independent
evolution supplies a mathematical setting for irreversibility
\cite{Mielke2015}. Along a damage history, the local work of the degradation
force defines a cumulative energy field. Spatial coupling makes its
distribution sensitive to the relative progress of neighboring regions.
Computing this field accurately therefore requires control of the
irreversible path and of the numerical accumulation performed along it.

Residual a posteriori estimates relate approximation errors to computable
defects of the governing equations \cite{Chamoin2023}.
For elliptic obstacle problems, maximum-norm estimates and
contact-sensitive barriers provide pointwise control
\cite{Nochetto2003,Nochetto2005}. Convex duality and explicit flux
reconstruction yield error representations for nonsmooth convex
variational problems \cite{BartelsKaltenbach2024}, including error
identities for scalar Signorini contact \cite{Bartels2025}.
Calibration-based estimates address nonconvex variational problems
\cite{Berkels2026}. For evolution problems, elliptic reconstruction
connects continuous stability with discretization defects
\cite{Makridakis2003}. This approach has recently been combined with
conditional stability estimates for the Allen--Cahn equation with
variable mobility \cite{Brunk2025}.

Goal-oriented analysis weights residuals by dual solutions to control
selected outputs \cite{Becker2001}. Recent dual-weighted residual (DWR)
developments treat
nonlinear and multiple outputs and balance discretization with iteration
error under saturation hypotheses \cite{Endtmayer2020,Endtmayer2024}.
Discrete adjoint estimates can retain nonlinear remainder terms
\cite{Granzow2023}, while space--time localization provides separate
temporal and spatial indicators for parabolic problems \cite{Thiele2024}.
Strict bounds also cover history-dependent local quantities in
viscoelasticity \cite{Chamoin2007} and combine spatial, temporal, and
iterative errors in quasi-static plasticity \cite{Ladeveze2012}.

Adaptive methods for damage and fracture address localization and the
cost of incremental solution. Goal-oriented error assessment has been
developed for gradient-enhanced damage \cite{Pannachet2010} and
phase-field fracture \cite{Wick2016}. Residual estimates account for
irreversibility constraints \cite{Walloth2022}, and adaptive quasi-Newton
stepping controls incremental solution \cite{Kristensen2020}.
Spatial and temporal refinement are combined in dynamic fracture
\cite{Labanda2022} and in energy-based adaptive calculations for
heterogeneous materials \cite{Rohracker2026}. Residual minimization in a
discontinuous-Galerkin dual norm also supplies mesh indicators
\cite{Liu2026}. For strongly monotone nonlinear elliptic problems, coupled control of
discretization, linearization, and algebraic errors yields adaptive
algorithms with quasi-optimal cost \cite{Haberl2021}. Equilibrated fluxes
give guaranteed upper bounds and degree-robust local efficiency for
isogeometric Poisson approximations \cite{Gantner2024}.

The problem studied here couples three sources of error through an
evolving obstacle. An approximate state determines the admissible set of
the next increment, loading discretization changes the path through these
sets, and the endpoint accumulation introduces an additional output
defect. We treat their combined effect on a spatially distributed
cumulative field. The central construction is a nondecreasing
supersolution that propagates pointwise uncertainty through successive
obstacles. A primitive identity then transfers the resulting history
enclosure to the source field.

We establish this construction for scalar irreversible gradient damage
with a prescribed driving density. The analysis gives a continuous
loading-history limit, a causal comparison estimate, and a relative
$L^2$ bound for the delivered cumulative field. The bound includes the
actual endpoint-rule defect and outward arithmetic, with spatial,
loading, and algebraic contributions available for accuracy allocation.
The comparison and source-transfer results are formulated on general
bounded Lipschitz domains. Computable realizations use admissible spline
reconstruction and verified supersolutions in one dimension and on
two-dimensional rectangles. Sufficient conditions for certificate decay complement these bounds.
Numerical assessment covers 417 saved histories from 28 adaptive calculations,
25 fixed-parameter controls, and a mechanically generated two-dimensional
damage history. It quantifies reliability and the benefits of causal
bounds and source-oriented marking.

Section~\ref{sec:model} defines the irreversible problem and cumulative
field. Section~\ref{sec:certificate} establishes the error bounds, and
Section~\ref{sec:realizations} gives their computable realizations.
Sections~\ref{sec:algorithm} and \ref{sec:numerical} present adaptive
allocation and numerical assessment. Section~\ref{snum:conclusions}
concludes the paper.

\section{The Irreversible Problem and Its Cumulative Energy Field}\label{sec:model}
Let $\Omega\subset\mathbb R^d$ be a bounded Lipschitz domain, $a,b>0$,
and $0\leq f\in W^{1,\infty}(0,T;L^\infty(\Omega))$. Set
$L=\|f_t\|_{L^\infty(\Omega\times(0,T))}$; the continuous-in-loading
representative of $f$ is used throughout. Define
\begin{equation}\label{eq:energy}
 \EE_t(v)=\frac12\int_\Omega
 \big[f(x,t)(1-v)^2+a v^2+b|\nabla v|^2\big]\dd x,
 \qquad v\in H^1(\Omega).
\end{equation}
The natural boundary condition is homogeneous Neumann. The interaction
length is $\sqrt{b/a}$. The initial field $d_0\in H^1(\Omega)$ lies
between zero and one and is stable at $f(0)$, in the sense of the increment
below with obstacle $d_0$.

For a partition $\pi=\{0=t_0<\cdots<t_N=T\}$, define
\begin{equation}\label{eq:increment}
 d_n^\pi=\operatorname*{arg\,min}_{v\in K(d_{n-1}^\pi)}\EE_{t_n}(v),
 \qquad K(w)=\{v\in H^1(\Omega):w\leq v\leq1\}.
\end{equation}
All inequalities between fields are almost-everywhere inequalities.
Writing
\begin{equation}\label{eq:operator}
 \LL_t=-b\Delta+a+f(\cdot,t),
\end{equation}
the weak optimality condition is
\begin{equation}\label{eq:vi}
 \int_\Omega\!\big[b\nabla d_n^\pi\cdot\nabla(v-d_n^\pi)
       +\big((a+f_n)d_n^\pi-f_n\big)(v-d_n^\pi)\big]\dd x\geq0,
 \quad v\in K(d_{n-1}^\pi).
\end{equation}
The continuous target $d$ is the limit of these histories as $|\pi|\to0$.
Proposition~\ref{prop:history-limit} proves that this limit exists,
is partition-independent, and has pointwise nondecreasing Lipschitz loading
paths. The cumulative energy field is
\begin{equation}\label{eq:source}
 Q[d](x,T)=\int_0^T f(x,t)(1-d(x,t))d_t(x,t)\dd t.
\end{equation}
The force $f(1-d)$ is conjugate to damage in \eqref{eq:energy}, so
$Q[d]$ is its accumulated local work. We call this output the cumulative
source.

The numerical output is specified separately. Let $0\leq u_{n-1}\leq u_n
\leq1$ be the stored spatial fields, with $u_0=d_0$, and let $u$ interpolate
them affinely in loading. The exact right-endpoint polynomial formed from
these stored values is
\begin{equation}\label{eq:raw}
 \widetilde Q_{\raw}(x,T)
 =\sum_{n=1}^N f_n(x)(1-u_n(x))(u_n(x)-u_{n-1}(x)).
\end{equation}
Its stored floating-point coefficient field is $Q_{\raw}$, with arithmetic
error $Q_{\raw}-\widetilde Q_{\raw}$. Given $\varepsilon>0$, we seek
\begin{equation}\label{eq:goal}
 \frac{\|Q_{\raw}-Q[d]\|_{L^2(\Omega)}}{\|Q[d]\|_{L^2(\Omega)}}
 \leq\varepsilon,
\end{equation}
with a computed positive lower bound for the exact source norm.

A local work identity makes the loading-path contribution explicit. For a
smooth stationary monotone history satisfying $\LL_t d=f$, testing with
$d_t$ gives
\begin{equation}\label{snum:global-work}
 \int_\Omega Q[d]\dd x
 =\frac12\left[\int_\Omega(a d^2+b|\nabla d|^2)\dd x\right]_0^T.
\end{equation}
For such histories, the corresponding distributional local identity is
\begin{equation}\label{snum:local-work}
 Q[d]=\frac12[a d^2+b|\nabla d|^2]_0^T
       -b\nabla\cdot\int_0^T d_t\nabla d\dd t.
\end{equation}
Section~\ref{sec:2d-mechanism} proves these identities with the required
$H^1$ loading regularity. Common endpoints fix the global work, while
the accumulated flux retains local loading-order effects. The error
analysis also allows nonmonotone drives and active obstacles.
\subsection{Loading order and local accumulated work}\label{supp:motivation}
Consider a finite-dimensional counterpart with symmetric coupling matrix
$K$. During growth of component
$i$, equilibrium gives $Y_i=(Kd)_i$; during inactivity, $\dot d_i=0$.
Consequently $q_i=\int_0^T(Kd)_i\dot d_i\dd t$. The product rule gives
\[
 2(Kd)_i\dot d_i
 =\frac{\dd}{\dd t}\big[d_i(Kd)_i\big]
   +\sum_jK_{ij}(d_j\dot d_i-d_i\dot d_j).
\]
Integration along the loading path gives
\begin{equation}\label{eq:path-area}
 q_i=\frac12\left[d_i(Kd)_i\right]_0^T+
 \frac12\sum_j K_{ij}\int_0^T
 \left(d_j\,\dot d_i-d_i\,\dot d_j\right)\dd t .
\end{equation}
The second term contains oriented areas in damage space. Symmetry gives
$\sum_iq_i=\tfrac12[d^\top Kd]_0^T$; diagonal $K$ removes the area
terms altogether. Off-diagonal coupling permits different local sources
at identical endpoints and global work. Figure~\ref{fig:history} shows
the two-component example of Eq.~\eqref{eq:counterexample} and its
spatially distributed counterpart.

\begin{figure}
\centering\includegraphics[width=\linewidth]{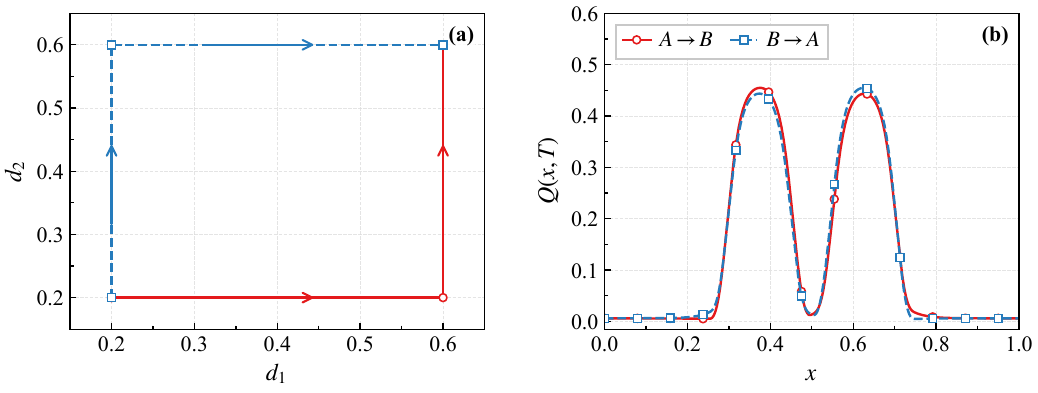}
\caption{Loading-order dependence: (a) irreversible paths for the two-component example
in Eq.~\eqref{eq:counterexample}; (b) independently enclosed one-dimensional
source profiles. Both comparisons have identical terminal damage and different local sources.}\label{fig:history}
\end{figure}

\Needspace{10\baselineskip}

\subsection{Positioning with respect to a posteriori output control}\label{sec:positioning}

Table~\ref{tab:positioning} summarizes the relation to existing error
control. The evolving obstacle requires a pointwise history enclosure,
provided by Theorem~\ref{thm:history}. Theorems~\ref{thm:transfer} and
\ref{thm:source} then transfer it to the cumulative field and include
endpoint accumulation and arithmetic.

\begin{table}
\caption{Representative error-control results and their relation to the present analysis. Guarantees refer to the assumptions of the cited formulations.}\label{tab:positioning}
\centering\footnotesize
\setlength{\tabcolsep}{4pt}
\begin{tabularx}{\linewidth}{@{}>{\raggedright\arraybackslash}p{.21\linewidth}>{\raggedright\arraybackslash}X>{\raggedright\arraybackslash}X@{}}
\toprule
Formulation & Established error-control object & Relation to the present problem\\
\midrule
Obstacle residuals and barriers \cite{Nochetto2003,Nochetto2005}
 & Pointwise control and localized estimates for elliptic contact problems
 & Supply the comparison background; the obstacle here also inherits loading-history uncertainty.\\
Reconstruction-based evolution estimates \cite{Makridakis2003,Brunk2025}
 & Continuous stability combined with computable defects of an approximation
 & Motivate separating the history stability theorem from its numerical realization.\\
DWR \cite{Endtmayer2020,Endtmayer2024,Thiele2024}
 & Single and multiple output errors; spatial, temporal, and iteration control, with guarantees under the respective hypotheses
 & Provides the output-oriented perspective; the present pointwise history enclosure is transferred to the full cumulative field.\\
Strict output bounds \cite{Chamoin2007,Ladeveze2012}
 & Guaranteed bounds for selected local outputs, including history-dependent and inelastic settings
 & Connects history-dependent error control to local outputs; the present bound combines causal obstacle inheritance with the endpoint rule.\\
Damage and fracture adaptivity \cite{Walloth2022,Labanda2022,Rohracker2026,Liu2026}
 & Irreversibility-constrained residuals, spatial and temporal indicators, or residual-minimization indicators
 & Motivates the separate spatial, loading, and solver channels; the present target is a prescribed-drive cumulative field.\\
Present framework
 & Pointwise source bounds and a relative $L^2$ certificate for the delivered field
 & Includes inherited history uncertainty, reconstruction, endpoint accumulation, and arithmetic. Full adaptive allocation is assessed in one dimension; two dimensions use nested grids.\\
\bottomrule
\end{tabularx}
\end{table}

\section{General Causal History-to-Output Certification}\label{sec:comparison}\label{sec:certificate}

We first bound the irreversible history by scalar comparison and then
transfer this bound to the cumulative field. The endpoint-rule defect
completes the estimate for the delivered output.

\subsection{Scalar obstacle comparison and the continuous history}
Throughout this section, $\Omega\subset\mathbb R^d$ is a bounded Lipschitz
domain, $a,b>0$, and $f\in W^{1,\infty}(0,T;L^\infty(\Omega))$ is
nonnegative. Order relations mean almost-everywhere order. The initial
state belongs to $H^1(\Omega)\cap L^\infty(\Omega)$, lies in $[0,1]$,
and is stable at the initial drive. We use the weak Neumann realization
of $\LL_f=-b\Delta+a+f$, interpreting the boundary condition through
$H^1$ test functions. To make each comparison explicit, write
\begin{equation}\label{eq:weak-form-definition}
 A_f(v,\phi)=\int_\Omega\{b\nabla v\cdot\nabla\phi+(a+f)v\phi\}\dd x,
 \qquad \|v\|_0^2=b\|\nabla v\|_2^2+a\|v\|_2^2.
\end{equation}
Thus $\LL_fv\geq h$ means $A_f(v,\phi)\geq(h,\phi)$ for every
nonnegative $\phi\in H^1(\Omega)$. The norm $\|\cdot\|_0$ is equivalent
to the $H^1$ norm because both $a$ and $b$ are positive. The identity
$\nabla(v_+)=\mathbf1_{\{v>0\}}\nabla v$ is understood in the Sobolev
sense. It justifies the positive-part comparisons below for Sobolev functions
and their measurable contact sets.

\begin{lemma}[Incremental comparison and uniform perturbations]\label{lem:stability}
For $0\leq w\leq1$ in $H^1(\Omega)$, let $S(f,w)$ denote the unique
minimizer of Eq.~\eqref{eq:energy} over $v\geq w$. This minimizer lies
in $[0,1]$ and hence also minimizes over $K(w)$. The map is increasing
in both nonnegative $f$ and $w$, and $S(f,S(f,w))=S(f,w)$. Moreover,
\begin{equation}\label{eq:increment-stability}
 \|S(f,w)-S(g,z)\|_{L^\infty}
 \leq \max\{\|w-z\|_{L^\infty},a^{-1}\|f-g\|_{L^\infty}\}.
\end{equation}
These statements hold in every spatial dimension under the stated assumptions.
\end{lemma}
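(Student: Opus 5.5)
Existence and uniqueness of $S(f,w)$ come first, followed by the bounds, monotonicity and idempotence, and finally the perturbation estimate.

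\emph{Existence, uniqueness and the bounds.} The functional $\EE$ with drive $f$ is continuous, strictly convex and coercive on $H^1(\Omega)$, since $a,b>0$ and $\EE$ dominates $\tfrac12\|v\|_0^2$. The set $\{v\geq w\}$ is closed, convex and nonempty. The direct method therefore gives a unique minimizer $u=S(f,w)$, characterized by the variational inequality $A_f(u,v-u)\geq(f,v-u)$ for all $v\geq w$.

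For the upper bound I test with $v=\min\{u,1\}=u-(u-1)_+$, which is still $\geq w$ because $w\leq1$. This gives $A_f(u,(u-1)_+)\leq(f,(u-1)_+)$. Rewriting,
\[
 b\|\nabla(u-1)_+\|^2+\int(a+f)(u-1)(u-1)_+\leq -a\int(u-1)_+\leq0,
\]
so $(u-1)_+=0$. Nonnegativity is immediate from $u\geq w\geq0$. Hence $u$ also minimizes over the smaller set $K(w)$.

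\emph{Monotonicity and idempotence.} For idempotence, $u\in\{v\geq u\}\subset\{v\geq w\}$, so $u$ minimizes over the smaller set as well, giving $S(f,u)=u$.

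For monotonicity I use the standard lattice argument. Let $u_1=S(f,w_1)$ and $u_2=S(g,w_2)$ with $f\leq g$ and $w_1\leq w_2$. I test the first inequality with $v=u_1-(u_1-u_2)_+=\min\{u_1,u_2\}$, which is $\geq w_1$. I test the second with $v=u_2+(u_1-u_2)_+=\max\{u_1,u_2\}$, which is $\geq w_2$. Adding the two inequalities and setting $e=(u_1-u_2)_+$ gives
\[
 A_f(u_1-u_2,e)\leq\int (f(1-u_1)-g(1-u_2))e+\int (g-f)(1-u_2)e\cdot0\,,
\]
more precisely $A_f(u_1,e)-A_g(u_2,e)\leq(f-g,e)$. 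Rearranging the $f$ versus $g$ terms gives
\[
 A_f(u_1-u_2,e)\leq\int(f-g)(1-u_2)e\leq0,
\]
using $u_2\leq1$ and $f\leq g$. Therefore $\|e\|_0^2\leq0$ and $u_1\leq u_2$.

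\emph{Perturbation estimate.} The same computation, without an ordering between $f$ and $g$, gives the stability bound. Set $m=\max\{\|w-z\|_\infty,a^{-1}\|f-g\|_\infty\}$ and $e=(u_1-u_2-m)_+$. Then $\min\{u_1,u_2+m\}\geq w_1$ holds because $w_1\leq w_2+m\leq u_2+m$. I use it as the test function for $u_1$, and $\max\{u_2,u_1-m\}$ as the test function for $u_2$. Adding yields
\[
 A_f(u_1-u_2,e)\leq\int(f-g)(1-u_2)e,
\]
and subtracting the constant $m$ gives
\[
 b\|\nabla e\|^2+\int(a+f)e^2\leq\int\big[(f-g)(1-u_2)-(a+f)m\big]e.
\]

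This last bracket is the main obstacle, since it must be shown nonpositive. We have $|f-g|(1-u_2)\leq\|f-g\|_\infty\leq am\leq(a+f)m$, using $f\geq0$ and $0\leq u_2\leq1$. The bracket is therefore $\leq0$, so $e=0$ and $u_1-u_2\leq m$. Exchanging the roles of the two problems gives the reverse bound. None of the steps uses the dimension; only the Sobolev lattice property $\nabla(v_+)=\mathbf1_{\{v>0\}}\nabla v$ is needed.
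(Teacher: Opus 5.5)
Your proof is correct and is essentially the paper's argument. It uses the same $(u-1)_+$ range test, the same drive identity producing $(f-g)(1-u_2)$, and the same constant shift by $m$. The paper's least-supersolution characterization is inlined as your paired $\min/\max$ tests (the $\max$ test is exactly the supersolution property of $u_2$), and idempotence comes more directly from minimality over the nested sets $\{v\geq u\}\subset\{v\geq w\}$.

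The first display of your monotonicity paragraph is a leftover draft line: with $A_f$ on the left it is not what the two tests give. Delete it and keep the ``more precisely'' inequality, which is the one you actually use.
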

\begin{proof}
The set $K_+(w)=\{v\in H^1:v\geq w\}$ is nonempty, convex, and
weakly closed. The energy is weakly lower semicontinuous and satisfies
\begin{equation}\label{eq:coercive-energy}
 \EE_f(v)=\tfrac12 A_f(v,v)-(f,v)+\tfrac12\int_\Omega f\dd x
 \geq\tfrac12\|v\|_0^2.
\end{equation}
The direct method and strict convexity give a unique minimizer $u$.
Its variational inequality is $A_f(u,v-u)\geq(f,v-u)$ for
$v\in K_+(w)$. With $\eta=(u-1)_+$, the admissible test
$v=u-\eta=\min(u,1)$ gives
\begin{align}
 0&\geq A_f(u,\eta)-(f,\eta)\notag\\
 &=b\|\nabla\eta\|_2^2+
   \int_{\{u>1\}}[(a+f)(1+\eta)-f]\eta\dd x\notag\\
 &=\|\eta\|_0^2+\int_\Omega f\eta^2\dd x+a\int_\Omega\eta\dd x
 \geq\|\eta\|_0^2.\label{eq:range-test}
\end{align}
Thus $w\leq u\leq1$, and $u$ also minimizes over $K(w)$.

Testing with $u+\phi$, $\phi\geq0$, shows that $u$ is a weak
supersolution. If $s\geq w$ is another supersolution, set
$\eta=(u-s)_+$ and use $u-\eta=\min(u,s)\in K_+(w)$. Then
\begin{align}
 0\leq\|\eta\|_0^2
 &\leq A_f(u-s,\eta)
 =A_f(u,\eta)-A_f(s,\eta)\notag\\
 &\leq(f,\eta)-(f,\eta)=0.\label{eq:least-super-test}
\end{align}
Hence $u\leq s$, which is the least-supersolution characterization.

For $f\leq g$ and $w\leq z$, put $v=S(g,z)$. Since $0\leq v\leq1$
and $v\geq z\geq w$, every nonnegative test satisfies
\begin{align}
 A_f(v,\phi)-(f,\phi)
 &=A_g(v,\phi)-(g,\phi)+((g-f)(1-v),\phi)\notag\\
 &\geq((g-f)(1-v),\phi)\geq0.
 \label{eq:drive-order-chain}
\end{align}
Comparison gives $S(f,w)\leq S(g,z)$. For $u=S(f,w)$, admissibility
and the same characterization give
\[
 u\leq S(f,u)\leq u,
 \qquad S(f,S(f,w))=S(f,w).
\]

Finally, let
$\delta=\max\{\|w-z\|_\infty,a^{-1}\|f-g\|_\infty\}$ and
$v=S(g,z)$. The shifted field satisfies $v+\delta\geq z+\delta\geq w$,
and
\begin{align}
 A_f(v+\delta,\phi)-(f,\phi)
 &=A_g(v,\phi)-(g,\phi)
   +((f-g)(v-1)+(a+f)\delta,\phi)\notag\\
 &\geq\int_\Omega[-\|f-g\|_\infty+a\delta]\phi\dd x
 \geq0.\label{eq:constant-shift-chain}
\end{align}
Applying comparison in both directions yields
\[
 S(f,w)\leq S(g,z)+\delta,
 \qquad S(g,z)\leq S(f,w)+\delta,
\]
which proves \eqref{eq:increment-stability}.
\end{proof}

\begin{proposition}[Continuous loading-history limit]\label{prop:history-limit}
The right-continuous incremental histories converge uniformly in
$L^\infty(\Omega)$ as the maximum loading step tends to zero, independently
of the partitions. Their limit $d$ is nondecreasing and satisfies
$0\leq d\leq1$. With $L=\|f_t\|_{L^\infty(\Omega\times(0,T))}$,
$\|d(t)-d(s)\|_\infty\leq L|t-s|/a$.
\end{proposition}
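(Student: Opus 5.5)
The plan is to compare the incremental histories of two partitions through the map $S$, and to use Lemma~\ref{lem:stability} to propagate the error increment by increment. For a partition $\pi$, write $d^\pi(t)=d^\pi_n$ for $t\in[t_n,t_{n+1})$; this is the right-continuous history, and $d_n^\pi=S(f_n,d_{n-1}^\pi)$ by Lemma~\ref{lem:stability}. Monotonicity, $d^\pi_{n-1}\leq d^\pi_n$, follows because $d_n^\pi\in K(d_{n-1}^\pi)$. The bound $0\leq d^\pi\leq1$ follows from the range statement of the lemma together with $0\leq d_0\leq1$.

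The key estimate I would aim for is a uniform Lipschitz-type bound for the discrete histories. Stability of $d_0$ at $f(0)$ means $S(f(0),d_0)=d_0$. Idempotence then gives, for every $m\geq n$,
\[
d_m^\pi=S(f_m,d_{m-1}^\pi),\qquad d_n^\pi=S(f_n,d_n^\pi)=S(f_n,\dots).
\]
This suggests comparing the history started at $d_n^\pi$ with frozen drive $f_n$, which stays equal to $d_n^\pi$, against the true history with drives $f_{n+1},\dots,f_m$. Iterating \eqref{eq:increment-stability}, the obstacle difference $\|d^\pi_{k}-d^\pi_n\|_\infty$ obeys
\[
\|d^\pi_k-d^\pi_n\|_\infty\leq\max\{\|d^\pi_{k-1}-d^\pi_n\|_\infty,\ a^{-1}\|f_k-f_n\|_\infty\}.
\]
The maximum structure is essential here: the errors do not accumulate additively. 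Induction therefore gives $\|d_m^\pi-d_n^\pi\|_\infty\leq a^{-1}\max_{n<k\leq m}\|f_k-f_n\|_\infty\leq L(t_m-t_n)/a$.

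For the Cauchy property across partitions, I compare $\pi$ with its common refinement $\pi\cup\pi'$. It suffices to bound $\|d^\pi(t)-d^{\rho}(t)\|_\infty$ when $\rho$ refines $\pi$. On a coarse step $(t_{n-1},t_n]$, the refinement performs several substeps $s_j$ with drives $f(s_j)$. Each of these differs from the coarse drive $f_n$ by at most $L|\pi|$. The coarse step is also $S(f_n,\cdot)$ applied once, and by idempotence this is the same as applying it repeatedly. Comparing the two sequences step by step with \eqref{eq:increment-stability} gives
\[
e_j\leq\max\{e_{j-1},\ L|\pi|/a\},
\]
so at the grid points $\max_n\|d_n^\pi-d^\rho(t_n)\|_\infty\leq L|\pi|/a$, again without accumulation. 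Between grid points, the discrete Lipschitz bound of the previous paragraph adds a further $O(L|\pi|/a)$. Altogether $\sup_t\|d^\pi(t)-d^{\pi'}(t)\|_\infty\leq C L(|\pi|+|\pi'|)/a$.

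This shows that the histories are uniformly Cauchy in $L^\infty$, and their limit $d$ is independent of the partitions. Pointwise monotonicity, the bounds $0\leq d\leq 1$, and the estimate $\|d(t)-d(s)\|_\infty\leq L|t-s|/a$ all pass to the limit, because the discrete error $L(t_m-t_n)/a$ is taken between grid points that converge to $s$ and $t$. The main obstacle is the refinement comparison: I must make sure the coarse step can be matched against the fine substeps with a non-accumulating error. I would handle this as above, rewriting the coarse increment as an iterated application of $S(f_n,\cdot)$ by idempotence, so that the maximum structure of \eqref{eq:increment-stability} gives a bound independent of the number of substeps.
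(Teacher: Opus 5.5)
Your proposal is correct and follows essentially the paper's argument: idempotence of $S(f,\cdot)$ lets frozen-drive updates be inserted on a common grid, the max-structure of \eqref{eq:increment-stability} prevents accumulation, and the discrete Lipschitz bound comes, as in the paper, from comparison with the stable constant sequence $d_n^\pi$. The difference is cosmetic. The paper compares $\pi$ and $\pi'$ directly on their union, using the left-endpoint drives $f^\pi(s_j)$ of the right-continuous histories. This yields $\frac{L}{a}(|\pi|+|\pi'|)$ at once, without your detour through the refinement with right-endpoint drive $f_n$ and the extra between-grid-points correction. The paper's proof also records $d(t)\in H^1$, $S(f(t),d(t))=d(t)$ and a pointwise Lipschitz representative, but these go beyond the stated claims.
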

\begin{proof}
For $t\in[t_n,t_{n+1})$, set $d^\pi(t)=d_n^\pi$ and
$f^\pi(t)=f(t_n)$, using the terminal values at $T$. Then
$\|f^\pi-f\|_{L^\infty(\Omega\times[0,T])}\leq L|\pi|$.
On the ordered union $0=s_0<\cdots<s_J=T$ of two partitions,
idempotence permits insertion of updates with unchanged drive.
For $D_j=\|d^\pi(s_j)-d^{\pi'}(s_j)\|_\infty$,
Lemma~\ref{lem:stability} gives $D_0=0$ and
\begin{align}
 D_j
 &\leq\max\{D_{j-1},a^{-1}\|f^\pi(s_j)-f^{\pi'}(s_j)\|_\infty\}
 \notag\\
 &\leq a^{-1}\max_{0\leq i\leq j}
           \|f^\pi(s_i)-f^{\pi'}(s_i)\|_\infty
 \leq\frac{L}{a}(|\pi|+|\pi'|).
 \label{eq:common-partition-chain}
\end{align}
Both histories are constant between union points, so
\begin{equation}\label{eq:uniform-history-cauchy}
 \sup_{0\leq t\leq T}\|d^\pi(t)-d^{\pi'}(t)\|_\infty
 \leq\frac{L}{a}(|\pi|+|\pi'|).
\end{equation}
Completeness gives a partition-independent uniform limit $d$.

Using the preceding state as a competitor gives
\begin{align}
 \EE_{t_n}(d_n^\pi)
 &\leq\EE_{t_n}(d_{n-1}^\pi)\notag\\
 &=\EE_{t_{n-1}}(d_{n-1}^\pi)
   +\tfrac12\int_\Omega(f_n-f_{n-1})(1-d_{n-1}^\pi)^2\dd x\notag\\
 &\leq\EE_{t_{n-1}}(d_{n-1}^\pi)
   +\tfrac12|\Omega|L\Delta t_n.
 \label{eq:history-energy-step}
\end{align}
Summation and \eqref{eq:coercive-energy} imply
\[
 \tfrac12\|d^\pi(t)\|_0^2
 \leq\EE_{t_n}(d_n^\pi)
 \leq\EE_0(d_0)+\tfrac12|\Omega|LT,
 \qquad t\in[t_n,t_{n+1}).
\]
At each fixed $t$, a weakly convergent $H^1$ subsequence has limit $d(t)$,
since uniform $L^\infty$ convergence also gives strong $L^2$ convergence.
Thus $d(t)\in H^1$. Put $\epsilon_\pi(t)=\|d^\pi(t)-d(t)\|_\infty$.
Idempotence and Lemma~\ref{lem:stability} give
\begin{align}
 &\|S(f(t),d(t))-d(t)\|_\infty\notag\\
 &\quad\leq\|S(f(t),d(t))-S(f^\pi(t),d^\pi(t))\|_\infty
       +\epsilon_\pi(t)\notag\\
 &\quad\leq\max\{\epsilon_\pi(t),L|\pi|/a\}
       +\epsilon_\pi(t)\longrightarrow0.
 \label{eq:limit-stability-chain}
\end{align}
This proves $S(f(t),d(t))=d(t)$.

For $s\leq t$, choose $t_i\leq s<t_{i+1}$ and
$t_j\leq t<t_{j+1}$, with the usual terminal convention.
Compare updates $i+1,\ldots,j$ with the constant sequence $d_i^\pi$,
which is stable at $f(t_i)$. For $j>i$, the same recursion gives
\begin{align*}
 \|d_j^\pi-d_i^\pi\|_\infty
 &\leq a^{-1}\max_{i<k\leq j}\|f(t_k)-f(t_i)\|_\infty\\
 &\leq L(t_j-t_i)/a\leq L(t-s+|\pi|)/a,\\
 \|d(t)-d(s)\|_\infty
 &\leq2\|d-d^\pi\|_{L^\infty(\Omega\times[0,T])}
       +L(t-s+|\pi|)/a.
\end{align*}
The case $j=i$ has zero discrete increment. Letting $|\pi|\to0$
proves the Lipschitz estimate; order and range pass through uniform
convergence. Choose representatives at rational loading times and remove
the countable union of the null sets in these inequalities. At every
remaining spatial point they have a unique nondecreasing Lipschitz
extension, representing $d(t)$ for each $t$, with
\[
 d(x,\cdot)\in W^{1,\infty}(0,T),
 \qquad 0\leq d_t(x,t)\leq L/a\quad\text{for a.e. }t.
\]
\end{proof}

\subsection{A certified history-enclosure provider}\label{sec:barriers}
Let $p$ be continuous and affine in loading on each interval
$I_n=(t_{n-1},t_n)$, with values in $H^1(\Omega)$, zero weak boundary
flux, $0\leq p\leq1$, $p_t\geq0$, and $p(0)=d_0$.
For a strong-residual realization we additionally require
$\Delta p\in L^2(\Omega)$; a globally $C^1$, piecewise $H^2$
reconstruction is sufficient. Suppose $\lambda,r\in L^2(\Omega)$ satisfy
\begin{equation}\label{eq:complementary-residual}
 r=f+\lambda-\LL_t p,\qquad\lambda\geq0,\qquad\lambda p_t=0.
\end{equation}
The residual relation, complementarity and the inequalities below are
required for almost every loading time; spatial inequalities are weak
inequalities against nonnegative $H^1$ test functions. The proof uses
continuity of the reconstruction and driving density to extend the weak
inequalities to comparison times. The multiplier enters through its
almost-everywhere complementarity condition.

\begin{theorem}[Causal history enclosure]\label{thm:history}
Suppose $e_n\in H^1(\Omega)\cap L^\infty(\Omega)$ are nonnegative and
\begin{equation}\label{eq:barrier-condition}
 e_n\geq e_{n-1},\qquad \LL_t e_n\geq|r(t)|\quad\text{for a.e. }t\in I_n,
 \qquad e_0=0.
\end{equation}
Then the continuous history of Proposition~\ref{prop:history-limit} satisfies
\begin{equation}\label{eq:history-bound}
 |d(x,t)-p(x,t)|\leq e_n(x)\quad\text{a.e. in }\Omega\times I_n.
\end{equation}
Section~\ref{sec:realizations} constructs supersolutions satisfying these
hypotheses and encloses their values in verified arithmetic.
\end{theorem}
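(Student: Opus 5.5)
The plan is to prove both one-sided bounds for the discrete incremental histories $d^\pi$ first, and then pass to the limit $|\pi|\to0$ using Proposition~\ref{prop:history-limit}. I would fix a partition $\pi$ that refines the reconstruction partition $\{t_n\}$; this is harmless because the limit is partition-independent. I would then prove by induction over the steps $s_k$ of $\pi$, with $s_k\in\overline{I_n}$,
\[
 p(s_k)-e_n\leq d^\pi_k\leq p(s_k)+e_n .
\]
The hypotheses \eqref{eq:complementary-residual} and \eqref{eq:barrier-condition} hold only for a.e.\ $t\in I_n$. I would extend them to every $s\in\overline{I_n}$ as weak inequalities against nonnegative $H^1$ tests. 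For a fixed test, $t\mapsto A_{f(t)}(p(t)\pm e_n,\phi)-(f(t),\phi)$ is continuous, since $p$ is affine in $t$ and $f$ is Lipschitz into $L^\infty$, so the inequality passes to limits of admissible times.

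\emph{Upper bound.} I would use the least-supersolution characterization from the proof of Lemma~\ref{lem:stability}: $S(f,w)\leq s$ whenever $s\geq w$ and $\LL_f s\geq f$ weakly. Set $v=p(s_k)+e_n$. Then $\LL_{s_k}v=f+\lambda-r+\LL_{s_k}e_n\geq f+\lambda-r+|r|\geq f$, because $\lambda\geq0$. For admissibility, the induction hypothesis gives $d^\pi_{k-1}\leq p(s_{k-1})+e_{n'}$ with $n'\in\{n-1,n\}$. Since $p_t\geq0$ and $e_{n-1}\leq e_n$, this is at most $v$. Hence $d^\pi_k=S(f(s_k),d^\pi_{k-1})\leq v$; the base case uses $p(0)=d_0$ and $e_0=0$.

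\emph{Lower bound.} This is the step where complementarity enters, and I expect it to be the main obstacle. Set $w=p(s_k)-e_n$ and $\eta=(w-d^\pi_k)_+$. The test $d^\pi_k+\eta=\max(d^\pi_k,w)$ is admissible above the obstacle $d^\pi_{k-1}$, so the variational inequality gives $A_f(d^\pi_k,\eta)\geq(f,\eta)$. On the other hand, $A_f(w,\eta)=(f+\lambda-r,\eta)-A_f(e_n,\eta)\leq(f,\eta)+(\lambda,\eta)$.

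The key observation is that $p$ is affine on $I_n$, so $p_t(x)$ does not depend on $t\in I_n$. Complementarity therefore forces $\lambda=0$ a.e.\ wherever $p_t(x)>0$. On the set $\{p_t=0\}$ we have $p(s_k)=p(t_{n-1})$, and induction gives $d^\pi_k\geq d^\pi_{k-1}\geq p(t_{n-1})-e_{n-1}\geq w$. So $\eta=0$ where $\lambda>0$, and $(\lambda,\eta)=0$. Subtracting the two inequalities yields $\|\eta\|_0^2\leq A_f(w-d^\pi_k,\eta)\leq0$, hence $\eta=0$.

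Finally, Proposition~\ref{prop:history-limit} gives $d^\pi\to d$ uniformly in $L^\infty$. Taking refining partitions whose nodes become dense, the bounds at the nodes combine with the Lipschitz continuity of $d$ and $p$ in loading to give \eqref{eq:history-bound} at every $t\in I_n$, a.e.\ in $\Omega$.
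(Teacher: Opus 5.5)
Your proposal is correct and follows essentially the same route as the paper. Both arguments run the induction over a comparison partition containing the knots, get the upper bound by comparing the obstacle solution with the supersolution $p(s_k)+e_n$, and get the lower bound by showing that $(p(s_k)-e_n-d_k^\pi)_+$ vanishes off the growth set $\{p_n-p_{n-1}>0\}$, where complementarity forces $\lambda=0$; both then extend to segment endpoints by continuity and pass to the limit through Proposition~\ref{prop:history-limit}.

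One point needs care in the ordering: $\lambda$ and $r$ exist only for a.e.\ $t$. Use the time-independent support property of $\eta$ first, to get $A_{f(t)}(p(t)-e_n,\eta)-(f(t),\eta)\leq0$ for a.e.\ $t$, and only then extend this inequality by continuity to $s_k$. This is exactly what the paper does when it restricts to tests vanishing outside $A_n$ before extending.
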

\begin{proof}
Fix $I_n$ and define $A_n=\{p_n-p_{n-1}>0\}$. For a nonnegative
$\phi\in H^1$, the residual and supersolution inequalities give
\begin{align}
 A_{f(t)}(p+e_n,\phi)-(f(t),\phi)
 &=(\lambda-r,\phi)+A_{f(t)}(e_n,\phi)\notag\\
 &\geq(\lambda+|r|-r,\phi)\geq0.
 \label{eq:upper-barrier-chain}
\end{align}
On $A_n$, affine interpolation gives $p_t=(p_n-p_{n-1})/\Delta t_n>0$,
so complementarity yields $\lambda=0$. For nonnegative $\phi\in H^1$
vanishing outside $A_n$,
\begin{align}
 A_{f(t)}(p-e_n,\phi)-(f(t),\phi)
 &=(\lambda-r,\phi)-A_{f(t)}(e_n,\phi)\notag\\
 &\leq(-r-|r|,\phi)\leq0.
 \label{eq:lower-barrier-chain}
\end{align}
For each fixed test the left sides are continuous on
$[t_{n-1},t_n]$, since $p$ is affine in $H^1$ and $f$ is continuous
in $L^\infty$. Thus both inequalities extend to the segment endpoints.

Choose a comparison partition containing all knots of $p$. Suppose that
successive times $s_{k-1}<s_k$ lie in this segment and that
\[
 p(s_{k-1})-e_n\leq d_{k-1}^\pi\leq p(s_{k-1})+e_n.
\]
Write $p_k=p(s_k)$ and $f_k=f(s_k)$. Monotonicity of $p$ gives
\begin{equation}\label{eq:upper-obstacle-inheritance}
 p_k+e_n\geq p(s_{k-1})+e_n\geq d_{k-1}^\pi,
 \qquad \LL_{f_k}(p_k+e_n)\geq f_k.
\end{equation}
Set $\eta=(d_k^\pi-p_k-e_n)_+$. The test
$d_k^\pi-\eta=\min(d_k^\pi,p_k+e_n)$ lies above the inherited obstacle.
Using its variational inequality and \eqref{eq:upper-barrier-chain},
\begin{align*}
 0\leq\|\eta\|_0^2
 &\leq A_{f_k}(d_k^\pi-p_k-e_n,\eta)\\
 &\leq(f_k,\eta)-A_{f_k}(p_k+e_n,\eta)\leq0.
\end{align*}
Hence $d_k^\pi\leq p_k+e_n$.

For the lower bound, let $v=p_k-e_n$ and $w=(v-d_k^\pi)_+$.
On $\{w>0\}$ the inherited enclosure implies
\begin{align}
 p_k-e_n&>d_k^\pi\geq d_{k-1}^\pi
                    \geq p(s_{k-1})-e_n,\notag\\
 0&<p_k-p(s_{k-1})
   =\frac{s_k-s_{k-1}}{\Delta t_n}(p_n-p_{n-1}).
 \label{eq:lower-growth-support}
\end{align}
Thus $w$ vanishes outside $A_n$ and is admissible in
\eqref{eq:lower-barrier-chain}. The exact obstacle solution is a
supersolution by Lemma~\ref{lem:stability}, and therefore
\begin{align}
 0\leq\|w\|_0^2
 &\leq A_{f_k}(v-d_k^\pi,w)
 =A_{f_k}(v,w)-A_{f_k}(d_k^\pi,w)\notag\\
 &\leq A_{f_k}(v,w)-(f_k,w)\leq0.
 \label{eq:lower-coercivity-chain}
\end{align}
The last step is \eqref{eq:lower-barrier-chain}, including its endpoint
extension. Consequently $w=0$ and $p_k-e_n\leq d_k^\pi$.

The induction starts from $p(0)=d_0$. At a reconstruction knot,
\[
 p_n-e_{n+1}\leq p_n-e_n\leq d^\pi(t_n)
 \leq p_n+e_n\leq p_n+e_{n+1},
\]
so $e_{n+1}\geq e_n$ initializes the next segment. For a fixed
$t\in[t_{n-1},t_n]$, take comparison partitions containing $t$ and the
reconstruction knots. Proposition~\ref{prop:history-limit} gives
\[
 \big\|\big(|d(t)-p(t)|-e_n\big)_+\big\|_\infty
 \leq\|d(t)-d^\pi(t)\|_\infty\longrightarrow0.
\]
The enclosure on a countable dense set of times extends by loading
continuity and holds almost everywhere on $\Omega\times I_n$.
\end{proof}

Computable cell bounds enclose an $H^1$ supersolution satisfying the
causal order. A constant-in-space construction follows immediately.

\begin{proposition}[Constant multidimensional realization]\label{prop:constant}
If $a+f\geq a_0>0$ and $|r|\leq\sum_j\rho_{j,n}$ on
$\Omega\times I_n$, where each $\rho_{j,n}$ is a certified nonnegative
number, set
\begin{equation}\label{eq:constant-barrier}
 e_{j,n}=a_0^{-1}\max_{m\leq n}\rho_{j,m},\qquad e_n=\sum_j e_{j,n}.
\end{equation}
These radii satisfy Theorem~\ref{thm:history} in every dimension.
\end{proposition}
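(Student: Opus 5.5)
We verify the three hypotheses \eqref{eq:barrier-condition} of Theorem~\ref{thm:history} directly for the spatially constant fields $e_n$ of \eqref{eq:constant-barrier}. Regularity and sign are immediate. Each $e_n$ is a nonnegative constant, hence in $H^1(\Omega)\cap L^\infty(\Omega)$ with zero gradient, and it satisfies the zero-flux Neumann realization trivially. For the initialization we adopt the convention that the running maximum over the empty index set $m\leq0$ is zero, so $e_0=0$.

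Causal monotonicity follows from the running maximum. For each $j$, the number $\max_{m\leq n}\rho_{j,m}$ is nondecreasing in $n$, since the index set only grows. Hence $e_{j,n}\geq e_{j,n-1}$, and summing over $j$ gives $e_n\geq e_{n-1}$. The maximum is what makes this monotone; the raw ratios $\rho_{j,n}/a_0$ need not be.

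The supersolution inequality is the only step with content. For a nonnegative test $\phi\in H^1(\Omega)$ and a.e.\ $t\in I_n$, the gradient term vanishes because $e_n$ is constant, so
\[
 A_{f(t)}(e_n,\phi)=\int_\Omega (a+f(t))\,e_n\,\phi\dd x
 \geq a_0 e_n\int_\Omega\phi\dd x .
\]
This uses $a+f\geq a_0$ together with $e_n\geq0$ and $\phi\geq0$. Next, $e_{j,n}\geq a_0^{-1}\rho_{j,n}$ by taking $m=n$ in the maximum, so $a_0e_n\geq\sum_j\rho_{j,n}$. By hypothesis $\sum_j\rho_{j,n}\geq|r(t)|$ a.e.\ on $\Omega\times I_n$. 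Multiplying by $\phi\geq0$ and integrating gives $A_{f(t)}(e_n,\phi)\geq(|r(t)|,\phi)$, which is exactly $\LL_te_n\geq|r(t)|$ in the weak sense fixed after \eqref{eq:weak-form-definition}.

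The main point requiring care is the a.e.-in-time qualification: the bound $|r|\leq\sum_j\rho_{j,n}$ holds on $\Omega\times I_n$, so the weak inequality holds for a.e.\ $t\in I_n$, which is precisely what the theorem requires. No Lipschitz or dimensional assumption enters, because the Laplacian never acts on the constants. The conclusion therefore holds in every dimension, and Theorem~\ref{thm:history} applies to these radii.
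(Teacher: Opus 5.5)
Your proof is correct and follows the paper's argument essentially step for step. The running maximum gives $e_n\geq e_{n-1}\geq 0$, and the vanishing gradient together with $a+f\geq a_0$ and $a_0e_{j,n}\geq\rho_{j,n}$ gives the weak inequality $A_{f(t)}(e_n,\phi)\geq(|r(t)|,\phi)$ for a.e.\ $t\in I_n$, with $e_0=0$ as the initialization.
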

\begin{proof}
With $R_{j,n}=\max_{m\leq n}\rho_{j,m}$,
\[
 R_{j,n}=\max\{R_{j,n-1},\rho_{j,n}\},
 \qquad e_{j,n}\geq e_{j,n-1}\geq0,
 \qquad a_0e_{j,n}=R_{j,n}\geq\rho_{j,n}.
\]
These spatial constants belong to $H^1\cap L^\infty$ and have zero
gradient. For every nonnegative $\phi\in H^1$ and a.e. $t\in I_n$,
\begin{align}
 A_{f(t)}(e_n,\phi)
 &=\int_\Omega(a+f(t))\sum_j\frac{R_{j,n}}{a_0}\phi\dd x\notag\\
 &\geq\int_\Omega\sum_jR_{j,n}\phi\dd x
 \geq\int_\Omega\sum_j\rho_{j,n}\phi\dd x
 \geq\int_\Omega|r(t)|\phi\dd x.
 \label{eq:constant-provider-chain}
\end{align}
Theorem~\ref{thm:history} applies with $e_0=0$.
\end{proof}

\subsection{Field-valued source transfer and the delivered output}\label{sec:source-transfer}
The source-transfer argument uses a pointwise history bound and is
independent of its construction.
Let $d,u$ be measurable $[0,1]$-valued fields with absolutely continuous
loading paths almost everywhere, and let $f(x,\cdot)\in W^{1,1}(0,T)$.
Assume the products in the following formulas are integrable. These
requirements hold for the bounded data and histories considered here.
Set $F(z)=z-z^2/2$ and
\begin{equation}\label{eq:F-bound}
 H(u,E)=\min\{E,(1-u)E+E^2/2\}.
\end{equation}

\begin{theorem}[Dimension-independent history-to-source transfer]\label{thm:transfer}
Suppose $|d(x,t)-u(x,t)|\leq E(x,t)$ almost everywhere, with the same
bound at $0$ and $T$. Then
\begin{equation}\label{eq:primitive}
 Q[d]=f(T)F(d(T))-f(0)F(d(0))-\int_0^T f_tF(d)\dd t
\end{equation}
and $|Q[d]-Q[u]|\leq B_H$ almost everywhere, where
\begin{equation}\label{eq:abstract-source-bound}
 B_H=|f(T)|H(u(T),E(T))+|f(0)|H(u(0),E(0))
       +\int_0^T|f_t|H(u,E)\dd t.
\end{equation}
If $B_H\in L^2(\Omega)$, then
$\|Q[d]-Q[u]\|_{L^2(\Omega)}\leq\|B_H\|_{L^2(\Omega)}$.
\end{theorem}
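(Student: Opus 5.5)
The argument is pointwise in $x$. We fix a spatial point outside a null set at which $d(x,\cdot)$, $u(x,\cdot)$ and $f(x,\cdot)$ are absolutely continuous, all the products involved are integrable in $t$, and the enclosure $|d-u|\leq E$ holds for a.e.\ $t$ and at $t=0,T$.

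\textbf{Step 1: the primitive identity \eqref{eq:primitive}.} Since $F'(z)=1-z$, the chain rule for the composition of the smooth function $F$ with the absolutely continuous path $d(x,\cdot)$ gives $\partial_tF(d)=(1-d)d_t$. The product $f\,F(d)$ of two absolutely continuous functions on $[0,T]$ is absolutely continuous, and
\[
 \partial_t[fF(d)]=f_tF(d)+f(1-d)d_t .
\]
Integrating over $(0,T)$ and using the fundamental theorem of calculus for $W^{1,1}$ functions yields \eqref{eq:primitive}. The same identity holds with $d$ replaced by $u$.

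\textbf{Step 2: subtraction.} Subtracting the two identities gives
\[
 Q[d]-Q[u]=f(T)\,[F(d(T))-F(u(T))]-f(0)\,[F(d(0))-F(u(0))]-\int_0^Tf_t\,[F(d)-F(u)]\dd t .
\]
By the triangle inequality, it then suffices to prove the scalar estimate
\[
 |F(y)-F(z)|\leq H(z,E)\quad\text{for } y,z\in[0,1],\ |y-z|\leq E .
\]

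\textbf{Step 3: the scalar estimate (the only genuinely new step).} I would prove two bounds separately and take their minimum.

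First, $F'=1-z\in[0,1]$ on $[0,1]$, so $F$ is $1$-Lipschitz there and $|F(y)-F(z)|\leq|y-z|\leq E$.

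Second, Taylor expansion is exact because $F$ is quadratic: with $h=y-z$,
\[
 F(y)-F(z)=(1-z)h-\tfrac12h^2 .
\]
Hence $|F(y)-F(z)|\leq(1-z)|h|+\tfrac12h^2\leq(1-z)E+E^2/2$, where $1-z\geq0$ is used.

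The point to watch is that the first bound relies on $y$ and $z$ both lying in $[0,1]$; this is exactly why the hypothesis that the fields are $[0,1]$-valued is needed. Combining the two bounds gives the minimum $H(u,E)$ at each time, including the endpoint times where the same bound is assumed.

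\textbf{Step 4: assembly.} Inserting the scalar estimate into the subtracted identity, with the factors $|f(T)|$, $|f(0)|$ and $|f_t|$, gives $|Q[d]-Q[u]|\leq B_H$ at almost every $x$. Measurability of $B_H$ follows from Tonelli's theorem, since $H(u,E)$ is a measurable nonnegative function. If $B_H\in L^2(\Omega)$, then $\|Q[d]-Q[u]\|_{L^2}\leq\|B_H\|_{L^2}$ because pointwise domination implies domination of $L^2$ norms.
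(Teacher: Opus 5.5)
Your proposal is correct and follows the paper's proof essentially step for step. Both arguments fix a spatial point with absolutely continuous paths, derive the primitive identity from $\partial_t[fF(d)]=f_tF(d)+f(1-d)d_t$, and subtract the corresponding identity for $u$. Both then bound $|F(d)-F(u)|$ by the minimum of the $1$-Lipschitz estimate and the exact quadratic expansion $(1-u)\delta-\tfrac12\delta^2$, and pass from the pointwise bound to the $L^2$ bound.
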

\begin{proof}
Fix a spatial point with absolutely continuous loading paths. Since
$F'(z)=1-z$, the product rule and integration in loading give
\begin{align}
 \partial_t(fF(d))&=f_tF(d)+f(1-d)d_t,\notag\\
 Q[d]&=[fF(d)]_0^T-\int_0^T f_tF(d)\dd t.
 \label{eq:primitive-proof-chain}
\end{align}
Subtract the same identity for $u$ and set $\Delta F=F(d)-F(u)$.
Then
\begin{equation}\label{eq:source-error-identity}
 Q[d]-Q[u]=f(T)\Delta F(T)-f(0)\Delta F(0)
                 -\int_0^T f_t\Delta F\dd t.
\end{equation}
For $\delta=d-u$, the range $d,u\in[0,1]$ yields the two estimates
\begin{align}
 |\Delta F|
 &=\left|\int_u^d(1-z)\dd z\right|\leq|\delta|\leq E,\notag\\
 |\Delta F|
 &=\big|(1-u)\delta-\tfrac12\delta^2\big|
 \leq(1-u)|\delta|+\tfrac12|\delta|^2
 \leq(1-u)E+\tfrac12E^2.
 \label{eq:primitive-difference-chain}
\end{align}
Thus $|\Delta F|\leq H(u,E)$. Applying this bound in
\eqref{eq:source-error-identity} and then integrating in space gives
\begin{align}
 |Q[d]-Q[u]|
 &\leq |f(T)|\,|\Delta F(T)|+|f(0)|\,|\Delta F(0)|
       +\int_0^T|f_t|\,|\Delta F|\dd t\notag\\
 &\leq B_H,\qquad
 \|Q[d]-Q[u]\|_2^2\leq\int_\Omega B_H^2\dd x.
 \label{eq:source-pointwise-chain}
\end{align}
The common-initial-state case has $\Delta F(0)=0$.
\end{proof}

For a practical cellwise estimate, let $u$ interpolate the delivered states
affinely in loading, and suppose $E_{j,K,n}$ enclose separate contributions
to $|d-u|$ on $K\times I_n$. For a reconstructed history, the spatial
contribution includes $|p-u|$; the other contributions may be the loading
and algebraic parts of the history barrier. Define
\begin{equation}\label{eq:lip-factor}
 E_{K,n}=\sum_jE_{j,K,n},\qquad
 L_{K,n}=\min\{1,1-u^{\min}_{K,n}+E_{K,n}/2\}.
\end{equation}
With $V_{K,n}\geq\sup_{x\in K}\int_{I_n}|f_t|\dd t$ and
$f_K^T\geq\sup_K|f(T)|$, the common-initial-state estimate becomes
\begin{equation}\label{eq:source-channel}
 B_{j,K}=f_K^TL_K^TE_{j,K}^T+\sum_nV_{K,n}L_{K,n}E_{j,K,n}.
\end{equation}
Indeed, on $K\times I_n$,
\begin{equation}\label{eq:channel-transfer-chain}
 |F(d)-F(u)|\leq H(u,E_{K,n})
 =E_{K,n}\min\{1,1-u+E_{K,n}/2\}
 \leq L_{K,n}\sum_jE_{j,K,n}.
\end{equation}
For a.e. $x\in K$, the source identity and these bounds give
\begin{align}
 |Q[d]-Q[u]|
 &\leq f_K^TL_K^T\sum_jE_{j,K}^T
   +\sum_n\int_{I_n}|f_t|L_{K,n}\sum_jE_{j,K,n}\dd t\notag\\
 &\leq f_K^TL_K^T\sum_jE_{j,K}^T
   +\sum_nV_{K,n}L_{K,n}\sum_jE_{j,K,n}\notag\\
 &=\sum_j\left(f_K^TL_K^TE_{j,K}^T
       +\sum_nV_{K,n}L_{K,n}E_{j,K,n}\right)
 =\sum_jB_{j,K}.
 \label{eq:source-channel-summation}
\end{align}
The variation bounds $V_{K,n}$ cover the full prescribed intervals.

The remaining terms compare $Q[u]$ with the endpoint accumulation
$\widetilde Q_{\raw}$ and its stored coefficient field $Q_{\raw}$.
On $I_n$, put $\Delta u_n=u_n-u_{n-1}$, $t=t_{n-1}+\theta\Delta t_n$,
$f_{0,n}=\int_0^1f(\theta)\dd\theta$, and
$f_{1,n}=\int_0^1\theta f(\theta)\dd\theta$. At a fixed spatial point,
\begin{align}
 Q_n[u]
 &=\int_{I_n}f(t)(1-u(t))u_t(t)\dd t\notag\\
 &=\int_0^1 f(\theta)(1-u_{n-1}-\theta\Delta u_n)
                 \frac{\Delta u_n}{\Delta t_n}\Delta t_n\dd\theta\notag\\
 &=\Delta u_n\big[f_{0,n}(1-u_{n-1})-f_{1,n}\Delta u_n\big],
 \label{eq:interval-source-chain}\\
 \widetilde Q_{\raw,n}
 &=f_n(1-u_n)\Delta u_n
   =\Delta u_n\big[f_n(1-u_{n-1})-f_n\Delta u_n\big].\notag
\end{align}
Subtracting these expressions gives the exact endpoint-rule defect
\begin{equation}\label{eq:defect}
 D_n=\widetilde Q_{\raw,n}-Q_n[u]
 =\Delta u_n\big[(f_n-f_{0,n})(1-u_{n-1})-(f_n-f_{1,n})\Delta u_n\big].
\end{equation}
The polynomial realizations enclose these exact loading moments by
outward arithmetic.

Summing the signed defects before bounding them gives
$|\widetilde Q_{\raw}-Q[u]|=|\sum_nD_n|\leq B_{q,K}$ on a cell.
The coefficient enclosures in Section~\ref{sec:verified-arithmetic}
supply the arithmetic bound
$|Q_{\raw}-\widetilde Q_{\raw}|\leq B_{\mathrm{fp},K}$.

\begin{theorem}[Certificate for the delivered field]\label{thm:source}
Suppose the history, variation, endpoint-defect and arithmetic bounds above
are rigorous on a measurable spatial partition. Define
\begin{equation}\label{eq:certificate}
 B_K=\sum_jB_{j,K}+B_{q,K}+B_{\mathrm{fp},K},\qquad
 C=\left(\sum_K|K|B_K^2\right)^{1/2}.
\end{equation}
Then $|Q_{\raw}-Q[d]|\leq B_K$ on each cell and
$\|Q_{\raw}-Q[d]\|_2\leq C$. If $q_-\leq\|Q_{\raw}\|_2$ and
$q_->C$, then
\begin{equation}\label{eq:relative-certificate}
 \frac{\|Q_{\raw}-Q[d]\|_2}{\|Q[d]\|_2}\leq\frac{C}{q_--C}.
\end{equation}
\end{theorem}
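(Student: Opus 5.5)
The plan is a triangle inequality at a.e. point of each cell. Split
\[
 Q_{\raw}-Q[d]=(Q_{\raw}-\widetilde Q_{\raw})+(\widetilde Q_{\raw}-Q[u])+(Q[u]-Q[d]),
\]
where $u$ is the affine-in-loading interpolant of the stored states. The three pieces are handled in turn.

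\begin{itemize}
\item The first piece is at most $B_{\mathrm{fp},K}$ by the coefficient enclosures.
\item For the second, summing the exact interval identities \eqref{eq:interval-source-chain} over $n$ gives $\widetilde Q_{\raw}-Q[u]=\sum_nD_n$, so $|\widetilde Q_{\raw}-Q[u]|\leq B_{q,K}$ by hypothesis.
\item For the third, apply Theorem~\ref{thm:transfer} with $E=\sum_jE_{j,K,n}$ on $K\times I_n$. The common initial state $u_0=d_0=d(0)$ kills the $t=0$ term. The chains \eqref{eq:channel-transfer-chain}--\eqref{eq:source-channel-summation} then give $|Q[u]-Q[d]|\leq\sum_jB_{j,K}$.
\end{itemize}

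Adding the three estimates yields $|Q_{\raw}-Q[d]|\leq B_K$ a.e.\ on $K$. Squaring and integrating over the partition gives $\|Q_{\raw}-Q[d]\|_2^2\leq\sum_K|K|B_K^2=C^2$.

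One technical point needs care here: the hypotheses of Theorem~\ref{thm:transfer} must be available. We need absolutely continuous paths of $d$, which Proposition~\ref{prop:history-limit} supplies through Lipschitz paths, and of $u$, which holds since $u$ is piecewise affine. We also need the bound $|d-u|\leq E$ to hold including at the terminal time $T$ and on each $I_n$. This is where the rigor assumption on the history bounds is used. Continuity in loading from Proposition~\ref{prop:history-limit} extends the a.e.-in-time enclosure to the endpoint $T$, so that $E^T_{j,K}$ legitimately controls the terminal term.

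For the relative bound, the reverse triangle inequality gives
\[
 \|Q[d]\|_2\geq\|Q_{\raw}\|_2-\|Q_{\raw}-Q[d]\|_2\geq q_--C>0.
\]
Dividing the absolute estimate $\|Q_{\raw}-Q[d]\|_2\leq C$ by this positive lower bound yields \eqref{eq:relative-certificate}.

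I expect no serious obstacle; the whole argument is bookkeeping. The only place needing attention is verifying that the channel sum correctly handles the terminal term $f_K^TL_K^T E^T_{j,K}$ with the right choice of $u^{\min}$ at $T$. A related check is that the minimum in $L_{K,n}$ dominates $H(u,E)/E$ pointwise, which follows since $u\geq u^{\min}_{K,n}$ on the cell.
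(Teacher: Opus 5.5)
Your proposal is correct and follows the paper's proof essentially step for step: the same three-term splitting through $\widetilde Q_{\raw}$ and $Q[u]$, bounded by $B_{\mathrm{fp},K}$, $B_{q,K}$ and $\sum_jB_{j,K}$, then cellwise integration to obtain $C$ and the reverse triangle inequality for the relative bound. The paper adds one remark for the computed certificate: outward evaluation rounds $C$ upward and $q_--C$ downward, so the inequality is preserved.
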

\begin{proof}
Decompose the delivered error using the interpolated source and the
exact endpoint polynomial. On each cell,
\begin{align}
 Q_{\raw}-Q[d]
 &=(Q_{\raw}-\widetilde Q_{\raw})
   +(\widetilde Q_{\raw}-Q[u])+(Q[u]-Q[d]),\notag\\
 |Q_{\raw}-Q[d]|
 &\leq B_{\mathrm{fp},K}+B_{q,K}+\sum_jB_{j,K}=B_K.
 \label{eq:delivered-error-chain}
\end{align}
The last step uses coefficient enclosure, \eqref{eq:defect}, and
\eqref{eq:source-channel-summation}. Hence
\begin{equation}\label{eq:absolute-norm-chain}
 \|Q_{\raw}-Q[d]\|_2^2
 =\sum_K\int_K|Q_{\raw}-Q[d]|^2\dd x
 \leq\sum_K|K|B_K^2=C^2.
\end{equation}
The reverse triangle inequality gives
\begin{align}
 \|Q[d]\|_2
 &\geq\|Q_{\raw}\|_2-\|Q_{\raw}-Q[d]\|_2
 \geq q_--C>0,\notag\\
 \frac{\|Q_{\raw}-Q[d]\|_2}{\|Q[d]\|_2}
 &\leq\frac{C}{q_--C}.
 \label{eq:relative-norm-chain}
\end{align}
Outward evaluation rounds $C$ upward and the positive denominator
downward, preserving the bound.
\end{proof}

For $q_-\leq C$, adaptation continues using the absolute bound.
Section~\ref{sec:algorithm} gives the resulting acceptance procedure.
\subsection{Endpoint consistency and decay of the certificate}
\label{snum:consistency}
The following bound on the endpoint defect yields sufficient conditions
for decay of the certificate.

\begin{proposition}[Consistency of endpoint accumulation]\label{snum:prop-endpoint}
Let $f$ satisfy the assumptions of Section~\ref{sec:model}, put
$F_* =\|f\|_{L^\infty(\Omega\times(0,T))}$, and let $u$ be the
nondecreasing affine interpolant of fields in $[0,1]$. Set
$\tau=\max_n\Delta t_n$ and $\eta=\max_n\|\Delta u_n\|_\infty$.
For the exact defects in \eqref{eq:defect},
\begin{equation}\label{snum:defect-consistency}
 \left\|\sum_nD_n\right\|_\infty
 \leq\frac{L\tau+F_*\eta}{2}.
\end{equation}
If $\|d-u\|_{L^\infty(\Omega\times[0,T])}\leq\delta$, then
$\eta\leq L\tau/a+2\delta$.
\end{proposition}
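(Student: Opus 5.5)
The plan is to bound each defect $D_n$ pointwise by a multiple of $\Delta u_n$ and then use monotonicity to telescope. I would not work from the split form in \eqref{eq:defect}. Bounding $f_n-f_{1,n}$ there costs a full factor $F_*$ rather than $F_*/2$. Instead I return to the integral representation behind \eqref{eq:interval-source-chain}. At a fixed spatial point, with $u(\theta)=u_{n-1}+\theta\Delta u_n$ and $f(\theta)$ the drive at $t_{n-1}+\theta\Delta t_n$,
\[
 D_n=\Delta u_n\int_0^1\big[f_n(1-u_n)-f(\theta)(1-u(\theta))\big]\dd\theta .
\]
I then split the integrand as
\[
 f_n(1-u_n)-f(\theta)(1-u(\theta))
 =(f_n-f(\theta))(1-u(\theta))-f_n(u_n-u(\theta)),
\]
where $u_n-u(\theta)=(1-\theta)\Delta u_n$.

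For the first piece, the Lipschitz bound $|f_n-f(\theta)|\le L(1-\theta)\Delta t_n$ holds because $f\in W^{1,\infty}(0,T;L^\infty)$ and we use its continuous representative. Together with $0\le 1-u(\theta)\le1$, its integral is at most $L\Delta t_n/2\le L\tau/2$. For the second piece, $0\le f_n\le F_*$ and $\int_0^1(1-\theta)\dd\theta=1/2$ give a bound of $F_*\Delta u_n/2\le F_*\eta/2$. Hence
\[
 |D_n|\le \Delta u_n\,\frac{L\tau+F_*\eta}{2}\quad\text{a.e.}
\]
Since $u$ is nondecreasing with values in $[0,1]$, we have $\sum_n\Delta u_n=u_N-u_0\le1$. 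Summing and taking the essential supremum yields \eqref{snum:defect-consistency}.

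For the second claim, I write
\[
 \Delta u_n=(u_n-d(t_n))+(d(t_n)-d(t_{n-1}))+(d(t_{n-1})-u_{n-1}).
\]
The outer two terms are each at most $\delta$ by hypothesis, since the stored states are the values of $u$ at the knots. The middle term is at most $L\Delta t_n/a$ by the Lipschitz estimate of Proposition~\ref{prop:history-limit}. Because $\Delta u_n\ge0$, this bound on $\Delta u_n$ controls $\|\Delta u_n\|_\infty$, and maximizing over $n$ gives $\eta\le L\tau/a+2\delta$.

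The only delicate step is choosing the splitting so that the constant is $F_*/2$ rather than $F_*$. Everything else is routine once the integral form is used. A minor additional point is the measure-theoretic care needed to pass from pointwise bounds at a.e.\ $x$ to the $L^\infty$ norm, which is immediate here.
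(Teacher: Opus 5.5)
Your proof is correct and follows the paper's argument. The paper likewise bounds each $D_n$ by $\tfrac12\Delta u_n(L\Delta t_n+F_*\Delta u_n)$, telescopes $\sum_n\Delta u_n\le1$, and obtains $\eta\le L\tau/a+2\delta$ from the same three-term triangle inequality through $d(t_n)$ and $d(t_{n-1})$. The only difference is cosmetic: the paper substitutes $u_{n-1}=u_n-\Delta u_n$ into \eqref{eq:defect}, which amounts to the mirror-image split $(f_n-f(\theta))(1-u_n)-f(\theta)(1-\theta)\Delta u_n$ of your integrand and gives the identical per-step bound, so working from \eqref{eq:defect} does not actually lose the factor $F_*/2$ once that substitution is made.
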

\begin{proof}
At a fixed spatial point, use $u_{n-1}=u_n-\Delta u_n$ and
$f_{0,n}-f_{1,n}=\int_0^1(1-\theta)f(\theta)\dd\theta$ in
\eqref{eq:defect}. Then
\begin{align}
 D_n&=\Delta u_n\left[(f_n-f_{0,n})(1-u_n)
       -\Delta u_n\int_0^1(1-\theta)f(\theta)\dd\theta\right],\notag\\
 |f_n-f_{0,n}|
 &\leq\int_0^1L\Delta t_n(1-\theta)\dd\theta=L\Delta t_n/2,\notag\\
 |D_n|&\leq\tfrac12L\Delta t_n\Delta u_n
                +\tfrac12F_*(\Delta u_n)^2.
 \label{snum:endpoint-chain}
\end{align}
Here $0\leq1-u_n\leq1$ and $\Delta u_n\geq0$. Consequently,
\begin{align*}
 \left|\sum_nD_n\right|
 &\leq\tfrac12L\tau\sum_n\Delta u_n
       +\tfrac12F_*\eta\sum_n\Delta u_n\\
 &=\tfrac12(L\tau+F_*\eta)(u_N-u_0)
 \leq\tfrac12(L\tau+F_*\eta).
\end{align*}
Taking the essential supremum proves \eqref{snum:defect-consistency}.
Proposition~\ref{prop:history-limit} also yields
\begin{align*}
 \|u_n-u_{n-1}\|_\infty
 &\leq\|u_n-d(t_n)\|_\infty
       +\|d(t_n)-d(t_{n-1})\|_\infty
       +\|d(t_{n-1})-u_{n-1}\|_\infty\\
 &\leq2\delta+L\Delta t_n/a\leq2\delta+L\tau/a.
\end{align*}
Maximizing over $n$ proves the bound for $\eta$.
\end{proof}

\begin{corollary}[Sufficient conditions for a vanishing certificate]
\label{snum:cor-decay}
Consider histories with loading size $\tau_k\to0$ and rigorous history
channel bounds satisfying
$\delta_k=\max_{K,n}\sum_jE_{j,K,n}\to0$, including the terminal bounds.
Assume the computed variation bounds obey
$\max_K(f_K^T+\sum_nV_{K,n})\leq W_*$ uniformly, and
$\epsilon_k=\max_K B_{\mathrm{fp},K}\to0$. In \eqref{eq:certificate},
one may replace $B_{q,K}$ by the smaller of its original value and a
rigorous evaluation of $(L\tau_k+F_*\eta_k)/2$. In exact arithmetic the
resulting bound $C_k^*$ satisfies
\begin{equation}\label{snum:certificate-decay}
 C_k^*\leq |\Omega|^{1/2}
 \left[(W_*+F_*)\delta_k+
       \frac{L}{2}\left(1+\frac{F_*}{a}\right)\tau_k+\epsilon_k\right]
 \longrightarrow0.
\end{equation}
The same conclusion holds for outward evaluation if its excess enclosure
tends to zero. If $\|Q[d]\|_2>0$ and
$\|Q_{\raw,k}\|_2-q_{-,k}\to0$, the relative certificate tends to zero.
\end{corollary}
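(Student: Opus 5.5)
The plan is to bound each of the four ingredients of $B_K$ in \eqref{eq:certificate} by a cell-independent constant and then sum. Since $C_k^*=(\sum_K|K|B_K^2)^{1/2}\leq|\Omega|^{1/2}\max_KB_K$, it suffices to show
\[
 \max_KB_K\leq(W_*+F_*)\delta_k+\tfrac{L}{2}\bigl(1+\tfrac{F_*}{a}\bigr)\tau_k+\epsilon_k .
\]

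\emph{History channels.} Sum \eqref{eq:source-channel} over $j$. By \eqref{eq:lip-factor}, $L_{K,n}\leq1$ and $L_K^T\leq1$. Together with $\sum_jE_{j,K,n}\leq\delta_k$, including the terminal bound, this gives
\[
 \sum_jB_{j,K}\leq\Bigl(f_K^T+\sum_nV_{K,n}\Bigr)\delta_k\leq W_*\delta_k .
\]

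\emph{Arithmetic channel.} By hypothesis, $B_{\mathrm{fp},K}\leq\epsilon_k$.

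\emph{Endpoint channel.} We use the replaced value of $B_{q,K}$, namely the minimum with $(L\tau_k+F_*\eta_k)/2$. This is a legitimate bound on $|\sum_nD_n|$ by Proposition~\ref{snum:prop-endpoint}, so Theorem~\ref{thm:source} still applies to $C_k^*$.

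The step needing care is converting $\eta_k$ into the stated constants. Proposition~\ref{snum:prop-endpoint} gives $\eta\leq L\tau/a+2\delta$, provided $\|d-u\|_{L^\infty(\Omega\times[0,T])}\leq\delta$. The channel bounds enclose $|d-u|$ a.e. on each $K\times I_n$, and also at the terminal time. Because both $d$ and $u$ are continuous in loading (Proposition~\ref{prop:history-limit} for $d$, affine interpolation for $u$), the a.e. enclosure on the open intervals extends to the knots $t_n$. Hence $\|d-u\|_\infty\leq\delta_k$ on all of $\Omega\times[0,T]$. Then
\[
 B_{q,K}\leq\tfrac12\bigl(L\tau_k+F_*L\tau_k/a+2F_*\delta_k\bigr)=\tfrac{L}{2}\bigl(1+\tfrac{F_*}{a}\bigr)\tau_k+F_*\delta_k .
\]
Adding the three channel bounds yields \eqref{snum:certificate-decay}, and the right-hand side tends to zero.

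\emph{Outward evaluation.} The outward-evaluated certificate exceeds $C_k^*$ by its excess enclosure, which tends to zero by assumption. So it also tends to zero.

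\emph{Relative certificate.} By Theorem~\ref{thm:source}, $\|Q_{\raw,k}-Q[d]\|_2\leq C_k\to0$, so $\|Q_{\raw,k}\|_2\to\|Q[d]\|_2>0$. Combined with $\|Q_{\raw,k}\|_2-q_{-,k}\to0$, this gives $q_{-,k}\to\|Q[d]\|_2$. Therefore $q_{-,k}-C_k$ is eventually bounded below by $\|Q[d]\|_2/2$, and the ratio $C_k/(q_{-,k}-C_k)$ in \eqref{eq:relative-certificate} tends to zero.
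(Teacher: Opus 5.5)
Your proposal is correct and follows the paper's proof essentially step for step. You bound the history channels by $W_*\delta_k$ via $L_{K,n}\le1$, use Proposition~\ref{snum:prop-endpoint} with $\eta_k\le L\tau_k/a+2\delta_k$ for the replaced endpoint term, take $C_k^*\le|\Omega|^{1/2}\max_K B_K^*$, and control the denominator $q_{-,k}-C_k$ through reliability; your added remark that loading continuity of $d$ and $u$ extends the a.e.\ enclosure to the knots $t_n$ is a detail the paper leaves implicit when invoking that proposition.
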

\begin{proof}
The smaller of the two valid defect bounds preserves
Theorem~\ref{thm:source}. Since $L_{K,n}\leq1$, the history contribution
is bounded by $W_*\delta_k$. Proposition~\ref{snum:prop-endpoint} gives
\begin{align*}
 \eta_k&\leq L\tau_k/a+2\delta_k,\\
 B_{q,K}^*&\leq\tfrac12L\tau_k+\tfrac12F_*\eta_k
 \leq F_*\delta_k+\tfrac12L(1+F_*/a)\tau_k,\\
 B_K^*&\leq W_*\delta_k+B_{q,K}^*+\epsilon_k
 \leq(W_*+F_*)\delta_k+\tfrac12L(1+F_*/a)\tau_k+\epsilon_k,\\
 C_k^*&=\left(\sum_K|K|(B_K^*)^2\right)^{1/2}
 \leq|\Omega|^{1/2}\max_K B_K^*\longrightarrow0.
\end{align*}
This proves \eqref{snum:certificate-decay}. For outward evaluation write
$\widehat C_k=C_k^*+\omega_k$, with $\omega_k\geq0$ and $\omega_k\to0$;
in exact arithmetic take $\omega_k=0$. Then $\widehat C_k\to0$.
Put $q=\|Q[d]\|_2>0$ and
$\sigma_k=\|Q_{\raw,k}\|_2-q_{-,k}\to0$. Reliability implies
\begin{align*}
 q_{-,k}-\widehat C_k
 &=\|Q_{\raw,k}\|_2-\sigma_k-\widehat C_k
 \geq q-\sigma_k-2\widehat C_k,\\
 0\leq\frac{\widehat C_k}{q_{-,k}-\widehat C_k}
 &\leq\frac{\widehat C_k}{q-\sigma_k-2\widehat C_k}\longrightarrow0,
\end{align*}
where both denominators are positive for all sufficiently large $k$.
\end{proof}

For the constant provider, vanishing residual suprema and reconstruction
defects imply $\delta_k\to0$. Applying the corollary to a generated
adaptive sequence would require these approximation properties and
control of outward-enclosure excess, with increasing precision if needed.
The experiments use the signed polynomial defect bound.

\section{Fully Computable Realizations}\label{sec:realizations}
We construct admissible histories, residual majorants, and norm bounds on
interval and tensor-product meshes. All experiments start from $d_0=0$.
General initial data would require an admissible initial reconstruction
and its enclosure.

\subsection{One-dimensional fully certified realization}\label{sec:reconstruction}
Let $\Omega=(0,1)$. We reconstruct the finite-element history by a
nondecreasing cubic spline.

The numerical states use continuous piecewise linear finite elements with
lumped reaction and loading terms. For nodal masses $m_i$ and the stiffness
matrix $K_b$, the free-node equation at load $t_n$ is
\begin{equation}\label{eq:lumped}
 \left[K_b+\operatorname{diag}\{m_i(a+f_n(x_i))\}\right]\mathbf u_n
 =\{m_i f_n(x_i)\}_i,\qquad\mathbf u_n\geq\mathbf u_{n-1}.
\end{equation}
The matrix property can be seen directly. If $h_i=x_{i+1}-x_i$, then
\begin{equation}\label{eq:discrete-coercivity}
 \mathbf z^TA_h\mathbf z
 =\sum_i\frac b{h_i}(z_{i+1}-z_i)^2
      +\sum_i m_i(a+f_n(x_i))z_i^2>0\qquad(\mathbf z\ne0).
\end{equation}
Moreover, $(A_h)_{ij}\leq0$ for $i\ne j$ and the row sum is
$m_i(a+f_n(x_i))>0$. If $A_h\mathbf z\geq0$ but $z_i<0$ is a
minimal component, then
\[
 (A_h\mathbf z)_i
 =\sum_{j\ne i}(A_h)_{ij}(z_j-z_i)
       +m_i(a+f_n(x_i))z_i<0,
\]
which is a contradiction. Thus the inverse is nonnegative, proving the
nonsingular M-matrix property. A principal free-node matrix retains
positive definiteness and has an even larger row sum after negative
off-diagonal entries are removed; the same argument applies to it.
A monotone active-set release procedure enforces complementarity
\cite{Hintermuller2002,Heister2015}, using restricted conjugate gradients
or banded direct solves. The signed projected residual defined below
records the algebraic error.

Use a nonnegative open clamped cubic B-spline basis $\{B_i\}$ and
affine loading interpolation. The following local polar recovery
\cite{deBoor2001,Farouki2012}, followed by coefficient constraints,
provides the required range, irreversibility, and zero boundary flux.
\label{app:recovery}
For each basis coefficient, select four distinct neighboring nodes
$x_0,\ldots,x_3$ and their Lagrange polynomials
$\ell_j(x)=\prod_{m\ne j}(x-x_m)/(x_j-x_m)$. The local interpolating
cubic is $v_i^n(x)=\sum_{j=0}^3u_n(x_j)\ell_j(x)$. Its B-spline
coefficient is obtained by applying the cubic polar functional to this
polynomial. Let the three interior knots of a cubic B-spline basis function
be $\xi_1,\xi_2,\xi_3$. The polar-functional values of the
monomials $1,x,x^2,x^3$ are
\[
 1,\quad
 \frac{\xi_1+\xi_2+\xi_3}{3},\quad
 \frac{\xi_1\xi_2+\xi_1\xi_3+\xi_2\xi_3}{3},\quad
 \xi_1\xi_2\xi_3 .
\]
Denote this functional by $\mathcal P_i$. Linearity gives an explicit
four-point recovery formula,
\[
 \widetilde c_i^n=\mathcal P_i[v_i^n]
 =\sum_{j=0}^3w_{ij}u_n(x_j),\qquad w_{ij}=\mathcal P_i[\ell_j].
\]
The mesh-dependent weights reproduce cubics before limiting and may have
either sign. Impose admissibility by setting
\begin{equation}\label{eq:coeff-limit}
 c_i^n=\min\{1,\max(c_i^{n-1},\widetilde c_i^n,0)\},\qquad
 p_n=\sum_i c_i^n B_i .
\end{equation}
Boundary coefficient pairs are identified before limiting. If $I$ is the
last coefficient index, differentiation of the clamped basis gives
\begin{equation}\label{eq:1d-neumann-coefficients}
 p_n'(0)=3(c_1^n-c_0^n)/h_{\rm left}=0,\qquad
 p_n'(1)=3(c_I^n-c_{I-1}^n)/h_{\rm right}=0.
\end{equation}
The same candidate and previous value are used for both members of a
boundary pair, so limiting preserves these identities at every load. The nonnegative basis forms a partition of unity, so
$0\leq c_i^n\leq1$ implies $0\leq p_n\leq1$. Affine interpolation between
successive reconstructed states gives the explicit order calculation
\[
 p_t(x,t)=\frac{p_n(x)-p_{n-1}(x)}{t_n-t_{n-1}}
 =\sum_i\frac{c_i^n-c_i^{n-1}}{t_n-t_{n-1}}B_i(x)\geq0,
 \qquad t\in I_n.
\]
Unchanged nodal patches give unchanged candidate coefficients and hence
preserve inactivity. The stored coefficients define $p$; rational knot
insertion encloses its Bernstein representation. Recovery, limiting, and
coefficient rounding are covered by the discrepancy from the delivered
P1 field. Since $p$ and $u$ are affine in loading,
\begin{equation}\label{eq:reconstruction-defect-chain}
 |p(t)-u(t)|
 \leq(1-\theta)|p_{n-1}-u_{n-1}|+\theta|p_n-u_n|
 \leq\max\{r_{K,n-1}^{\rm rec},r_{K,n}^{\rm rec}\}
 \quad\text{on }K\times I_n.
\end{equation}
The endpoint differences are bounded in a common Bernstein representation.

On a cell whose four supporting coefficients are unchanged, set
$\lambda=(\LL_t p-f)_+$. On an evolving cell set $\lambda=0$.
Thus $R=f-\LL_t p$ gives $r=R_+$ on frozen cells and $r=R$ on
evolving cells, satisfying \eqref{eq:complementary-residual}.
This contact-sensitive construction follows the treatment of admissible
constraint reactions in \cite{Nochetto2003,Walloth2022}.

Let $a_n(x)$ be the piecewise linear interpolant of the signed projected
nodal residual of \eqref{eq:lumped}, divided by nodal mass. At a node with
strictly positive accepted damage increment the signed residual is retained.
At a frozen node its negative part is removed. Define
\[
 R_n=f_n-\LL_{t_n}p_n,\qquad R_{x,n}=R_n-a_n.
 \]
For $\theta=(t-t_{n-1})/(t_n-t_{n-1})$, write
$\mathcal I_n$ for affine interpolation between endpoints. The exact identity
\begin{align}
 R(t)&=\mathcal I_n R_x+\mathcal I_n a+R_{\tau,n}(t),\label{eq:res-split}\\
 R_{\tau,n}(t)
 &=f(t)(1-p(t))-\mathcal I_n\!\left[f(1-p)\right](t)
 \label{eq:time-residual}
\end{align}
follows because $b p_{xx}-a p$ is affine on $I_n$,
\[
 R(t)-\mathcal I_nR
 =f(t)(1-p(t))-\mathcal I_n[f(1-p)](t)=R_{\tau,n}(t).
\]
The definition $R_{x,n}=R_n-a_n$ retains rounding in the stored
algebraic field.

For a general loading clock, the interval term can also be written as
\begin{equation}\label{snum:loading-residual}
 R_{\tau,n}=(f-\mathcal I_nf)(1-p)
          +\theta(1-\theta)\Delta f_n\Delta p_n.
\end{equation}
Indeed, affine interpolation of $p$ gives
\begin{align*}
 R_{\tau,n}
 &=(f-\mathcal I_nf)(1-p)
   +(\mathcal I_nf)(1-p)-\mathcal I_n[f(1-p)]\\
 &=(f-\mathcal I_nf)(1-p)
   +\theta(1-\theta)\Delta f_n\Delta p_n.
\end{align*}

On an evolving cell, absolute endpoint bounds for $R_x$ and $a$, together
with an interval bound for $R_{\tau,n}$, supply
$\rho_x,\rho_a,\rho_\tau$. On a frozen cell use
\begin{equation}\label{eq:positive-split}
 (A+B+C)_+\leq A_++|B|+|C|.
\end{equation}
Bernstein bounds of the positive spatial part on frozen cells and the
absolute parts elsewhere give the full-interval envelopes. In particular,
\eqref{eq:time-residual} detects an interior drive when both endpoints
vanish.

The resulting nonnegative envelopes satisfy
\begin{equation}\label{snum:rho}
 |r|\leq\rho_{x,K,n}+\rho_{\tau,K,n}+\rho_{a,K,n}
 \quad\text{on }K\times I_n.
\end{equation}
For each channel $j\in\{x,\tau,a\}$, define the prefix residual and
a companion constant radius by
\begin{equation}\label{eq:prefix}
 \overline\rho_j^n|_K=\max_{m\leq n}\rho_{j,K,m},
 \qquad
 \delta_j^n=\max_{K,m\leq n}
 \frac{\rho_{j,K,m}}{\inf_{K\times I_m}(a+f)} .
\end{equation}
Let $\LL_0=-b\partial_{xx}+a$ with the same Neumann boundary condition.
Its inverse is positive. Both
$\LL_0^{-1}\overline\rho_j^n$ and the constant $\delta_j^n$ are
supersolutions for the channel residual on $I_n$. Consequently,
\begin{equation}\label{eq:causal-barrier}
 e_{j,n}=\min\{\LL_0^{-1}\overline\rho_j^n,\delta_j^n\},
 \qquad e_n=e_{x,n}+e_{\tau,n}+e_{a,n},
\end{equation}
satisfies Theorem~\ref{thm:history}.
The supersolution property follows in three steps. First let
$z_n=\LL_0^{-1}\overline\rho_j^n$. The weak equation
and the negative-part test $z_n^-\!=\max(-z_n,0)$ give
\[
 -\|z_n^-\|_0^2=A_0(z_n,z_n^-)
 =(\overline\rho_j^n,z_n^-)\geq0,
 \qquad z_n\geq0.
\]
Consequently, for $t\in I_n$,
\begin{equation}\label{eq:green-supersolution-chain}
 \LL_tz_n=\overline\rho_j^n+f(t)z_n
 \geq\overline\rho_j^n\geq\rho_j(t).
\end{equation}
Applying the same negative-part argument to
$\LL_0(z_n-z_{n-1})=\overline\rho_j^n-\overline\rho_j^{n-1}\geq0$
gives $z_n\geq z_{n-1}$. Second, the constant in Eq.~\eqref{eq:prefix}
satisfies, on every $K\times I_n$,
\begin{equation}\label{eq:local-constant-chain}
 (a+f(x,t))\delta_j^n
 \geq \inf_{K\times I_n}(a+f)\,
       \frac{\rho_{j,K,n}}{\inf_{K\times I_n}(a+f)}
 =\rho_{j,K,n}.
\end{equation}
The prefix maximum makes this constant nondecreasing as well.

Third, the minimum of two weak supersolutions for the same right-hand
side remains a weak supersolution. Let $v,w$ be weak supersolutions for
$\rho\geq0$, and let $\chi_\delta$ be a nondecreasing Lipschitz
approximation to $\mathbf1_{(0,\infty)}$. For nonnegative bounded $\phi\in H^1$, use
$\phi\chi_\delta(w-v)$ in the inequality for $v$ and
$\phi[1-\chi_\delta(w-v)]$ in that for $w$. Both are admissible $H^1$
tests. Adding and expanding their gradients gives
\begin{align}
 &b\int_\Omega[\chi_\delta\nabla v+(1-\chi_\delta)\nabla w]
                         \cdot\nabla\phi\dd x
 +\int_\Omega(a+f)[\chi_\delta v+(1-\chi_\delta)w]\phi\dd x\notag\\
 &\hspace{1cm}\geq(\rho,\phi)
       +b\int_\Omega\phi\chi_\delta'(w-v)|\nabla(w-v)|^2\dd x
 \geq(\rho,\phi).\label{eq:min-super-chain}
\end{align}
Here $\chi_\delta$ is evaluated at $w-v$ in every occurrence. Letting
$\delta\downarrow0$, dominated convergence in the terms on the left
gives $A_f(\min(v,w),\phi)\geq(\rho,\phi)$. On $\{v=w\}$ their
gradients agree almost everywhere, so the chosen value of the limiting
indicator there is immaterial. Truncation extends the conclusion to
unbounded nonnegative $H^1$ tests. Thus the minimum is a weak
supersolution. Applying this result to $z_n$ and $\delta_j^n$, and then
adding the three channel inequalities, proves the residual requirement in
Theorem~\ref{thm:history}. Finally,
$z_n\geq z_{n-1}$ and $\delta_j^n\geq\delta_j^{n-1}$ imply
$\min(z_n,\delta_j^n)\geq\min(z_{n-1},\delta_j^{n-1})$ pointwise.
This proves the required causal order as well as the supersolution property.

Computed cell maxima enclose the $H^1$ supersolutions in
\eqref{eq:causal-barrier}. For $a=1$ and $b=k^{-2}$, the Neumann kernel of
$\LL_0=-k^{-2}\partial_{xx}+1$ is
\begin{equation}\label{eq:green}
 G(x,y)=
 \frac{
 e^{-k|x-y|}+e^{-k(x+y)}
 +e^{-k(2-x-y)}+e^{-k(2-|x-y|)}
 }{(2/k)(1-e^{-2k})}.
\end{equation}
An equivalent hyperbolic representation follows by expanding the
exponentials. Writing $x_< =\min(x,y)$ and $x_> =\max(x,y)$,
Eq.~\eqref{eq:green} becomes
\[
 G(x,y)=\frac{k\cosh(kx_<)\cosh(k(1-x_>))}{\sinh k}.
\]
Away from $x=y$, differentiation gives $G_{xx}=k^2G$. At the two domain
ends, the factor $\sinh(0)$ gives $G_x(0,y)=G_x(1,y)=0$. The two
expressions agree at $x=y$, and their derivative jump is
\begin{align}
 G_x(y^+,y)-G_x(y^-,y)
 &=-\frac{k^2}{\sinh k}
   [\cosh(ky)\sinh(k(1-y))\notag\\
 &\qquad{}+\sinh(ky)\cosh(k(1-y))]\notag\\
 &=-k^2,\quad
 -k^{-2}[G_x(y^+,y)-G_x(y^-,y)]=1.\label{eq:green-jump-chain}
\end{align}
Cellwise integration by parts consequently gives
$\LL_0G(\cdot,y)=\delta_y$ in the weak sense for interior $y$;
endpoint values follow by limits, and their contributions to the response
integral vanish. Coercivity identifies this response with the unique Neumann
inverse. Every term in the numerator is positive, which permits termwise upper
bounds for the response. For constant forcing $r_J\geq0$ on each cell $J$,
\[
 (\LL_0^{-1}r)(x)=\sum_Jr_J\int_JG(x,y)\dd y
 \leq\sum_J\overline G_{KJ}r_J,\qquad x\in K,
\]
where $\overline G_{KJ}\geq |J|\sup_{K\times J}G$.
Each exponential is bounded by minimizing its exponent over $K\times J$.
For the dyadic geometries and power-of-two $k$, these arguments are
exactly representable.

The exponential factors are enclosed directly by a Taylor-based
construction. For $z\geq0$, define
$P_{12}(z/128)=\sum_{j=0}^{12}(z/128)^j/j!$. Positivity of the omitted
Taylor terms gives the chain
\[
 0<P_{12}(z/128)\leq e^{z/128},\quad
 e^{-z/128}\leq P_{12}(z/128)^{-1},\quad
 e^{-z}\leq\big[P_{12}(z/128)^{-1}\big]^{128}.
\]
The polynomial is evaluated downward before taking an upward reciprocal.
Seven repeated squarings produce an upper enclosure for
$\exp(-z)$. All operations are outward-rounded, following the inclusion principles
of interval arithmetic \cite{Moore2009}.
If $U_{2k}\geq e^{-2k}$ is the resulting upper exponential bound,
then $1-e^{-2k}\geq1-U_{2k}$. Division uses an upward numerator and a positive downward endpoint of
$1-U_{2k}$.

For source transfer set $E_{x,K,n}=\widehat e_{x,K,n}+r^{\mathrm{rec}}_{K,n}$,
$E_{\tau,K,n}=\widehat e_{\tau,K,n}$, and
$E_{a,K,n}=\widehat e_{a,K,n}$, where
$r^{\mathrm{rec}}_{K,n}\geq\sup_{K\times I_n}|p-u|$ is obtained from the
endpoint Bernstein differences. Equations~\eqref{eq:source-channel}--
\eqref{eq:relative-certificate} then give the complete source bound.

\subsection{A controlled two-dimensional certificate}\label{sec:2d-certificate}

On $\Omega=(0,1)^2$, use a structured tensor grid, homogeneous Neumann
conditions, a polynomial drive, and the provider in
Proposition~\ref{prop:constant}. Section~\ref{sec:sen} uses the same tensor
construction on a $1\times2$ rectangle with a mechanically generated drive
and verified auxiliary providers.

Let $B_i(x)$ and $B_j(y)$ be the open clamped cubic bases on the two
coordinate partitions, with simple interior knots, and define
\begin{equation}\label{eq:tensor-recovery}
 p_n(x,y)=\sum_{i,j}c^n_{ij}B_i(x)B_j(y),\qquad
 0\leq c^{n-1}_{ij}\leq c^n_{ij}\leq1.
\end{equation}
Apply the polar recovery in each coordinate. Identify the first two and
last two coefficient rows or columns at each boundary, then apply the
coefficient bounds to these groups, starting from zero. The following
proposition verifies that this order of operations preserves admissibility.

\begin{proposition}[Admissibility of the tensor reconstruction]\label{prop:tensor}
The reconstruction in Eq.~\eqref{eq:tensor-recovery} is globally $C^2$
in space, lies in $[0,1]$, has zero normal derivative on the boundary,
and is nondecreasing throughout each space--loading cell after affine
loading interpolation. Its distributional Laplacian equals its cellwise
Laplacian.
\end{proposition}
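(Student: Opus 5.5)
The plan is to verify each property coefficientwise, using only standard facts about open clamped cubic B-splines with simple interior knots. Each such univariate basis is $C^2$, piecewise cubic, nonnegative, and forms a partition of unity. The tensor product $B_i(x)B_j(y)$ inherits these properties, so it is globally $C^2$ on $\overline\Omega$. The reconstruction $p_n$ is therefore $C^2$ in space as a finite linear combination.

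\textbf{Range and monotonicity.} The coefficient constraints in Eq.~\eqref{eq:tensor-recovery} hold after limiting, because the coefficients are defined recursively as in Eq.~\eqref{eq:coeff-limit}, starting from zero. Since the basis is a nonnegative partition of unity, $p_n(x,y)$ is a convex combination of the $c^n_{ij}$, which gives $0\leq p_n\leq1$. After affine loading interpolation, on each cell $K\times I_n$,
\[
 p_t=\sum_{i,j}\frac{c^n_{ij}-c^{n-1}_{ij}}{\Delta t_n}B_i(x)B_j(y)\geq0 .
\]
The range bound passes to the interpolant, which is a convex combination in loading.

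\textbf{Zero normal derivative.} On the face $x=0$ we have $\partial_x p_n=\sum_j\big(\sum_i c^n_{ij}B_i'(0)\big)B_j(y)$. For the clamped basis only $B_0'(0)=-3/h_{\rm left}$ and $B_1'(0)=3/h_{\rm left}$ are nonzero, as in Eq.~\eqref{eq:1d-neumann-coefficients}. Hence $\partial_xp_n(0,y)=3\sum_j(c^n_{1j}-c^n_{0j})B_j(y)/h_{\rm left}$, and the other three faces are analogous.

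\textbf{Main obstacle: the boundary identification survives limiting.} One must check that $c^n_{0j}=c^n_{1j}$ for all $j$ and $n$. The planned argument is an induction on $n$. Suppose the identified rows share the same previous value $c^{n-1}$ and the same candidate $\widetilde c^n$. Then the limiter $\min\{1,\max(c^{n-1},\widetilde c^n,0)\}$ returns the same value for both rows. The base case holds because all coefficients start at zero. At corners a coefficient belongs to both a row group and a column group. This requires identifying the full $2\times2$ corner block, and hence taking the grouping as the equivalence closure of the row and column identifications. The limiter is then applied once per group, which keeps each group constant. With this in place, the normal derivative vanishes on every face for every $n$, and also on the interpolant, which is affine in the coefficients.

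\textbf{Distributional Laplacian.} Since $p_n\in C^2(\overline\Omega)$, no derivative has an interface jump. Integrating by parts cell by cell against $\phi\in C_c^\infty$, the interior face contributions cancel because $\nabla p_n$ is continuous across faces. The distributional Laplacian therefore coincides with the cellwise (here classical) Laplacian, which is a piecewise polynomial and lies in $L^2$. Combined with the zero boundary flux, the weak Neumann form $b(\nabla p,\nabla\phi)=-(b\Delta p,\phi)$ holds for all $\phi\in H^1$, as Theorem~\ref{thm:history} requires.
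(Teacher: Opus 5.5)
Your proposal is correct and follows essentially the same route as the paper. Range and loading monotonicity come from the nonnegative partition of unity and the ordered coefficients. Zero boundary flux comes from the clamped-end derivative formula, with a common candidate and a common previous value passing identically through the clamp for each boundary group, including the full $2\times2$ corner blocks. The Laplacian claim follows from cellwise integration by parts: interior edge terms cancel by continuity of $\nabla p$, and outer-edge terms vanish by the Neumann identities. The one cosmetic difference is that you first test against $C_c^\infty$ functions and then extend to $H^1$ using the zero flux, whereas the paper integrates against $H^1$ tests directly via density.
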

\begin{proof}
\emph{Range, continuity and loading order.}
A cubic B-spline with simple interior knots has continuity $C^{3-1}=C^2$.
The tensor product preserves this continuity in both coordinates.
At each point, $B_i(x)B_j(y)\geq0$ and
$\sum_{i,j}B_i(x)B_j(y)=(\sum_iB_i(x))(\sum_jB_j(y))=1$. Therefore
\begin{align}
 0&\leq p_n(x,y)=\sum_{i,j}c^n_{ij}B_i(x)B_j(y)
                  \leq\sum_{i,j}B_i(x)B_j(y)=1,\notag\\
 p_n-p_{n-1}&=\sum_{i,j}(c^n_{ij}-c^{n-1}_{ij})B_i(x)B_j(y)\geq0,\notag\\
 p_t&=\frac1{\Delta t_n}\sum_{i,j}(c^n_{ij}-c^{n-1}_{ij})B_i(x)B_j(y)
       \geq0.\label{eq:tensor-order-chain}
\end{align}
All statements hold in cell interiors as well as at nodes. Knot insertion
forms convex combinations of coefficients; inserting enough knots to
obtain cellwise Bernstein form consequently preserves the known range
$[0,1]$. Intersecting extracted coefficient intervals with this range is
therefore valid.

\emph{The four Neumann edges and their corners.}
For a clamped cubic on a uniform coordinate mesh with width $h_x$,
the left-edge derivative is determined by the first two spline coefficients.
The basis derivative formula gives
\begin{align}
 \partial_xp_n(0,y)
 &=\frac3{h_x}\sum_j(c^n_{1j}-c^n_{0j})B_j(y)=0,\notag\\
 \partial_xp_n(1,y)
 &=\frac3{h_x}\sum_j(c^n_{I j}-c^n_{I-1,j})B_j(y)=0.
 \label{eq:tensor-neumann-chain}
\end{align}
Here $I$ is the last coefficient index in the $x$ direction. The analogous
formulas with adjacent columns give $\partial_yp_n(x,0)=\partial_yp_n(x,1)=0$.
For nonuniform coordinate partitions the respective boundary cell widths
replace $h_x,h_y$. The corner constraints simply identify the four
coefficients in each corner group; row and column identifications commute.
Starting from identical coefficients in each group, applying the same
clamp to their common candidate and their common previous value preserves
both these equalities and coefficient order. Affine loading interpolation
preserves the zero boundary derivatives.

\emph{Absence of interface residual distributions.}
Let $\phi\in H^1(\Omega)$ and integrate over the rectangles separately.
The piecewise classical Laplacian satisfies
\begin{align}
 \sum_K\int_K(-\Delta p)\phi\dd x
 &=\sum_K\int_K\nabla p\cdot\nabla\phi\dd x
   -\sum_K\int_{\partial K}(\nabla p\cdot n_K)\phi\dd s\notag\\
 &=\int_\Omega\nabla p\cdot\nabla\phi\dd x.
 \label{eq:tensor-no-jump-chain}
\end{align}
The two normals at an interior edge have opposite signs and $\nabla p$
has the same trace on both sides, so its two edge terms cancel. The
outer-edge terms vanish by Eq.~\eqref{eq:tensor-neumann-chain}.
The identity extends from smooth tests by density. It proves that the
cellwise Laplacian is the weak Neumann Laplacian, with all interface
terms canceled. The residual can therefore be evaluated in strong
form on cells and used in the weak comparison theorem.
\end{proof}

The state $u_n$ is bilinear. A conservative five-point discretization
uses half control volumes at edges and quarter volumes at corners.
With positive diagonal nodal-mass matrices $M_x,M_y$, its mass-weighted
matrix is
\[
 A_h=K_x\otimes M_y+M_x\otimes K_y
       +\operatorname{diag}\{m_i^xm_j^y(a+f_{ij})\}.
\]
The stiffness terms are positive semidefinite with nonpositive
off-diagonals. The reaction is positive definite, and the row sum
$m_i^xm_j^y(a+f_{ij})$ is positive. The discrete minimum argument in
Section~\ref{sec:reconstruction} therefore gives $A_h^{-1}\geq0$ and
uniqueness. CG uses a cosine-transform inverse of a constant-reaction
Neumann operator, with relative and absolute tolerances
$2\times10^{-13}$ and $10^{-15}$. States are clamped to $[u_{n-1},1]$,
and the residual is evaluated on these stored states.

The residual separation extends by replacing $p_{xx}$ by $\Delta p$.
For the stored bilinear algebraic-residual field $a_n$, define
$R_n=f_n-\LL_{t_n}p_n$ and $R_{x,n}=R_n-a_n$. Affine interpolation
commutes with $b\Delta p-ap$, giving exactly
\begin{equation}\label{eq:2d-residual-split}
 R(t)=\mathcal I_nR_x+\mathcal I_na+
       f(t)(1-p(t))-\mathcal I_n[f(1-p)](t).
\end{equation}
On a rectangle whose sixteen supporting coefficients are unchanged,
one may set $\lambda=(-R)_+$ and use $r=R_+$; on other rectangles
set $\lambda=0$ and use $r=R$. The positive-part inequality
\eqref{eq:positive-split} proves the same three-channel majorization
as in one dimension. The reported monotone experiment takes $\lambda=0$
everywhere and bounds the full residual, which is also complementary
where the reconstruction happens to be frozen.

For the piecewise affine clocks actually implemented in this experiment,
the interval residual has the exact form
\begin{equation}\label{eq:2d-loading-residual}
 R_{\tau,n}(\theta)=\theta(1-\theta)\Delta f_n\Delta p_n,
 \qquad |R_{\tau,n}|\leq\tfrac14|\Delta f_n\Delta p_n|.
\end{equation}
To see the cancellation explicitly, write
$f=f_{n-1}+\theta\Delta f_n$ and $p=p_{n-1}+\theta\Delta p_n$. Then
\begin{align}
 f(1-p)&=f_{n-1}(1-p_{n-1})-\theta^2\Delta f_n\Delta p_n\notag\\
 &\quad+\theta[\Delta f_n(1-p_{n-1})-f_{n-1}\Delta p_n],\notag\\
 \mathcal I_n[f(1-p)]&=f_{n-1}(1-p_{n-1})-\theta\Delta f_n\Delta p_n\notag\\
 &\quad+\theta[\Delta f_n(1-p_{n-1})-f_{n-1}\Delta p_n],\notag\\
 f(1-p)-\mathcal I_n[f(1-p)]&=(\theta-\theta^2)\Delta f_n\Delta p_n.
 \label{eq:tensor-loading-chain}
\end{align}
The maximum $\max_{[0,1]}\theta(1-\theta)=1/4$ gives the interval bound.
For higher-degree clocks, \eqref{snum:loading-residual} also retains
the deviation from affine interpolation, as used for the SEN amplitudes
in Section~\ref{sec:sen}.

Tensor Bernstein bounds enclose $R_{x,n}$, $a_n$,
$\Delta f_n\Delta p_n$, and $p_n-u_n$ on each rectangle. Spatial maxima
give $\rho_{j,n}$ in \eqref{eq:constant-barrier}, with $a_0=1$.
Equations~\eqref{eq:source-channel}--\eqref{eq:relative-certificate}
then give the source certificate. Section~\ref{sec:verified-arithmetic}
details the coefficient conversion, tensor contractions, and norm
integration used in this evaluation.

For monotone drives, Proposition~\ref{prop:2d-endpoint} identifies the
obstacle history with the stationary solution at each time. The reference
uses a separate fine tensor grid, Neumann cubic interpolation, and
continuous polynomial source integration. Its stationary residual gives
an enclosure at each time, as follows.

At a fixed time, write $v=p_{\refe}$ and
$r_{\refe}=f-\LL_tv$, with $|r_{\refe}|\leq\rho(t)$.
Set $\delta=\rho(t)/a$. For the exact stationary solution $z$,
\begin{align}
 \LL_t(v+\delta)-f&=-r_{\refe}+(a+f)\delta\geq0,\notag\\
 \LL_t(v-\delta)-f&=-r_{\refe}-(a+f)\delta\leq0.\label{eq:reference-barrier-chain}
\end{align}
Testing the first comparison with $(z-v-\delta)_+$ and the second
with $(v-\delta-z)_+$ gives zero for both positive parts by coercivity.
Thus $|z-v|\leq\delta$. The stationary comparison applies separately at
each time to the interpolatory reconstruction $v$. The bilinear reference
state interpolation lies in $[0,1]$; combining $|d-u_{\refe}|\leq\delta+|v-u_{\refe}|$ with
Theorem~\ref{thm:transfer} encloses its continuous accumulated source.
Production and reference share the interval-arithmetic primitives in
Section~\ref{sec:verified-arithmetic}.
\subsection{Verified auxiliary supersolutions}\label{sec:auxiliary-providers}
For spatially varying supersolutions, let
$\overline\rho_j^n|_K=\max_{m\leq n}\rho_{j,K,m}$ and
$\LL_0=-b\Delta+a$. The exact weak Neumann response
\begin{equation}\label{snum:elliptic-provider}
 \LL_0 z_{j,n}=\overline\rho_j^n
\end{equation}
is ordered by the negative-part tests. Set $z_{j,0}=0$ and $\overline\rho_j^0=0$. With
$\xi=(-z_{j,n})_+$ and $\eta=(z_{j,n-1}-z_{j,n})_+$,
\begin{align*}
 \|\xi\|_0^2
 &=-A_0(z_{j,n},\xi)=-(\overline\rho_j^n,\xi)\leq0,\\
 \|\eta\|_0^2
 &=-A_0(z_{j,n}-z_{j,n-1},\eta)
 =-(\overline\rho_j^n-\overline\rho_j^{n-1},\eta)\leq0,\\
 z_{j,n}&\geq z_{j,n-1}\geq0,\qquad
 \LL_tz_{j,n}=\overline\rho_j^n+fz_{j,n}\geq\rho_j(t).
\end{align*}
Section~\ref{sec:reconstruction} evaluates this response through the
one-dimensional Green kernel. In two dimensions we use a verified
auxiliary solve.

Choose a continuous bilinear $g_j$ whose vertex values are the maxima of the adjacent cell
prefix residuals. Convex interpolation gives $g_j\geq\overline\rho_j$
on each cell. Let $\LL_0 w_j=g_j$ weakly with Neumann data. Then
$w_j\geq0$ and $\LL_tw_j\geq\rho_j$ for all loading times covered by
the prefix. A computed Neumann spline $z_j$ supplies the fully computable
enclosure
\begin{equation}\label{snum:aux-correction}
 |w_j-z_j|\leq\frac{\|g_j-\LL_0z_j\|_\infty}{a}.
\end{equation}
To verify \eqref{snum:aux-correction}, set
$R_j=g_j-\LL_0z_j$, $h=w_j-z_j$, and
$\delta=\|R_j\|_\infty/a$. Then $\LL_0h=R_j$. For
$\eta_\pm=(\pm h-\delta)_+$, the weak equations give
\begin{align}
 0\leq\|\eta_\pm\|_0^2
 &=A_0(\pm h-\delta,\eta_\pm)
 =(\pm R_j-a\delta,\eta_\pm)\notag\\
 &\leq(\|R_j\|_\infty-a\delta)\int_\Omega\eta_\pm\dd x=0.
 \label{eq:auxiliary-correction-proof}
\end{align}
Thus $-\delta\leq h\leq\delta$. The corrected spline bounds enclose
the exact nonnegative supersolution $w_j$.

For a stagewise construction, let $E_0$ be a nonnegative spatial constant
dominating the preceding channel supersolution $e_{\mathrm{prev}}$ at
the stage entrance. Suppose $a+f\geq a_*>0$ throughout the stage and
$g\geq\rho$.
For a Neumann spline candidate $z$ set
\begin{equation}\label{snum:stage-correction}
 R=g+b\Delta z-a_*z,\qquad
 \delta=\max\{\|(R)_+\|_\infty/a_*,\ E_0-\inf_\Omega z,\ 0\},
 \quad v=z+\delta.
\end{equation}
The correction satisfies
\begin{align}
 v&\geq\inf_\Omega z+\delta\geq E_0\geq e_{\mathrm{prev}}\geq0,\notag\\
 \LL_tv
 &=(-b\Delta+a_*)v+(a+f-a_*)v\notag\\
 &=g-R+a_*\delta+(a+f-a_*)v
 \geq g-R+\|(R)_+\|_\infty\geq g\geq\rho.
 \label{eq:stage-inheritance-proof}
\end{align}
This proves stage inheritance and the supersolution inequality.
Section~\ref{sec:sen} takes $E_0$ from the first-stage constant prefix
radius.
\subsection{Verified evaluation from coefficients to the acceptance test}
\label{sec:verified-arithmetic}\label{app:2d-arithmetic}\label{app:arithmetic}

We evaluate the residual, source, and norm bounds using polynomial
enclosures and outward arithmetic.

\emph{Polynomial inclusion and exact integration.}
Map each spatial cell affinely to the unit interval or square.
For $B_i^m(s)=\binom mi s^i(1-s)^{m-i}$, positivity and partition of
unity give the convex-hull bound \cite{Farouki2012},
\begin{equation}\label{eq:bernstein-convex-chain}
 \min_i c_i\leq\sum_{i=0}^m c_iB_i^m(s)\leq\max_i c_i,
 \qquad 0\leq s\leq1.
\end{equation}
The tensor weights $B_i^m(s)B_j^n(t)$ have the same properties.
The product and integration identities are
\begin{align}
 B_i^mB_k^m
 &=\frac{\binom mi\binom mk}{\binom{2m}{i+k}}B_{i+k}^{2m},\notag\\
 \int_0^1B_\ell^{2m}(s)\dd s
 &=\binom{2m}{\ell}\frac{\ell!(2m-\ell)!}{(2m+1)!}
   =\frac1{2m+1},\notag\\
 G^m_{ik}:=\int_0^1 B_i^mB_k^m\dd s
 &=\frac{\binom mi\binom mk}{(2m+1)\binom{2m}{i+k}}.
 \label{eq:bernstein-mass-chain}
\end{align}
The middle line is the beta integral for nonnegative integer exponents.
Consequently, for $q|_K=\sum_{i,j}q_{ij}B_i^m(s)B_j^n(t)$,
\begin{align}
 \int_K q^2\dd x\dd y
 &=|K|\int_0^1\!\int_0^1
    \sum_{i,j,k,l}q_{ij}q_{kl}B_i^m(s)B_k^m(s)B_j^n(t)B_l^n(t)
       \dd s\dd t\notag\\
 &=|K|\sum_{i,j,k,l}q_{ij}q_{kl}G^m_{ik}G^n_{jl}.
 \label{eq:tensor-mass}
\end{align}
In one dimension, collecting the terms with $i+j=\ell$ in
Eq.~\eqref{eq:bernstein-mass-chain} gives the coefficients of
$q^2$, where $q=\sum_i c_iB_i^m$,
\[
 \beta_\ell=\sum_{i+j=\ell}
 \frac{\binom{m}{i}\binom{m}{j}}{\binom{2m}{\ell}}c_ic_j .
\]
The squared norm is
\begin{equation}\label{eq:poly-norm}
 \norm{q}_{L^2(\Omega)}^2
 =\sum_K\frac{|K|}{2m+1}\sum_{\ell=0}^{2m}\beta_{K,\ell}.
\end{equation}
Products, derivatives, restrictions, and degree elevations use outward
interval operations. Coefficient products retain their signs when
evaluating \eqref{eq:tensor-mass} and \eqref{eq:poly-norm}. A squared-norm
enclosure $[s_-,s_+]$ gives norm bounds by outward evaluation of
$[\sqrt{\max(0,s_-)},\sqrt{s_+}]$, including the lower bound $q_-$.
The midpoint of the enclosed raw polynomial coefficients is the
delivered field. Its coefficient radius is included in
$B_{\mathrm{fp},K}$.

The same representation controls variation between loading endpoints.
For a clock with Bernstein coefficients $c_0,\ldots,c_m$, differentiation
gives
$C'(s)=m\sum_{i=0}^{m-1}(c_{i+1}-c_i)B_i^{m-1}(s)$.
Apply the triangle inequality and integrate the nonnegative basis
functions, each with integral $1/m$. The result is
\[
 \int_0^1|C'(s)|\dd s\leq
 \sum_{i=0}^{m-1}|c_{i+1}-c_i|.
\]
Applied after restriction to $I_n$, this gives $V_{K,n}$ in
\eqref{eq:source-channel} and retains interior pulse variation even when
$c_m=c_0$.

\emph{Rational data and elementary operations.}
Dyadic knots and loading endpoints give rational knot-insertion,
power-to-Bernstein, and mass matrices. Each rational entry is enclosed
by comparing it with the exact rational value of its proposed binary64
approximation and widening the appropriate endpoint.

Basic interval operations move rounded endpoints to adjacent
representable values outward. The backend assumes IEEE-754 binary64
with gradual underflow and correctly rounded operations and square root.
Inputs and results must be finite, and every divisor interval must be
separated from zero.

\emph{Signed tensor contractions.}
Let $0\leq m_i^-\leq m_i\leq m_i^+$ and
$p_i^-\leq p_i\leq p_i^+$ for $i=1,\ldots,k$. Write
$z^+=\max(z,0)$ and $z^-=\min(z,0)$ in this paragraph.
The exact lower and upper endpoint expressions are
\begin{align}
 L&=\sum_i\{m_i^-(p_i^-)^++m_i^+(p_i^-)^-\},\notag\\
 U&=\sum_i\{m_i^+(p_i^+)^++m_i^-(p_i^+)^-\},\qquad
 L\leq\sum_i m_ip_i\leq U.\label{eq:signed-dot-endpoints}
\end{align}
The sign of the input determines the matrix endpoint in each product.
Let $u_{\rm fp}=2^{-53}$ be unit roundoff, $\epsilon=2u_{\rm fp}$, and let $\eta=2^{-1074}$
be the smallest positive subnormal. Each finite elementary result obeys
\begin{equation}\label{eq:rounding-model}
 \operatorname{fl}(z)=z(1+\delta)+\zeta,
 \qquad |\delta|\leq u_{\rm fp},\quad |\zeta|\leq\eta/2.
\end{equation}
A signed endpoint uses at most $N=4k+1$ multiply/add operations.
After a product is formed, the remaining operations in each reduction
are additions. Each local absolute underflow error therefore propagates
through the subsequent addition-rounding factors $1+\delta$. Expanding the computation graph gives, for either endpoint
$S\in\{L,U\}$ and its computed value $\widehat S$,
\begin{align}
 |\widehat S-S|
 &\leq\big[(1+u_{\rm fp})^N-1\big]H
       +\frac{N\eta}{2}(1+u_{\rm fp})^N\notag\\
 &\leq\frac{Nu_{\rm fp}}{1-Nu_{\rm fp}}H+\frac{N\eta}{2(1-Nu_{\rm fp})},
 \qquad H=\sum_i m_i^+\max(|p_i^-|,|p_i^+|).
 \label{eq:dot-error-expansion}
\end{align}
There is at most one nonzero signed part per input endpoint, so $H$
bounds the sum of absolute exact products. The second inequality uses
$(1+u_{\rm fp})^N\leq(1-Nu_{\rm fp})^{-1}$ for $Nu_{\rm fp}<1$. It follows, for example, by
comparing the binomial series termwise with the geometric series.

The implementation uses $s=8k+8$ under the condition $s\epsilon<1/2$.
Then $N<s/2$ and $Nu_{\rm fp}<s\epsilon/4<1/8$, giving
\begin{equation}\label{eq:dot-conservative-chain}
 \frac{Nu_{\rm fp}}{1-Nu_{\rm fp}}\leq\gamma_s:=\frac{s\epsilon}{1-s\epsilon}<1,
 \qquad \frac{N}{2(1-Nu_{\rm fp})}<\frac{2s}{7}<s.
\end{equation}
Equations~\eqref{eq:dot-error-expansion}--\eqref{eq:dot-conservative-chain}
therefore yield $|\widehat S-S|\leq\gamma_sH+s\eta$, including
gradual underflow. The nonnegative contraction computing $\widehat H$
has a smaller operation count and satisfies the same bound. Rearranging
its lower error inequality gives
\begin{align}
 \widehat H&\geq(1-\gamma_s)H-s\eta,\notag\\
 H&\leq H^+:=\frac{\widehat H+s\eta}{1-\gamma_s},\qquad
 |\widehat S-S|\leq M:=\gamma_sH^++s\eta.\label{eq:dot-margin-chain}
\end{align}
All quantities on the right are evaluated outward, using upper bounds
for $\gamma_s,s\eta$ and a positive lower bound for $1-\gamma_s$.
Subtracting $M$ from $\widehat L$ downward and adding $M$ to
$\widehat U$ upward gives an interval containing the exact contraction.
The bound is valid for any reduction ordering satisfying the stated
operation count.

\emph{Inclusion through the complete certificate.}
Nonnegative extraction, elevation and tensor multiplication maps preserve
coefficient inclusion by Eq.~\eqref{eq:signed-dot-endpoints} and its
rounding enclosure. Differentiation consists of interval differences
and exact degree factors, with the physical cell-width scaling retained.
The full loading polynomial, or the exact affine-clock
identity in Eq.~\eqref{eq:2d-loading-residual}, encloses interval interiors.
The endpoint defects in Eq.~\eqref{eq:defect} are accumulated as signed
interval coefficients. The mass identity, pairwise interval summation
over cells, and outward square root enclose their norms and the raw norm.
Finally, the numerator $C$ is rounded upward and $q_--C$ downward;
acceptance requires a positive denominator endpoint and an upward-rounded
ratio below the prescribed tolerance.
The one-dimensional Green realization uses the accumulation bound
\cite{Higham2002}
\begin{equation}
 \gamma_{2n}=\frac{2n\epsilon_{\rm mach}}{1-2n\epsilon_{\rm mach}},
\end{equation}

together with an absolute term for subnormal operations.
Here $n$ is the dot-product length. Using machine epsilon in
$\gamma_{2n}$ gives a conservative round-to-nearest accumulation bound.
The absolute term is based on at most $2n$ operations and the
smallest positive binary64 subnormal number $\eta$.

The preserved one-dimensional runs used the relative part of this bound.
Their separate audit propagates
$2n\eta/(1-\gamma_{2n})$ through the three channel barriers and
the source transfer. For $0\leq u\leq1$ the function
$g(E)=\min\{E,(1-u)E+E^2/2\}$ is at most 2-Lipschitz. The quadratic branch is active for $0\leq E\leq2u$,
where its derivative is $1-u+E\leq1+u\leq2$, and the linear branch
has derivative 1. The two branches agree at $E=2u$.
The exact loading weight is bounded by
$f(T)+\TV(f)\leq3\sum_jA_j$ for the specified clocks.
These facts give an explicit uniform absolute source margin.
The maximum margin is $2.94\times10^{-318}$ in the primary study
and $3.33\times10^{-318}$ in the additional cases.
The tabulated certificates include these margins.

\section{Adaptive Allocation of Numerical Accuracy}\label{sec:algorithm}
\begin{figure}[t]
\centering\includegraphics[width=.95\linewidth]{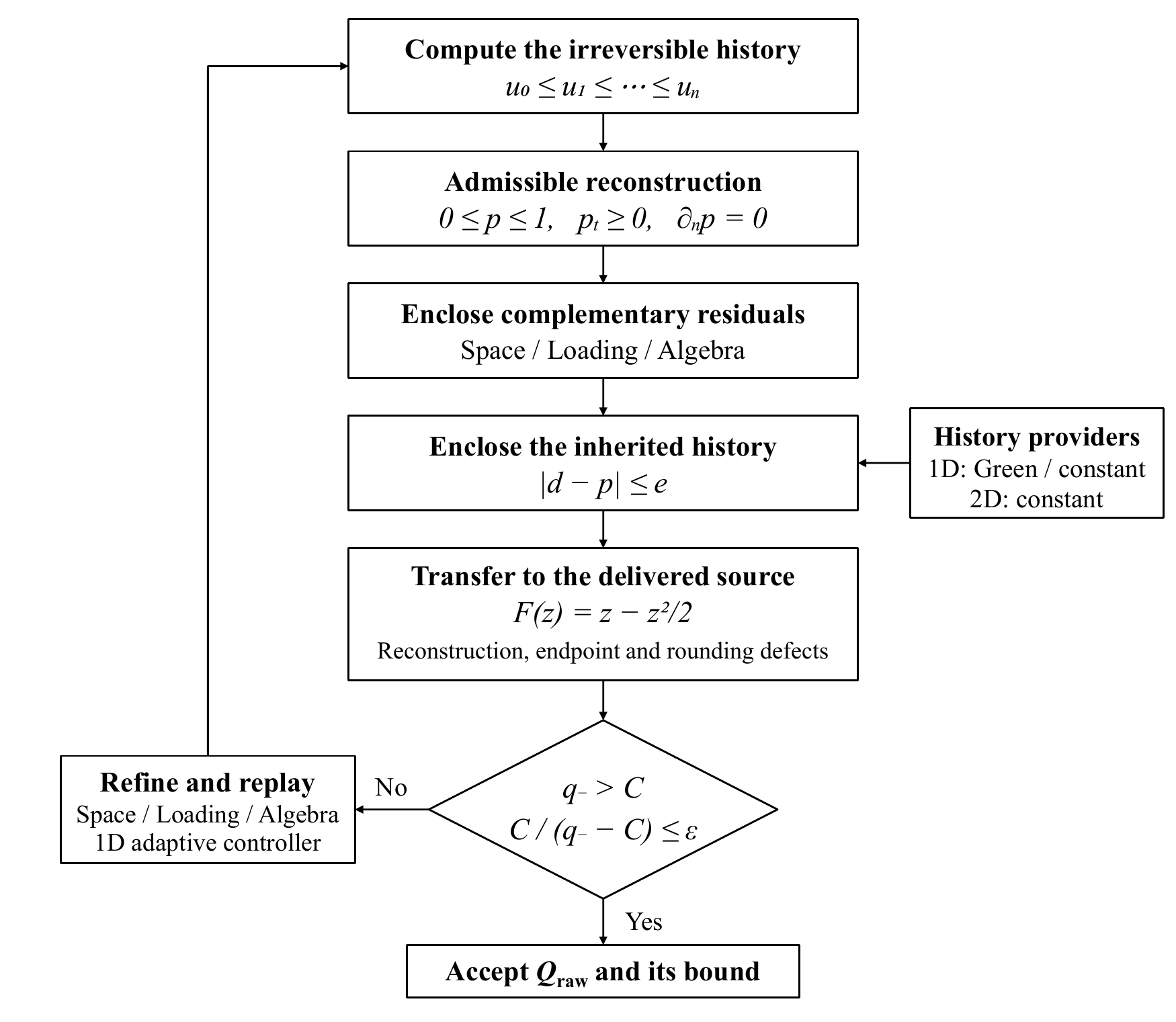}
\caption{Causal certification and complete-history adaptive replay.}
\label{fig:workflow}
\end{figure}

The absolute budget follows from \eqref{eq:relative-certificate} and
applies to both realizations. The adaptive controller uses the
one-dimensional Green provider. Set
\begin{equation}\label{eq:budget}
 \beta=\frac{\varepsilon}{1+\varepsilon}\,q_-.
\end{equation}
For $\varepsilon>0$ and $q_->C$, the acceptance test is
\begin{equation}\label{eq:budget-equivalence}
 \frac{C}{q_--C}\leq\varepsilon
 \quad\Longleftrightarrow\quad C\leq\beta.
\end{equation}
The balanced CG policy starts with a normalized projected-residual tolerance
$10^{-3}$. If $\norm{B_a}_{L^2}>0.1\beta$, it divides that tolerance by ten
and repeats the history. Otherwise, it chooses between the spatial norm
$\norm{B_x}_{L^2}$ and the loading norm
$\norm{B_\tau}_{L^2}+\norm{B_q}_{L^2}$.
The direct policy uses tight banded solves and the same spatial/loading
choice. A uniform baseline refines all cells or all loading intervals in
the selected channel, keeping the stopping certificate unchanged.

To localize refinement, let $\overline G_{KJ}$ majorize the response on cell
$K$ to a unit constant forcing on cell $J$, including integration over $J$.
With $W_K=f_K^T+\sum_nV_{K,n}$, define
\begin{equation}\label{eq:sensitivity}
 s_J=\left(\sum_K |K|(W_K\overline G_{KJ})^2\right)^{1/2}.
\end{equation}
A spatial score combines the residual and recovery terms
\[s_J\max_n\rho_{x,J,n}\quad\text{and}\quad
 |J|^{1/2}W_J\max_n r^{\mathrm{rec}}_{J,n}.\]
A loading score combines the computed $L^2$ norm of the local polynomial
defect $D_n$ with the residual score
\[\left(\sum_J(s_J\rho_{\tau,J,n})^2\right)^{1/2}.\]
These scores emphasize residuals that influence the accumulated source
through the positive elliptic response and the variation of the loading.
The largest scores are marked until at least 60\% of their sum is covered.
Marked cells or intervals are bisected. Neighboring spatial cells are
graded to a size ratio of at most two.

After each update, the controller replays the history from the exact
initial state on the new discretization and recomputes $q_-$ and
$\beta$. It continues from $q_-\leq C$ until the positive-denominator
and accuracy tests both hold. All work counts include every replay and
restricted solve. A residual-evaluation floor supplements the requested
algebraic tolerance, and the actual residual enters the certificate.

The algebraic budget follows inexact-solve balancing
\cite{Jiranek2010,Haberl2021}, and the 60\% selection follows bulk marking
\cite{Dorfler1996}. Figure~\ref{fig:workflow} and
Algorithm~\ref{alg:control} summarize the controller.

\begin{algorithm}[!htbp]
\caption{Certified allocation for the raw cumulative source}\label{alg:control}
\begin{algorithmic}[1]
\Require Prescribed $f$ on every loading interval, initial state, target
$\varepsilon$, initial spatial and loading partitions
\State Set the algebraic tolerance and initialize cumulative work counters.
\Loop
 \State Solve the complete irreversible history and add all work to the counters.
 \State Recover the nondecreasing spline history and compute complementary
 residual envelopes.
 \State Form prefix barriers, channel source bounds, the raw defect and
 outward rounding bounds.
 \State Compute $Q_{\raw}$, $C$, $q_-$ and $\beta=\varepsilon q_-/(1+\varepsilon)$.
 \If{$q_->C$ and $C\leq\beta$}
  \State \Return the unchanged $Q_{\raw}$, the cellwise bounds and the history.
 \ElsIf{CG is used and $\norm{B_a}_{L^2}>0.1\beta$}
  \State Reduce the algebraic tolerance by a factor of ten.
 \Else
  \State Mark the dominant spatial or loading channel and bisect.
  \State Apply neighboring-cell grading after spatial refinement.
 \EndIf
\EndLoop
\end{algorithmic}
\end{algorithm}

Reliability follows from the complete acceptance test. Estimator
stability, reduction, discrete reliability, and quasi-orthogonality
underpin convergence and complexity analysis \cite{Carstensen2014};
recent adaptive FEM results incorporate inexact solvers into optimal-cost
algorithms \cite{Bringmann2025}. These results identify the analytical
properties to investigate for the source-weighted marking scores.
The two-dimensional studies use nested uniform meshes.
\FloatBarrier

\section{Numerical Assessment}\label{sec:numerical}

The study tests the delivered-output bound along adaptive histories,
isolates the accuracy channels, and measures computational cost.
Acceptance uses the production certificate. Separately enclosed
references assess error and effectivity after computation, including
rejected coarse and intermediate histories.

\subsection{Test problems and source accuracy}
To evaluate the path-area mechanism in Fig.~\ref{fig:history}(a) exactly, take
\begin{equation}\label{eq:counterexample}
 K=\begin{pmatrix}5/4&-1/4\\-1/4&5/4\end{pmatrix},
 \qquad d_0=(1/5,1/5)^\top,\qquad d_T=(3/5,3/5)^\top.
\end{equation}
Increasing the first component and then the second gives
$q^{AB}=(9/50,7/50)^\top$. Reversing the order gives
$q^{BA}=(7/50,9/50)^\top$. Both paths are irreversible and have the same
terminal state and total source $8/25$. Nonnegative drives $f_i=(Kd)_i/(1-d_i)$ realize both paths as unique
stationary minima of the strictly convex energy.

The six problem classes are uniform monotone loading (U), a uniform pulse
(U$_p$), sequential localized loading (S), two local pulses (P), narrow
localization (N), and overlapping loading (O). All use $T=1$, $a=1$, and $\Omega=(0,1)$ with homogeneous
Neumann conditions and zero initial damage. The uniform histories have exact solutions; the localized cases introduce
spatial coupling and different loading variations.

Define the monotone clock $S(s)=s^3(10-15s+6s^2)$ on $[0,1]$, extended
by zero and one to the left and right, respectively. Define the pulse
$P(s)=64s^3(1-s)^3$ on $[0,1]$, extended by zero. The spatial shape
$\phi_{[\alpha,\beta]}(x)$ is the pulse polynomial with argument
$(x-\alpha)/(\beta-\alpha)$ and vanishes outside its support; the constant
shape is $\phi_1=1$. The prescribed drive is
\begin{equation}\label{eq:load-family}
 f(x,t)=\sum_j A_j\phi_j(x)
 C_j\!\left(\frac{t-\tau_j^-}{\tau_j^+-\tau_j^-}\right),
 \qquad C_j\in\{S,P\}.
\end{equation}
Table~\ref{tab:cases} specifies the amplitudes, supports, clock intervals,
and interaction lengths. Terms with the same case label are added,
including the monotone background in each localized drive.
Figures~\ref{fig:uniform} and \ref{fig:loads} show the resulting loading
families. These parameters remain fixed throughout refinement.

\begin{table}
\centering
\small
\caption{Prescribed driving fields. Each row defines one term in Eq.~\eqref{eq:load-family}. Spatial support 1 denotes the constant shape. $S$ is the monotone clock and $P$ is the pulse. All cases use $a=1$, $b=k^{-2}$, $T=1$, and zero initial damage.}
\label{tab:cases}
\begin{tabular}{lrrrrc}
\toprule
Case & $k$ & $A_j$ & Spatial support & Clock interval & Clock\\
\midrule
Uniform loading & 32 & 4 & 1 & [0,1] & S\\
Uniform pulse & 32 & 4 & 1 & [0,1] & P\\
Sequential loading & 32 & 0.125 & 1 & [0,1] & S\\
 &  & 8 & [0.25,0.5] & [0,0.5] & S\\
 &  & 8 & [0.5,0.75] & [0.5,1] & S\\
Two local pulses & 32 & 0.125 & 1 & [0,1] & S\\
 &  & 8 & [0.25,0.5] & [0,0.75] & P\\
 &  & 8 & [0.5,0.75] & [0.25,1] & P\\
Narrow localization & 64 & 0.125 & 1 & [0,1] & S\\
 &  & 16 & [0.3125,0.4375] & [0.125,0.375] & S\\
 &  & 12 & [0.5625,0.6875] & [0.625,0.875] & S\\
Overlapping loading & 16 & 0.125 & 1 & [0,1] & S\\
 &  & 12 & [0.25,0.625] & [0,0.75] & S\\
 &  & 8 & [0.375,0.75] & [0.25,1] & S\\
\bottomrule
\end{tabular}
\par\smallskip
\end{table}

\begin{figure}
\centering\includegraphics[width=\linewidth]{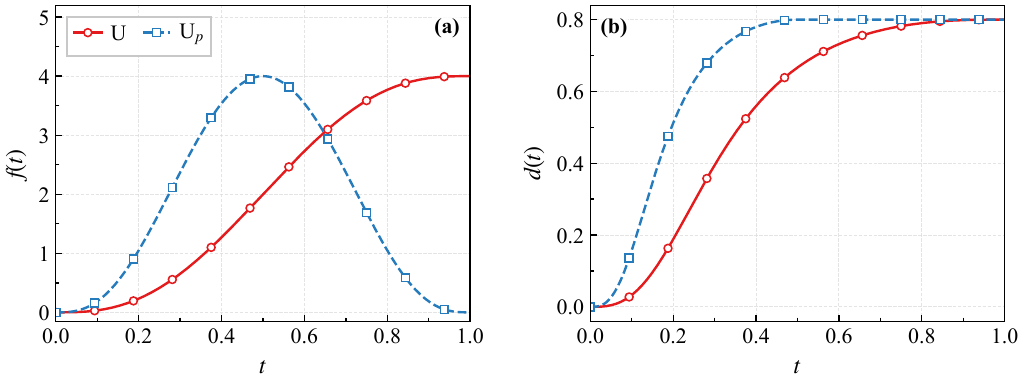}
\caption{Uniform loading histories: (a) prescribed drive; (b) irreversible damage.
U denotes monotone loading and U$_p$ a pulse with the same maximum drive.}\label{fig:uniform}
\end{figure}
\begin{figure}
\centering\includegraphics[width=\linewidth]{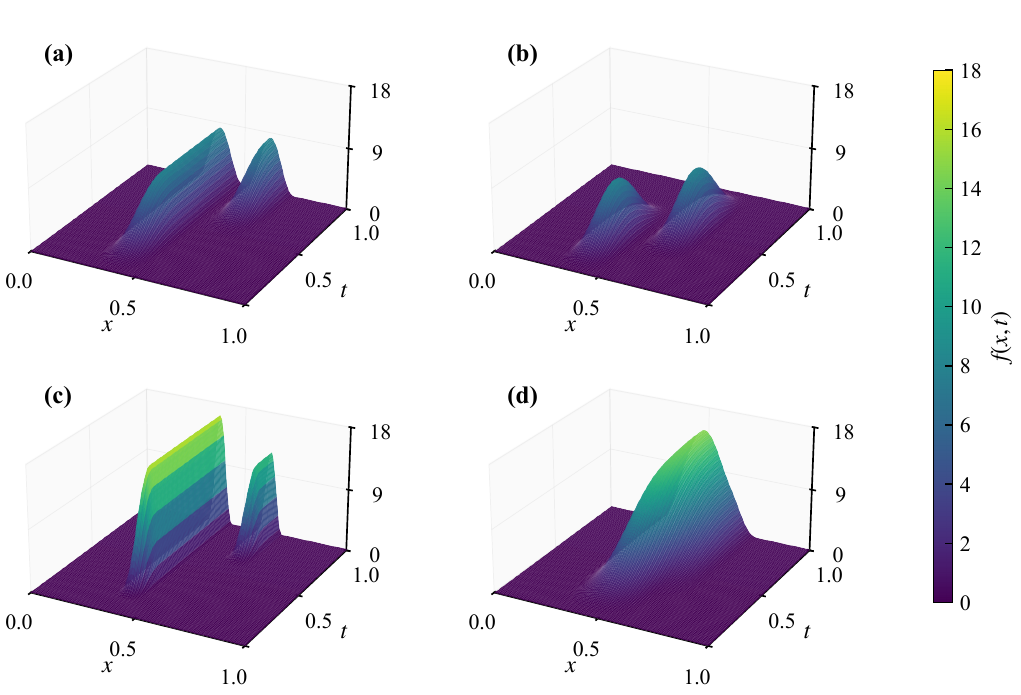}
\caption{Localized driving fields $f(x,t)$ from Eq.~\eqref{eq:load-family}:
(a) sequential loading S; (b) two pulses P; (c) narrow localization N;
(d) overlapping loading O. All panels share the physical scale.}\label{fig:loads}
\end{figure}

The primary study combines U, U$_p$, S and P with the targets
$\varepsilon=0.05$ and $0.01$ and three policies: source marking with
balanced CG, uniform refinement of the selected channel with tight CG, and
source marking with tight direct solves. These combinations give 24 runs.
The N and O problems use the same source controller and direct solver at
both targets, giving four additional runs with parameters fixed beforehand.

Initial partitions contain 16 spatial cells and eight loading intervals,
augmented by prescribed breakpoints and pulse turning points. A separate pulse control starts with its two endpoints alone.
The implementation allows at most 60 complete-history rounds, 2048 spatial
cells and 4096 loading intervals. All 28 runs attain their prescribed target
within these limits.

For a uniform drive, the exact state is
$d(t)=m(t)/(1+m(t))$, where $m(t)=\max_{s\leq t}f(s)$. Both uniform
histories attain $m(T)=4$. During damage growth, $f(1-d)=d$, giving
\begin{equation}\label{eq:exact-uniform}
 Q(x,T)=\tfrac12 d(T)^2=\frac{8}{25}.
\end{equation}
The monotone benchmark U also gives an exact identity for the endpoint
accumulation. Here $a=1$, $d_0=0$, and $d_n=d(t_n)$ denotes the exact
uniform endpoint state. During each positive damage increment,
$f_n(1-d_n)=d_n$; a zero increment contributes zero to either expression.
Writing $\Delta d_n=d_n-d_{n-1}$ and using
\[
 2d_n\Delta d_n=d_n^2-d_{n-1}^2+(\Delta d_n)^2,
\]
the exact-arithmetic rule \eqref{eq:raw} therefore gives
\begin{align}\label{snum:uniform-endpoint}
 \widetilde Q_{\raw}
 &=\sum_n f_n(1-d_n)\Delta d_n=\sum_n d_n\Delta d_n\notag\\
 &=\frac12d(T)^2+\frac12\sum_n(\Delta d_n)^2.
\end{align}
The excess over \eqref{eq:exact-uniform} is
$\tfrac12\sum_n(\Delta d_n)^2$. The identity also holds for U$_p$
when its maximum is a loading node, since every positive increment ends
at a running maximum and unloading contributes zero.

The localized references carry independently computed error bounds.
Monotone cases use a high-order boundary-value solver followed by polynomial
source integration and a residual enclosure. The absolute source radii are
$8.4970\times10^{-6}$ for S and its reversed loading order,
$8.2566\times10^{-6}$ for N, and $1.1395\times10^{-7}$ for O.
For P, a tightly solved P1 obstacle reference and an independent cubic
B-spline collocation obstacle reference have radii
$4.1782\times10^{-4}$ and $4.5435\times10^{-4}$, respectively.
Exact polynomial integration of the reconstructed histories gives centers
that differ by at most $1.637\times10^{-6}$ in $L^2$. Selecting the
smaller reference radius cellwise on a common refinement gives the
combined radius $3.4872\times10^{-4}$.

If $\underline\Delta$ and $\overline\Delta$ enclose
$\|Q_{\raw}-Q_{\refe}\|_{L^2}$, independent verification gives
\begin{equation}\label{eq:verification}
 E_-:=\max(0,\underline\Delta-C_{\refe})
 \leq\|Q_{\raw}-Q\|_{L^2}
 \leq\overline\Delta+C_{\refe}=:E_+.
\end{equation}
An enclosed lower bound for the exact source norm gives the verified
relative upper bound. Effectivity is enclosed by $[C/E_+,C/E_-]$ when
$E_->0$, with an unbounded upper endpoint when $E_-=0$.

The 25 fixed-parameter calculations comprise seven loading partitions
for each uniform case, five spatial meshes for S, and six CG tolerances
for S. Companion parameters are fixed within each scan, and every complete
bound covers the independently enclosed error. The loading scans use
$M=16$, direct solves, and successive doubling of $N$ from 8 to 512
(Fig.~\ref{fig:loading-scan}). The verified error upper bound decreases from
$6.9558\times10^{-2}$ to $1.1252\times10^{-3}$ in U and from
$1.1278\times10^{-1}$ to $2.0592\times10^{-3}$ in U$_p$.
The corresponding final certificates are $1.1444\times10^{-3}$ and
$2.1209\times10^{-3}$. At finer partitions the complete bound approaches the endpoint-rule
contribution, consistent with \eqref{snum:uniform-endpoint}.

\begin{figure}
\centering\includegraphics[width=\linewidth]{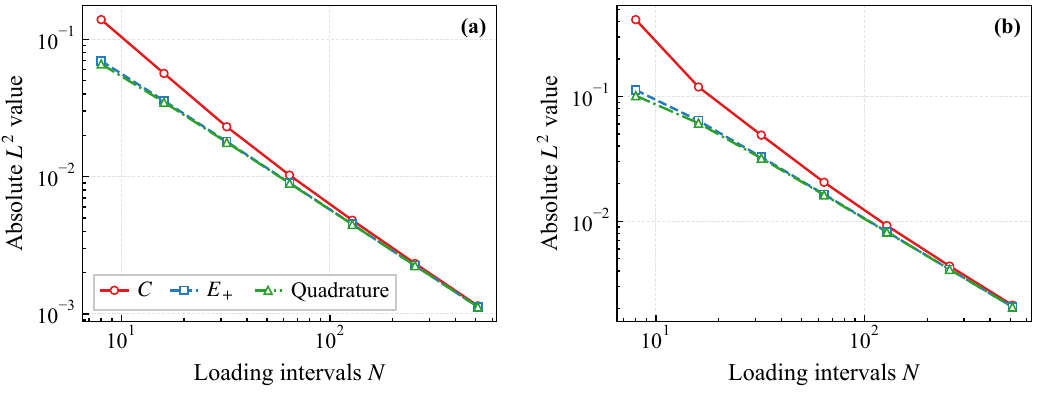}
\caption{Loading refinement with $M=16$ and direct solves: (a) U; (b) U$_p$.
$C$ and $E_+$ denote the complete bound and verified error upper endpoint;
Quadrature denotes the endpoint-rule contribution.}\label{fig:loading-scan}
\end{figure}

For spatial refinement in S at $N=512$, increasing $M$ from 32 to 512
reduces the complete bound from $1.6981\times10^{-1}$ to
$1.8330\times10^{-3}$ (Fig.~\ref{fig:space-solver-scan}a).
The spatial contribution continues to fall as the verified error approaches
the loading floor near $1.52\times10^{-3}$. The algebraic scan retains
$M=128$, $N=256$ and decreases the requested CG tolerance from $10^{-1}$
to $10^{-9}$ (Fig.~\ref{fig:space-solver-scan}b). The complete bound
decreases from $0.23086$ to $0.0090435$, with a marked reduction in the
algebraic contribution. Below approximately $10^{-4}$ the verified source
error remains near $0.00318$. Small nonmonotone error changes reflect signed cancellation.
The spatial and algebraic scans identify the loading floor beyond which
further refinement of the selected channel gives little benefit.

\begin{figure}
\centering\includegraphics[width=\linewidth]{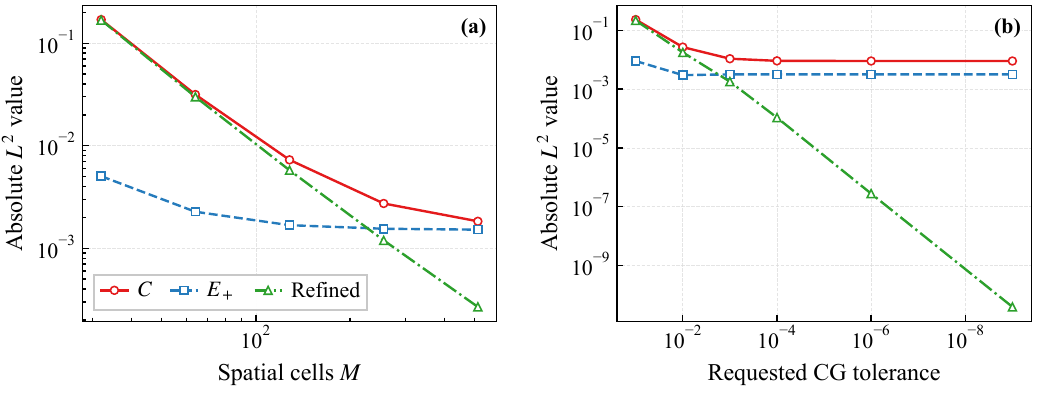}
\caption{Sequential loading S: (a) spatial refinement at $N=512$;
(b) CG tolerance refinement at $M=128$, $N=256$.
Refined denotes $\|B_x\|_2$ in (a) and $\|B_a\|_2$ in (b).}\label{fig:space-solver-scan}
\end{figure}

Table~\ref{tab:accuracy} reports the initial source-error intervals and
the results at both requested targets for all six direct-solver cases.
The initial errors range from approximately 21.74\% in U to 158.78\% in
N. Each is measured relative to the norm of the same continuous source
used to verify its final result. At the 5\% stopping target, every final
verified upper bound lies below 5\%; tightening the target to 1\% gives
certificates between 0.773\% and 0.979\% and verified upper bounds between
0.459\% and 0.913\%.
\begin{table}
\centering
\caption{Original source-error reduction at the 5\% and 1\% targets with source marking and direct solves. All entries are percentages.}
\label{tab:accuracy}
\small
\begin{tabular}{llrrrrr}
\toprule
Case & Initial error interval & $C_{5\%}^{\rm rel}$ & $E_{5\%,+}^{\rm rel}$ & $C_{1\%}^{\rm rel}$ & $E_{1\%,+}^{\rm rel}$ & Reduction\\
\midrule
U & $[21.736,21.737]$ & 4.346 & 3.811 & 0.954 & 0.913 & 82.469\\
U$_p$ & $[35.244,35.245]$ & 3.462 & 3.135 & 0.773 & 0.750 & 91.107\\
S & $[44.819,44.831]$ & 4.857 & 2.111 & 0.875 & 0.460 & 95.290\\
P & $[54.419,54.911]$ & 4.875 & 1.969 & 0.979 & 0.606 & 96.382\\
N & $[158.767,158.792]$ & 4.447 & 2.125 & 0.829 & 0.481 & 98.661\\
O & $[31.202,31.203]$ & 4.396 & 1.992 & 0.823 & 0.488 & 93.618\\
\bottomrule
\end{tabular}
\par\smallskip\begin{minipage}{.97\linewidth}\footnotesize $C^{\rm rel}=100C/(q_--C)$ is the production certificate and $E_+^{\rm rel}$ the verified error upper bound. Reduction uses the initial error lower endpoint and the final upper endpoint at 5\%. All bounds are rounded outward.\end{minipage}
\end{table}

The narrow-localization case gives the most pronounced transition.
Its initial 16-cell, eight-interval calculation has a relative error
enclosed in $[158.767,158.792]\%$. After 21 refinement decisions, the
controller accepts 187 cells and 170 loading intervals with a certificate
below 4.447\%. Independent verification bounds the actual relative error
by 2.125\%, a reduction exceeding 98.66\% from the initial error.
Refinement to the 1\% target subsequently gives a certificate below
0.829\% and a verified error below 0.481\%. The large initial error reflects underresolution of the concentrated
source. The load, constitutive parameters, and endpoint rule remain fixed.

Figure~\ref{fig:convergence} tracks the verified error through all
refinement rounds, including initial states with $q_-\leq C$.
The accepted certificates are marked at the final round.

\begin{figure}
\centering\includegraphics[width=\linewidth]{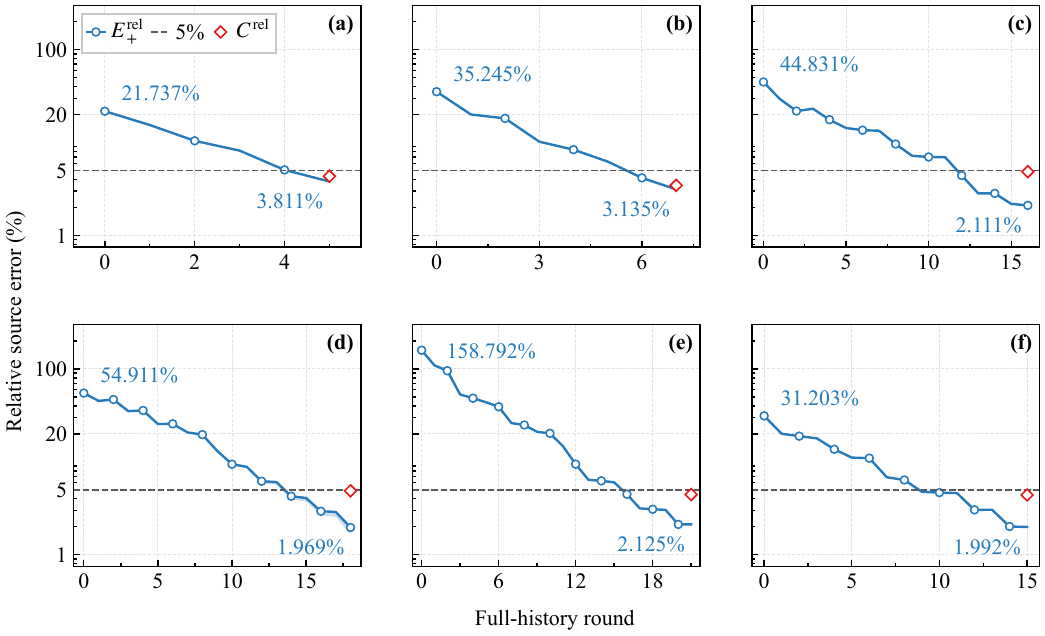}
\caption{Verified source-error reduction at the 5\% target: (a) U; (b) U$_p$;
(c) S; (d) P; (e) N; (f) O. Bands retain reference uncertainty;
diamonds mark accepted certificates. In (e), the error falls from
approximately 158.78\% to below 2.125\%.}
\label{fig:convergence}
\end{figure}

\subsection{Localized history, source distribution and spatial enclosure}
Figure~\ref{fig:damage} shows the accepted histories. Sequential loading
activates neighboring regions in turn; pulses retain damage during
unloading; narrow and overlapping loads change the spatial concentration
of growth. These differences affect the complementary residual and
loading-variation weights in \eqref{eq:source-channel}.

\begin{figure}
\centering\includegraphics[width=\linewidth]{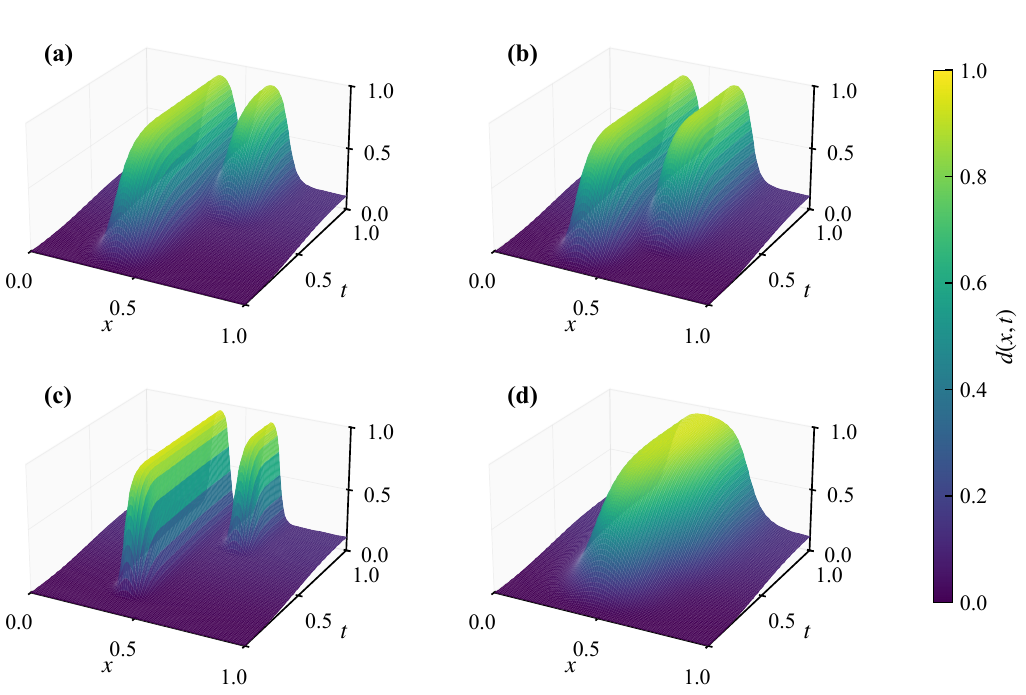}
\caption{Accepted damage histories at the 1\% direct-solver target:
(a) S; (b) P; (c) N; (d) O. All panels share the range $0\leq d\leq1$.}
\label{fig:damage}
\end{figure}

The accumulated fields agree closely with their enclosed references
(Fig.~\ref{fig:sources}). Insets resolve the remaining differences near
source maxima. The two-pulse calculation uses 474 cells and 796 loading
intervals, with certificate below 0.979\% and verified relative error
below 0.606\%.

\begin{figure}
\centering\includegraphics[width=\linewidth]{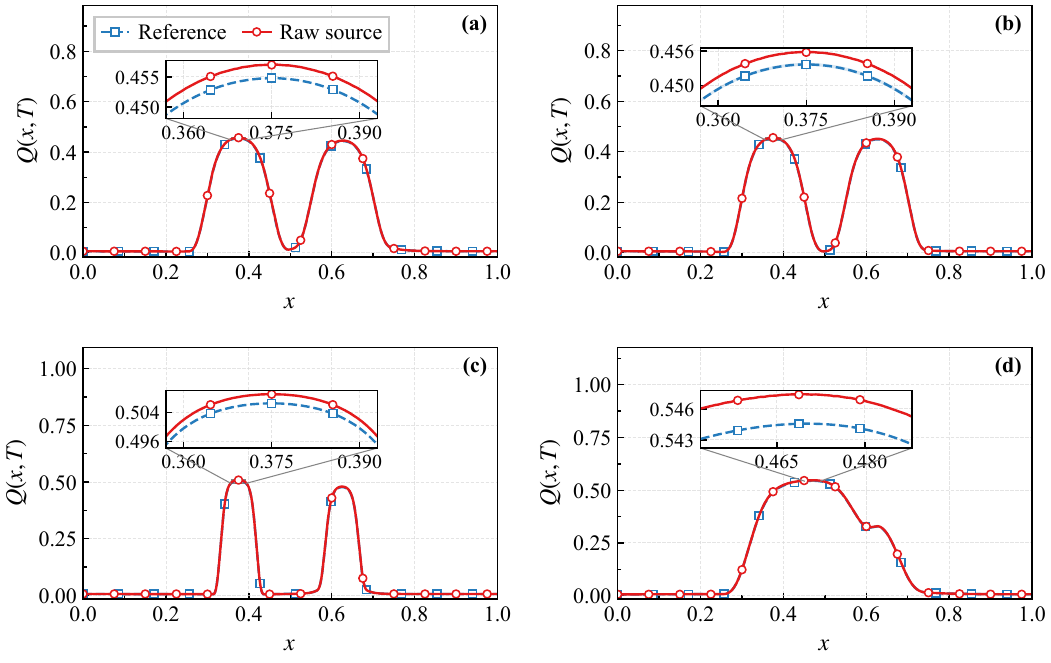}
\caption{Raw source and independently enclosed reference at the 1\% direct-solver
target: (a) S; (b) P; (c) N; (d) O. Insets enlarge the reference-peak region;
bands show reference uncertainty.}
\label{fig:sources}
\end{figure}

The pointwise production bounds contain the source differences and
reference uncertainty in Fig.~\ref{fig:enclosures}. Their wider regions
follow localized growth and source sensitivity. In the pulse case,
two separated error regions persist after unloading, while the
complementary residual accounts for the reaction on frozen cells.

\begin{figure}
\centering\includegraphics[width=\linewidth]{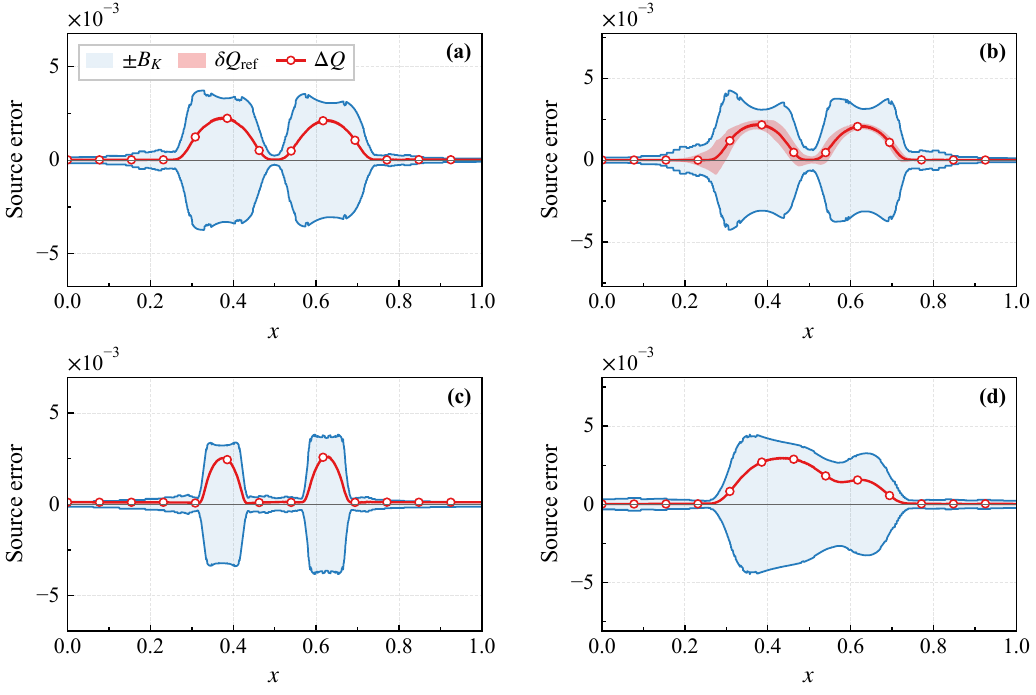}
\caption{Pointwise source-error enclosures: (a) S; (b) P; (c) N; (d) O.
$\Delta Q=Q_{\raw}-Q_{\refe}$ is shown with reference radius
$\delta Q_{\refe}$ and production bounds $\pm B_K$.}
\label{fig:enclosures}
\end{figure}

The saved meshes in Fig.~\ref{fig:resolution} show repeated refinement
around localized source contributions. Case N grows from 187 cells and
170 loading intervals at 5\% to 663 cells and 704 intervals at 1\%,
reducing the verified error bound from 2.125\% to 0.481\%.
Case P reaches refinement level 11, compared with 7 in S and O and
8 in N. These local extremes distinguish the resolution demands of
separated pulses, narrow localization, and overlapping loads.

\begin{figure}
\centering\includegraphics[width=\linewidth]{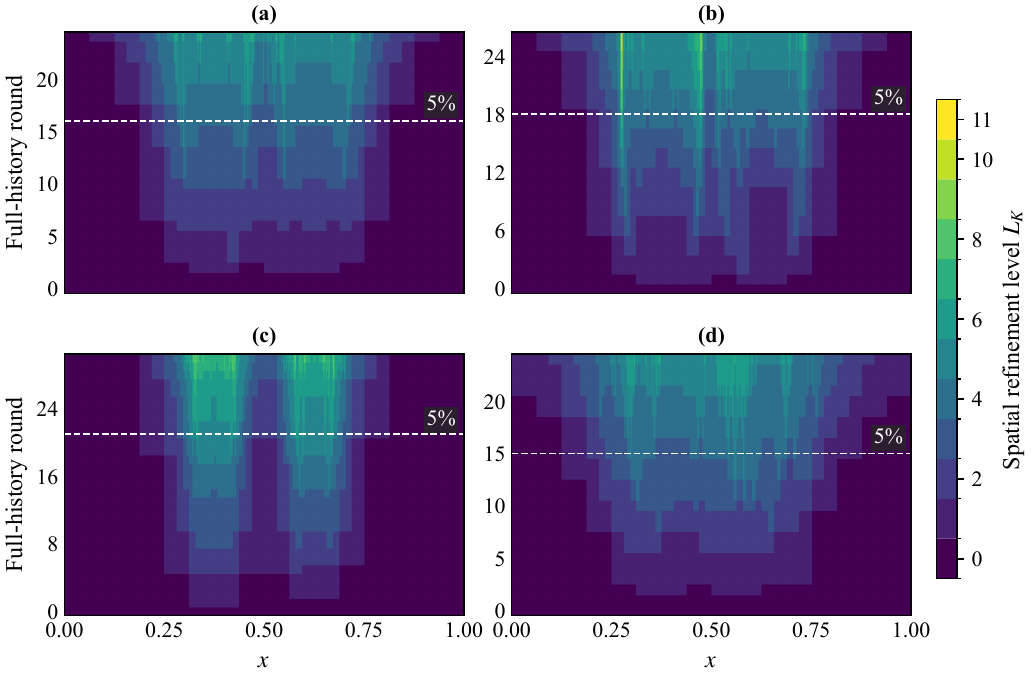}
\caption{Spatial refinement histories: (a) S; (b) P; (c) N; (d) O.
Colors denote $L_K=\log_2[(1/16)/h_K]$ on the saved meshes.
Dashed lines mark 5\% acceptance; the last strips give the 1\% meshes.}
\label{fig:resolution}
\end{figure}

The paired loading orders of Fig.~\ref{fig:history} connect these field
results to the motivation of the method. Their identical final drive and
the uniqueness of the monotone stationary problem give exactly the same
terminal damage. Their cumulative sources have an enclosed
$L^2$ difference in $[0.0108282522,0.0108622403]$. Normalized by the source norm for the original order, the difference is
between 4.9157\% and 4.9317\%. This redistribution is comparable to the
5\% accuracy target despite identical terminal damage.

\subsection{Mechanism controls, accuracy allocation and computational cost}
The controls in Table~\ref{tab:controls} examine the ingredients required
by the complete certificate. On a coarse sequential-loading mesh, omitting
the spatial contribution leaves an absolute expression
$8.2522\times10^{-3}$, while the source error is enclosed near
$2.901\times10^{-2}$. Relative to the enclosed reference norm, these
values correspond to approximately 3.75\% and at least 13.17\%.
The algebraic omission control retains the zero initial state at every
endpoint; the source error is $0.32$, while the incomplete expression is
$5.7781\times10^{-3}$. A two-endpoint uniform calculation with the loading
term omitted gives the same error and an incomplete expression near
$10^{-12}$. The complete bounds cover and reject all these underresolved histories.
\begin{table}
\centering
\caption{Complete and incomplete source bounds in the channel-omission controls.}
\label{tab:controls}
\small
\setlength{\tabcolsep}{3pt}
\begin{tabular}{p{.27\linewidth}lll}
\toprule
Control & Complete & Incomplete & Verified error interval\\
\midrule
Spatial term omitted & $1.9626\times10^{1}$ & $8.2522\times10^{-3}$ & $[0.0290008,0.0290179]$\\
Algebraic term omitted & $5.7101\times10^{0}$ & $5.7781\times10^{-3}$ & $[0.319999,0.320001]$\\
Loading term omitted & $1.2846\times10^{1}$ & $1.0277\times10^{-12}$ & $[0.319999,0.320001]$\\
Two-endpoint pulse check & $1.8725\times10^{2}$ & $1.2574\times10^{-161}$ & $[0.319999,0.320001]$\\
\bottomrule
\end{tabular}
\par\smallskip\begin{minipage}{.97\linewidth}\footnotesize Intervals are rounded outward. The three uniform errors equal $0.32$ analytically. Figure~\ref{fig:blind} shows subsequent refinement of the two-endpoint pulse calculation.\end{minipage}
\end{table}

The pulse control makes the role of complete loading intervals especially
clear. Its initial partition consists of $t=0$ and $t=1$, where the prescribed
drive vanishes, giving a zero initial raw source. The known polynomial
pulse has an interior maximum and exact cumulative source $0.32$.
Full-interval residual and variation bounds give the large initial
certificate in Table~\ref{tab:controls} and select loading refinement.
Figure~\ref{fig:blind} compares this adaptation with the standard start
that already includes the turning point. Both attain the same 0.773\%
certificate with 410 loading intervals and the original 16 spatial cells.
Full-interval information therefore recovers the initially unresolved pulse.

\begin{figure}
\centering\includegraphics[width=.82\linewidth]{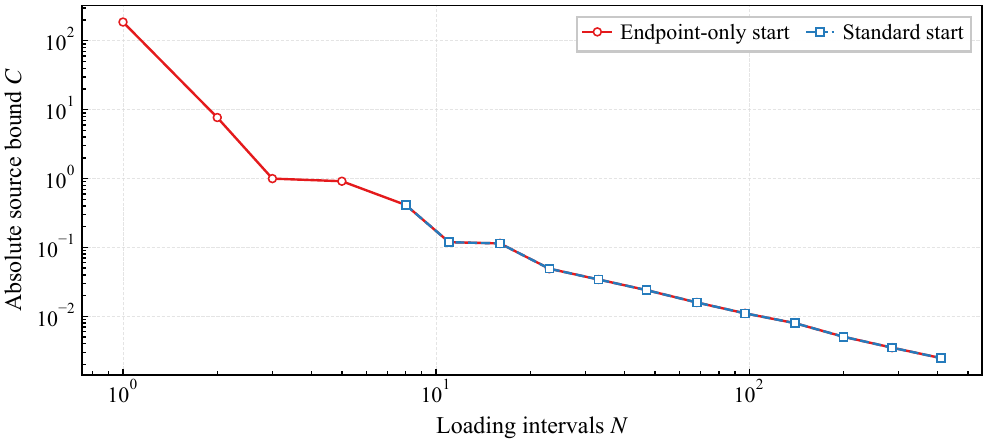}
\caption{Pulse recovery with $M=16$: two-endpoint and breakpoint-resolving
initial partitions. Both histories reach $N=410$ and a 0.773\% relative certificate.}
\label{fig:blind}
\end{figure}

Two controls isolate the history enclosure. Ordinary cubic interpolation
of a nonnegative nodal damage increment attains an interior value
$-0.068276736$ in the reconstruction example. The coefficient-order
constraint in Eq.~\eqref{eq:coeff-limit} ensures a nonnegative reconstructed
increment throughout the cell and supplies the admissibility required by
Theorem~\ref{thm:history}. On the same accepted two-pulse history, the
fixed full-history barrier gives a source bound of $0.0167321$ and the
causal prefix barrier gives $0.0104256$, a reduction of 37.7\%.
On these identical states, the prefix construction avoids enlarging
earlier source contributions with residuals arising later.

In the balanced 1\% runs for S and P, spatial and loading bounds
initially dominate (Figs.~\ref{fig:allocation} and
\ref{fig:partition-history}). Refinement reduces them until the
algebraic bound exceeds its budget fraction, triggering a tenfold
tolerance reduction on the same partitions. Spatial and loading
refinement then resume.

\begin{figure}
\centering\includegraphics[width=\linewidth]{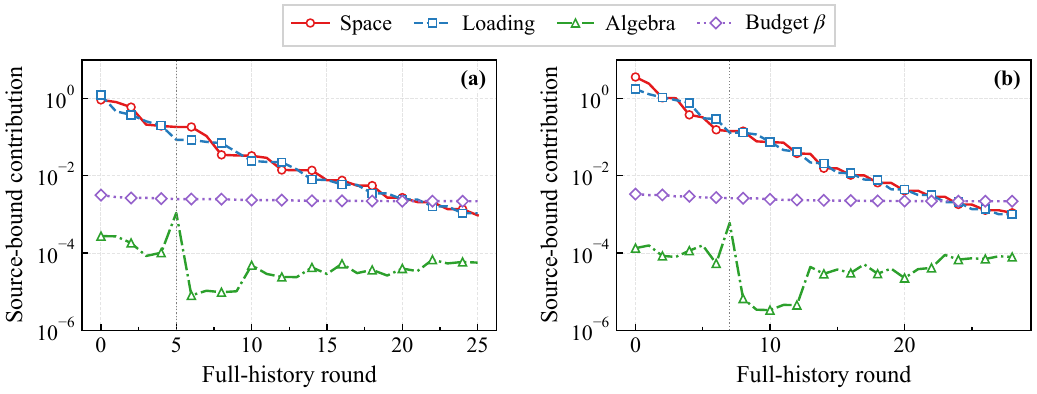}
\caption{Source-bound contributions and admissible budget in the balanced-CG
1\% runs: (a) S; (b) P. Loading denotes $\|B_\tau\|_2+\|B_q\|_2$;
vertical dotted lines mark solver-tolerance reductions.}\label{fig:allocation}
\end{figure}
\begin{figure}
\centering\includegraphics[width=\linewidth]{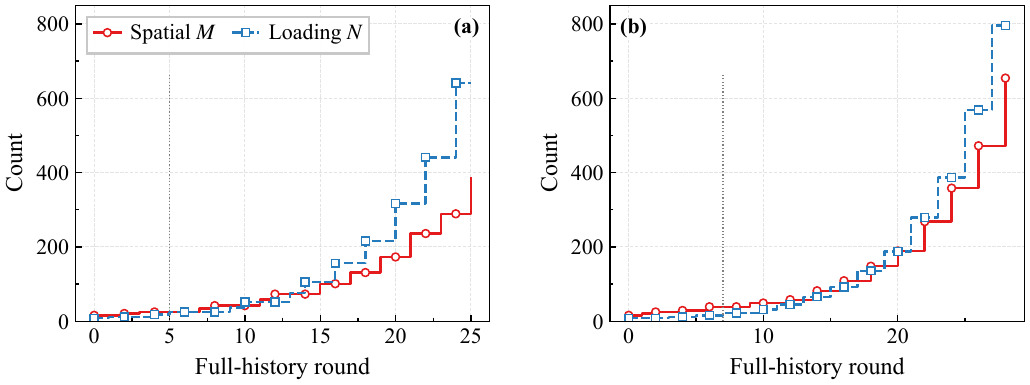}
\caption{Spatial and loading partitions in the balanced-CG 1\% runs:
(a) S; (b) P. Vertical dotted lines mark solver-tolerance reductions,
which retain both partitions.}\label{fig:partition-history}
\end{figure}

Table~\ref{tab:costs} quantifies the resulting allocation at the common
5\% accuracy target. The damage-variation baseline scores cell
$J=[x_j,x_{j+1}]$ by
$\max_n|u_n(x_{j+1})-u_n(x_j)|$, with the same bulk fraction as source
marking. Source marking reduces cumulative matrix--vector node work from
25.71 to 16.48 million for S and from 40.12 to 19.29 million for P,
corresponding to reductions of 35.9\% and 51.9\%, respectively. The final spatial counts
also decrease, from 151 to 131 in S and from 227 to 148 in P. Work sums the participating node counts over all matrix--vector products,
including replays and restricted solves.
\begin{table}
\centering
\caption{Computational cost at the common 5\% target, including all history replays.}
\label{tab:costs}
\small
\begin{tabular}{llrrrr}
\toprule
Case & Marking / solver accuracy & $M/N$ & Work ($10^6$) & Solves & Time (s)\\
\midrule
S & Source / balanced & 131 / 216 & 16.48 & 6426 & 20.66\\
S & Uniform / tight & 128 / 256 & 4.51 & 616 & 10.97\\
S & Damage variation / balanced & 151 / 218 & 25.71 & 5837 & 20.62\\
S & Source / tight & 131 / 156 & 5.81 & 930 & 12.28\\
P & Source / balanced & 148 / 188 & 19.29 & 4469 & 20.83\\
P & Uniform / tight & 256 / 256 & 36.86 & 1680 & 22.26\\
P & Damage variation / balanced & 227 / 194 & 40.12 & 5548 & 29.69\\
P & Source / tight & 148 / 188 & 23.18 & 3255 & 19.66\\
\bottomrule
\end{tabular}
\par\smallskip\begin{minipage}{.97\linewidth}\footnotesize Work sums the participating node counts of all matrix--vector products. Solves counts all restricted linear systems. Times include solution, reconstruction and certification in a shared execution environment. The complete stopping criterion is identical across policies.\end{minipage}
\end{table}

The solver comparison resolves a second allocation effect. For P, balanced
accuracy reduces matrix--vector node work from 23.18 to 19.29 million
relative to tight CG on the same final spatial and loading partitions,
a reduction of 16.8\%. For S, uniform refinement and tight CG require
less cumulative work than the balanced policy. The time column retains
the costs of reconstruction and certification as well as solution, which
explains why the ranking by matrix--vector work can differ from the ranking
by total elapsed time. In the P calculation at 1\%, direct solves give
237.6 seconds including refinement and replay, compared with 620.6 seconds
for balanced CG.

Figure~\ref{fig:coverage} checks all 417 saved histories, comprising
323 primary and 94 additional-case records. For every coarse,
intermediate, and accepted history, the independently enclosed error
upper endpoint lies below the production bound. Coverage therefore
holds throughout all 28 adaptive trajectories. These deterministic,
correlated records complement the 25 fixed-parameter calculations and
the omission controls.

\begin{figure}
\centering\includegraphics[width=0.7\linewidth]{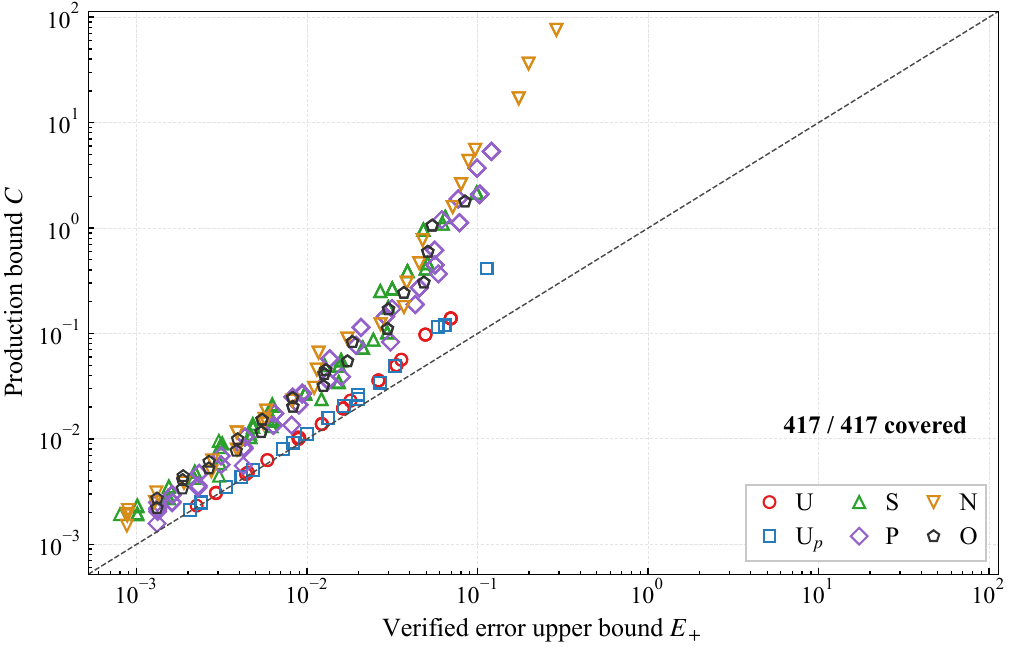}
\caption{Reliability across all 417 saved adaptive histories.
The production bound $C$ covers the independently enclosed error upper endpoint
$E_+$ for every record; the dashed line is $C=E_+$.}\label{fig:coverage}
\end{figure}

Figure~\ref{fig:policy-convergence} compares finite production
certificates against representation size $MN$. Uniform cases retain
their spatial mesh and refine loading; localized cases refine both.
Decreases at unchanged $MN$ identify tighter algebraic solves.
Uniform, Balanced, and Direct denote the three policies specified in
Section~\ref{sec:numerical}, all of which reach the 1\% target.

\begin{figure}[!htbp]
\centering\includegraphics[width=0.9\linewidth]{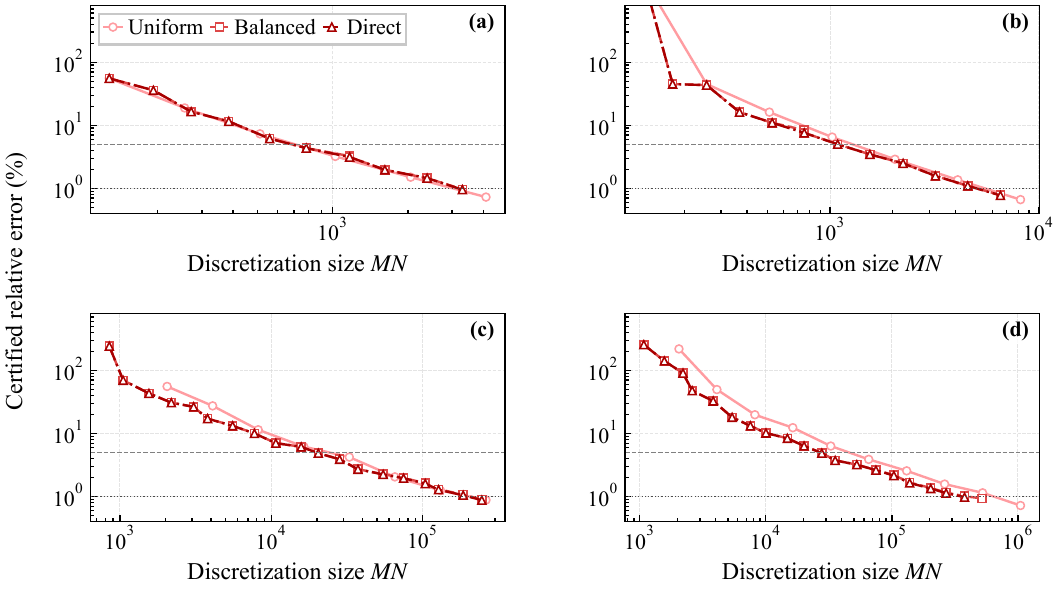}
\caption{Production certificates at the 1\% target: (a) U; (b) U$_p$; (c) S;
(d) P. Uniform, Balanced and Direct identify the three policies described
in the text; horizontal lines mark 1\% and 5\%.}
\label{fig:policy-convergence}
\end{figure}

\label{sec:discussion}

\label{app:results}
Table~\ref{tab:all} gives the accepted resolutions, certificates,
verified errors, and effectivity intervals for all 28 calculations.
Localized reference uncertainty widens the effectivity intervals;
the analytical uniform cases give nearly point-valued intervals.
Here $M$ and $N$ count spatial cells and loading intervals.
\begingroup
\footnotesize
\begin{table}[!htbp]
\centering
\caption{Accepted results for all 28 calculations. Target, Certified, and Verified are percentages; the latter two are upper bounds rounded upward. Effectivity encloses $C/\|Q_{\raw}-Q\|_2$, with endpoints rounded outward.}
\label{tab:all}
\footnotesize
\setlength{\tabcolsep}{3pt}
\begin{tabular}{llrrrrrl}
\toprule
Case & Policy & Target & $M$ & $N$ & Certified & Verified & Effectivity\\
\midrule
S & Source / balanced CG & 1 & 384 & 641 & 0.900 & 0.459 & [1.952, 1.986]\\
S & Source / balanced CG & 5 & 131 & 216 & 4.492 & 1.380 & [3.159, 3.178]\\
S & Source / direct & 1 & 384 & 641 & 0.875 & 0.460 & [1.897, 1.931]\\
S & Source / direct & 5 & 131 & 156 & 4.857 & 2.111 & [2.240, 2.249]\\
S & Uniform / tight CG & 1 & 256 & 1024 & 0.882 & 0.364 & [2.412, 2.465]\\
S & Uniform / tight CG & 5 & 128 & 256 & 4.219 & 1.443 & [2.846, 2.863]\\
P & Source / balanced CG & 1 & 654 & 796 & 0.934 & 0.606 & [1.537, 3.239]\\
P & Source / balanced CG & 5 & 148 & 188 & 4.888 & 1.969 & [2.413, 2.879]\\
P & Source / direct & 1 & 474 & 796 & 0.979 & 0.606 & [1.610, 3.391]\\
P & Source / direct & 5 & 148 & 188 & 4.875 & 1.969 & [2.407, 2.872]\\
P & Uniform / tight CG & 1 & 1024 & 1024 & 0.721 & 0.601 & [1.198, 2.549]\\
P & Uniform / tight CG & 5 & 256 & 256 & 3.883 & 1.938 & [1.965, 2.352]\\
U & Source / balanced CG & 1 & 16 & 206 & 0.968 & 0.912 & [1.061, 1.062]\\
U & Source / balanced CG & 5 & 16 & 49 & 4.346 & 3.811 & [1.134, 1.135]\\
U & Source / direct & 1 & 16 & 206 & 0.954 & 0.913 & [1.044, 1.045]\\
U & Source / direct & 5 & 16 & 49 & 4.346 & 3.811 & [1.134, 1.135]\\
U & Uniform / tight CG & 1 & 16 & 256 & 0.728 & 0.704 & [1.034, 1.035]\\
U & Uniform / tight CG & 5 & 16 & 64 & 3.214 & 2.812 & [1.138, 1.139]\\
U$_p$ & Source / balanced CG & 1 & 16 & 410 & 0.785 & 0.748 & [1.047, 1.048]\\
U$_p$ & Source / balanced CG & 5 & 16 & 97 & 3.462 & 3.135 & [1.101, 1.102]\\
U$_p$ & Source / direct & 1 & 16 & 410 & 0.773 & 0.750 & [1.030, 1.031]\\
U$_p$ & Source / direct & 5 & 16 & 97 & 3.462 & 3.135 & [1.101, 1.102]\\
U$_p$ & Uniform / tight CG & 1 & 16 & 512 & 0.663 & 0.644 & [1.029, 1.030]\\
U$_p$ & Uniform / tight CG & 5 & 16 & 128 & 2.895 & 2.573 & [1.121, 1.122]\\
N & Source / direct & 1 & 663 & 704 & 0.829 & 0.481 & [1.718, 1.751]\\
N & Source / direct & 5 & 187 & 170 & 4.447 & 2.125 & [2.045, 2.055]\\
O & Source / direct & 1 & 387 & 400 & 0.823 & 0.488 & [1.681, 1.682]\\
O & Source / direct & 5 & 121 & 96 & 4.396 & 1.992 & [2.156, 2.157]\\
\bottomrule
\end{tabular}
\par\smallskip
\end{table}

\endgroup

\subsection{Two-dimensional loading-order mechanism}\label{sec:2d-mechanism}

The two-dimensional experiment measures loading-order redistribution of
the cumulative source for histories with common terminal damage and global work.
It uses $\Omega=(0,1)^2$, $a=1$, $b=1/64$, $T=2$, zero initial damage
and homogeneous Neumann conditions. Two nonnegative loading regions are
defined by
\begin{equation}\label{eq:2d-shapes}
 f_A(x,y)=512x^2(1-x)^4y^2(1-y)^2,\qquad
 f_B(x,y)=512x^4(1-x)^2y^2(1-y)^2.
\end{equation}
Their maxima are centered at $(1/3,1/2)$ and $(2/3,1/2)$.
Set $\alpha(t)=\min(t,1)$ and $\beta(t)=\max(t-1,0)$ on $[0,2]$.
The two drives are $f^{AB}=\alpha f_A+\beta f_B$ and
$f^{BA}=\beta f_A+\alpha f_B$. Both drives vary in two spatial coordinates and share the initial state
and final drive.

\begin{proposition}[Common terminal damage and global source]\label{prop:2d-endpoint}
The two continuous obstacle histories just defined have exactly the same
terminal damage and exactly the same global cumulative source.
\end{proposition}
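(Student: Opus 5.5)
The plan is to show first that both drives are pointwise nondecreasing in loading. Then each obstacle history coincides at every time with the unconstrained stationary Neumann solution for the current drive. The global identity \eqref{snum:global-work} then finishes the argument.

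\emph{Step 1: monotone drives.} Since $f_A,f_B\geq0$ and $\alpha,\beta$ are nondecreasing, both $f^{AB}$ and $f^{BA}$ are nondecreasing in $t$. At $t=2$ both equal $f_A+f_B$, because $\alpha(2)=\beta(2)=1$.

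\emph{Step 2: obstacle histories are stationary solutions.} For a nondecreasing drive, let $z(t)$ solve $\LL_t z=f(t)$ weakly with Neumann data. The comparison in Lemma~\ref{lem:stability} gives $0\leq z\leq1$, and $z(t)$ is nondecreasing in $t$. Indeed, $\LL_{f(t)}z(s)-f(t)=(f(s)-f(t))(1-z(s))\leq0$ for $s\leq t$, so $z(s)$ is a subsolution, and testing with $(z(s)-z(t))_+$ gives $z(s)\leq z(t)$. Hence along any partition $z(t_n)\geq z(t_{n-1})$ lies in $K(z(t_{n-1}))$. The unconstrained minimizer is then also the constrained one, so by induction from $d_0=0=z(0)$ we get $d_n^\pi=z(t_n)$. (Here $z(0)=0$ since $f(0)=0$.) Passing to the limit in Proposition~\ref{prop:history-limit} yields $d(t)=z(t)$ for each order. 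In particular $d^{AB}(2)=d^{BA}(2)$, because both are the unique stationary solution for $f_A+f_B$. This settles common terminal damage.

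\emph{Step 3: global source.} On each of $[0,1]$ and $[1,2]$ the drive is affine in $t$. Differentiating $\LL_t z=f$ gives
\[
(-b\Delta+a+f)z_t=f_t(1-z),
\]
with $f_t\in L^\infty$. This shows $z\in W^{1,\infty}(0,2;H^1(\Omega))$ by a difference-quotient version of the stability estimate in $H^1$, the main technical point. Testing $\LL_tz=f$ with $z_t$ at a.e.\ $t$ gives
\[
\int_\Omega f(1-z)z_t\dd x=\frac{\dd}{\dd t}\,\frac12\int_\Omega(az^2+b|\nabla z|^2)\dd x.
\]
Integrating over $[0,2]$ yields \eqref{snum:global-work}, which depends only on $d(0)=0$ and $d(2)$. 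These agree for both orders by Step~2, so the global cumulative sources coincide.

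I expect the main obstacle to be the $H^1$-in-time regularity needed to justify testing with $z_t$. I will try to obtain it by applying the coercivity $\|\cdot\|_0$ to the difference $z(t)-z(s)$. This gives $\|z(t)-z(s)\|_0\leq C\|f(t)-f(s)\|_{L^2}\leq C'|t-s|$, hence Lipschitz continuity in $H^1$ and a.e.\ differentiability by reflexivity. The energy chain rule then holds for absolutely continuous $H^1$-valued paths.
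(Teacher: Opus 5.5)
Your proposal is correct and follows the paper's proof essentially step for step. The comparison $\LL_{f(t)}z(s)-f(t)=(f(s)-f(t))(1-z(s))\leq0$ makes the stationary solutions nondecreasing, so they realize both obstacle histories starting from $z(0)=0$; the coercivity estimate on $z(t)-z(s)$ gives $d\in W^{1,\infty}(0,T;H^1(\Omega))$; and testing with $d_t$ yields the global work identity, which depends only on the common endpoints. You also check explicitly that both drives are nondecreasing, vanish at $t=0$ and equal $f_A+f_B$ at $t=2$, details the paper leaves implicit, and you do not need the reflection argument the paper adds as an alternative.
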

\begin{proof}
\emph{Stationary solutions realize the monotone obstacle history.}
The coercive bilinear form $A_f$ gives a unique stationary solution
$z_f\in H^1$ with $A_f(z_f,\phi)=(f,\phi)$ for every $\phi\in H^1$.
The negative-part and $(z_f-1)_+$ tests give $0\leq z_f\leq1$.
For $g\geq f$, subtraction of the two stationary equations gives
\begin{equation}\label{eq:stationary-order-identity}
 \LL_g(z_g-z_f)=(g-f)(1-z_f)\geq0.
\end{equation}
With $\eta=(z_f-z_g)_+$, the weak equation yields
\[
 0\leq\|\eta\|_0^2
 \leq A_g(z_f-z_g,\eta)
 =-((g-f)(1-z_f),\eta)\leq0.
\]
Thus $z_g\geq z_f$. Inductively, $d_{n-1}^\pi=z_{f(t_{n-1})}$ and
monotonicity make $z_{f(t_n)}$ admissible at step $n$. Stationary
minimality and the obstacle minimization give
\[
 \EE_{t_n}(z_{f(t_n)})
 \leq\EE_{t_n}(d_n^\pi)
 \leq\EE_{t_n}(z_{f(t_n)}).
\]
Strict convexity therefore implies $d_n^\pi=z_{f(t_n)}$. Starting
from $z_{f(0)}=0$ and passing to the continuous limit gives
\[
 d(t)=z_{f(t)},\qquad
 d^{AB}(T)=z_{f_A+f_B}=d^{BA}(T).
\]

\emph{Loading regularity needed to test with $d_t$.}
Let $h=z_{f(t)}-z_{f(s)}$. Subtraction at two times, followed by testing
with $h$, yields
\begin{align}
 \|h\|_0^2
 &\leq A_{f(t)}(h,h)
   =((f(t)-f(s))(1-z_{f(s)}),h)\notag\\
 &\leq |\Omega|^{1/2}\|f(t)-f(s)\|_\infty\|h\|_2
 \leq\frac{|\Omega|^{1/2}}{\sqrt a}
       \|f(t)-f(s)\|_\infty\|h\|_0.\label{eq:stationary-h1-chain}
\end{align}
Consequently $\|h\|_0\leq |\Omega|^{1/2}\|f(t)-f(s)\|_\infty/\sqrt a$,
including $h=0$. The Hilbert-space Lipschitz estimate gives
$d\in W^{1,\infty}(0,T;H^1(\Omega))$, so its quadratic energy obeys
the chain rule.

\emph{Global work and the shared endpoints.}
Testing the weak stationary equation with $d_t\in H^1$ gives, at almost
every time,
\begin{align}
 \int_\Omega f(1-d)d_t\dd x
 &=a\int_\Omega dd_t\dd x+b\int_\Omega\nabla d\cdot\nabla d_t\dd x\notag\\
 &=\frac{\dd}{\dd t}\frac12
       \int_\Omega(a d^2+b|\nabla d|^2)\dd x.\label{eq:global-work-chain}
\end{align}
All products are integrable by the $H^1$ loading bound. Fubini's theorem
and integration in loading now give
\begin{equation}\label{eq:global-endpoint-work}
 \int_\Omega Q[d]\dd x
 =\frac12\left[\int_\Omega(a d^2+b|\nabla d|^2)\dd x\right]_0^T.
\end{equation}
The common initial and terminal fields prove equality of the two global
sources. Reflection gives a second argument for this symmetric example.
The map $(x,y)\mapsto(1-x,y)$ preserves the domain, operator and Neumann
condition, and sends $f^{AB}$ to $f^{BA}$. Uniqueness therefore gives the corresponding reflection of $d$ and,
by its defining integral, of $Q$. The two local source fields are therefore
reflections of one another and may differ at the same spatial point.
\end{proof}

For these stationary monotone histories the global cancellation has a
local counterpart in the distributional sense,
\begin{equation}\label{eq:continuum-history-flux}
 Q[d]=\tfrac12\big[a d^2+b|\nabla d|^2\big]_0^T
       -b\,\nabla\!\cdot\!\int_0^T d_t\nabla d\dd t.
\end{equation}
For $\phi\in C_c^\infty(\Omega)$, test with $\phi d_t\in H^1$ and
integrate in loading to obtain
\begin{align}
 \int_\Omega Q[d]\phi\dd x
 &=\int_0^T\!\int_\Omega\{a dd_t\phi
                +b\nabla d\cdot\nabla(\phi d_t)\}\dd x\dd t\notag\\
 &=\int_0^T\!\int_\Omega(a dd_t+b\nabla d\cdot\nabla d_t)\phi\dd x\dd t
     +b\int_0^T\!\int_\Omega d_t\nabla d\cdot\nabla\phi\dd x\dd t\notag\\
 &=\frac12\int_\Omega[a d^2+b|\nabla d|^2]_0^T\phi\dd x
    +b\int_\Omega\left(\int_0^Td_t\nabla d\dd t\right)\cdot\nabla\phi\dd x.
 \label{eq:local-flux-proof-chain}
\end{align}
Multiplication by the fixed bounded $\phi$ gives the weighted quadratic
chain rule in the last line. The accumulated vector field is integrable
because $d_t,\nabla d\in L^2$ in space and are bounded in loading.
Its last pairing is precisely minus the distributional divergence in
Eq.~\eqref{eq:continuum-history-flux}. The common terminal damage fixes the endpoint term, while the
accumulated gradient-work flux retains the local dependence on order.

Nested uniform grids and loading partitions containing $t=1$ test
spatial and loading refinement separately and jointly. We resolve the
order effect when the enclosed $L^2$ source difference has a positive
lower endpoint, and accept a production field using
\eqref{eq:relative-certificate}.

Figure~\ref{fig:2d-drives} shows the two polynomial loading regions.
Their reflected geometry permits exact terminal and global comparisons,
while retaining spatial variation in both coordinates. The computed terminal
fields in Fig.~\ref{fig:2d-damage} agree to a maximum nodal difference
of $4.106\times10^{-14}$, consistent with the exact continuum equality
established in Proposition~\ref{prop:2d-endpoint}.

\begin{figure}\centering\includegraphics[width=.82\linewidth]{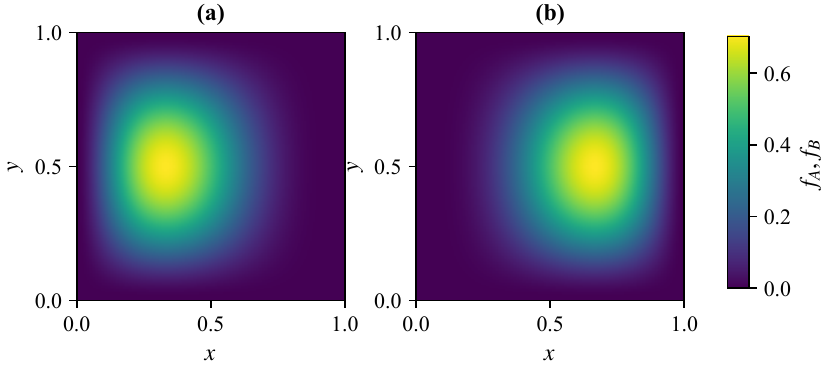}
\caption{Prescribed loading regions on the unit square: (a) $f_A$; (b) $f_B$.
The common scale shows the analytical polynomials in Eq.~\eqref{eq:2d-shapes}.}\label{fig:2d-drives}\end{figure}
\begin{figure}\centering\includegraphics[width=\linewidth]{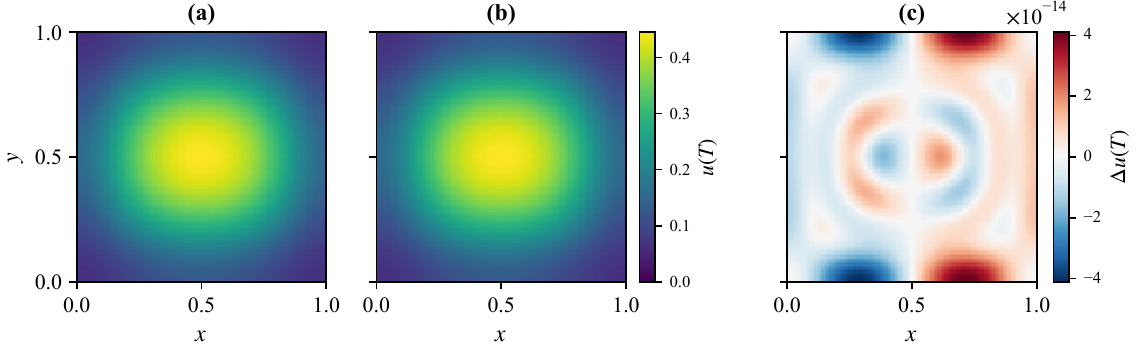}
\caption{Terminal damage at $M=64$, $N=128$: (a) $A\to B$;
(b) $B\to A$; (c) computed difference. Panels (a,b) share a scale;
(c) uses the displayed $10^{-14}$ scale.}
\label{fig:2d-damage}\end{figure}

The local sources remain different (Fig.~\ref{fig:2d-source}).
The independent stationary reference has absolute source radius
$0.0002714729$ and encloses the
continuous source difference in
$[0.0064813,0.0074702]$ in $L^2(\Omega)$.
Relative to the norm of the continuous $A\to B$ source, this gives
$[12.59,14.67]\%$.
The positive lower endpoint resolves the two-dimensional order effect.
The global source curves overlap as predicted by the energy identity
(Fig.~\ref{fig:2d-global}).

\begin{figure}\centering\includegraphics[width=\linewidth]{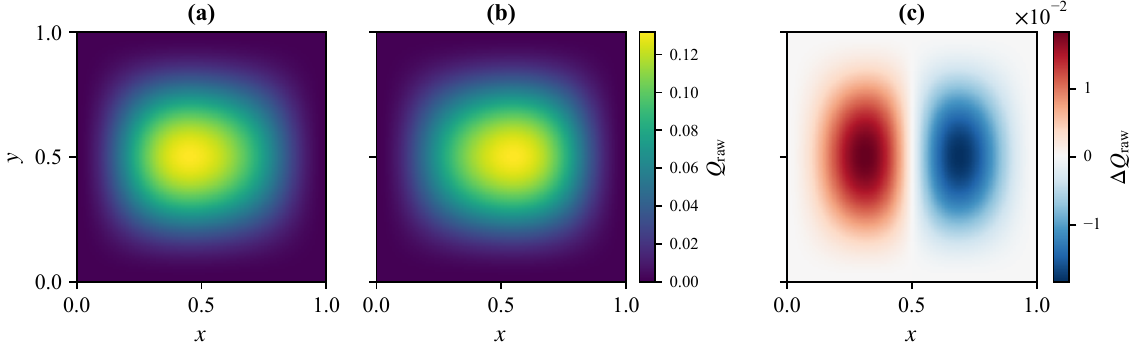}
\caption{Delivered source at $M=64$, $N=128$: (a) $Q_{\raw}^{AB}$;
(b) $Q_{\raw}^{BA}$; (c) their signed difference. Panels (a,b) share
a physical scale; (c) uses a centered scale.}\label{fig:2d-source}\end{figure}
\begin{figure}\centering\includegraphics[width=.7\linewidth]{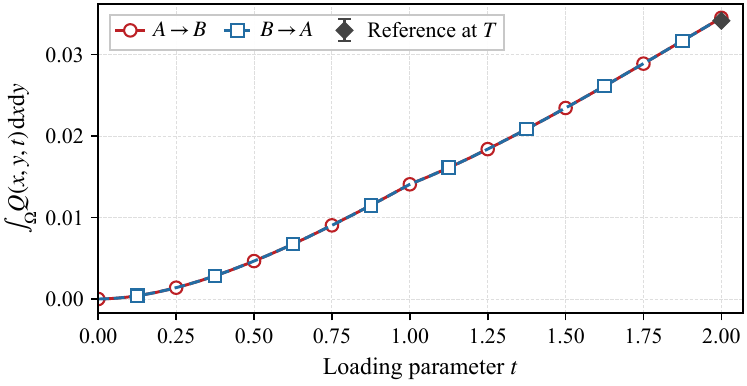}
\caption{Global endpoint accumulations at $M=64$, $N=128$ for both loading orders.
The diamond and error bar enclose the continuous reference integral at $T$.}
\label{fig:2d-global}\end{figure}

Table~\ref{tab:2d} reports all eleven completed production calculations.
The cell count is $M^2$. Times include solution and certification in the
shared execution environment, with reference construction and replay
checks recorded separately.
In the spatial scan, $M=16,32,64,128$ at $N=64$; in the loading scan,
$N=32,64,128,256$ at $M=64$. The paired orders are also solved directly
on three progressively refined grid/partition combinations.
At these three paired resolutions, the relative difference between the
stored sources is approximately 14.183\%, 13.836\% and 13.709\%, respectively.
These numerical-field differences approach the independently enclosed
continuum separation reported above as both spatial and loading resolutions
increase.

Every production error interval is independently enclosed and covered by
its production bound. The $M=64$, $N=128$ calculation passes the 5\%
reference-free test with a relative certificate below
4.076\%, while independent
verification bounds its actual relative error by
1.522\%.
\begin{table}[!htb]
\centering\small
\caption{Controlled two-dimensional calculations: production certificates, independent verification, effectivity and observed computation times.}\label{tab:2d}

\begin{tabular}{llrrrrrr}
\toprule
Order & $M$ & $N$ & $C$ & $C^{\rm rel}$ (\%) & $E_+^{\rm rel}$ (\%) & Effectivity & Time (s)\\
\midrule
AB & 16 & 32 & 0.0247722 & 87.441 & 4.319 & $[11.26,14.80]$ & 1.2\\
AB & 16 & 64 & 0.0232676 & 80.729 & 2.308 & $[19.79,35.36]$ & 1.9\\
AB & 32 & 64 & 0.0069848 & 15.453 & 2.485 & $[5.51,9.45]$ & 25.4\\
AB & 64 & 32 & 0.0038171 & 7.726 & 4.550 & $[1.64,2.14]$ & 56.3\\
AB & 64 & 64 & 0.0025966 & 5.234 & 2.531 & $[2.01,3.42]$ & 102.6\\
AB & 64 & 128 & 0.0020249 & 4.076 & 1.522 & $[2.61,8.04]$ & 225.1\\
AB & 64 & 256 & 0.0017497 & 3.521 & 1.019 & $[3.37,55.17]$ & 476.7\\
AB & 128 & 64 & 0.0014671 & 2.891 & 2.542 & $[1.13,1.92]$ & 484.4\\
BA & 16 & 32 & 0.0247722 & 87.441 & 4.319 & $[11.26,14.80]$ & 1.0\\
BA & 32 & 64 & 0.0069848 & 15.453 & 2.485 & $[5.51,9.45]$ & 19.4\\
BA & 64 & 128 & 0.0020249 & 4.076 & 1.522 & $[2.61,8.04]$ & 202.7\\
\bottomrule
\end{tabular}
\end{table}

Spatial refinement reduces the spatial majorant toward the loading
floor; loading refinement reduces the endpoint contribution toward the
spatial floor (Fig.~\ref{fig:2d-refinement}). Signed spatial and endpoint
errors can cancel, producing nonmonotone actual errors in the
fixed-loading scan.

\begin{figure}[!htb]\centering\includegraphics[width=0.7\linewidth]{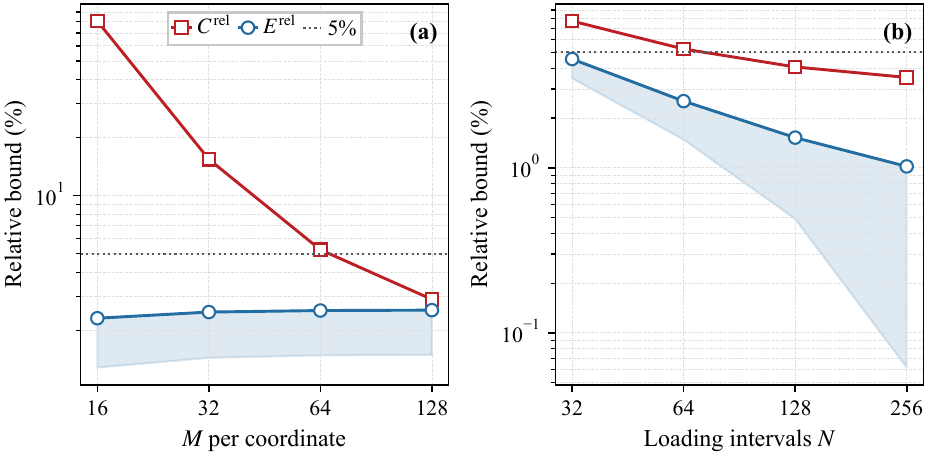}
\caption{Two-dimensional refinement: (a) space at $N=64$; (b) loading at $M=64$.
$C^{\rm rel}$ is the production certificate; the $E^{\rm rel}$ band
encloses the independently verified error.}
\label{fig:2d-refinement}\end{figure}

The reference uses $128\times128$ cells and 128 loading intervals,
with the stationary enclosure and exact polynomial source integration
of Section~\ref{sec:2d-certificate}. Reflection supplies the reverse
order with the same uncertainty. Replays of both accepted $64\times64$
histories reproduce the delivered fields bit for bit and confirm
coefficient order and boundary equalities throughout.
\subsection{Mixed-mode loading history in a notched rock specimen}\label{sec:sen}

\begin{figure}[!htb]\centering
\includegraphics[width=.82\linewidth]{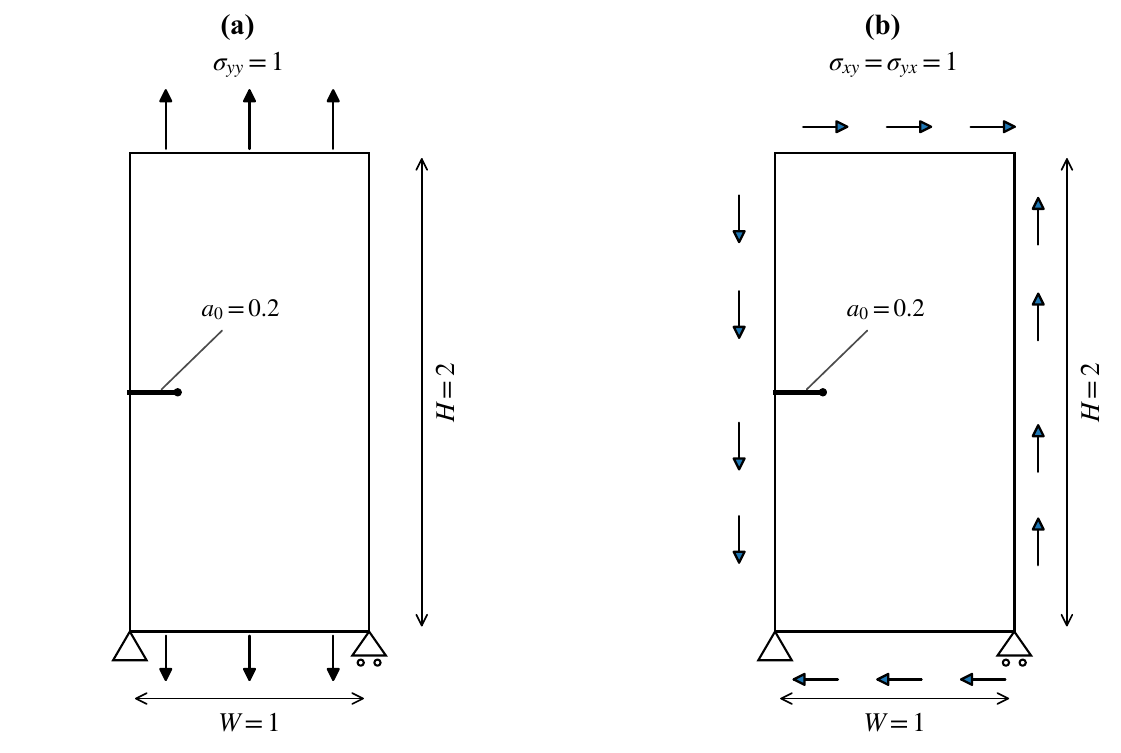}
\caption{SEN elastic preprocessing: (a) unit opening; (b) unit shear.
The slit is traction free, and the indicated constraints remove rigid motions.}
\label{fig:sen-geometry}\end{figure}

The single-edge-notched (SEN) specimen connects source certification to
sequential opening and shear of a rock-motivated geometry. Linear
elasticity generates the prescribed drive in \eqref{eq:energy}.
Reversing the loading order at a common final load tests its effect on
the cumulative source and the cost of certifying that source to 5\%.

The elastic preprocessing domain is a rectangle of width $W=1$ and height
$H=2$, with a traction-free horizontal slit $[0,0.2]\times\{1\}$.
Plane-strain isotropic elasticity uses nondimensional $E=1$ and $\nu=0.2$.
Two unit solutions, $\boldsymbol w_{\mathrm I}$ and
$\boldsymbol w_{\mathrm{II}}$, are computed on an $80\times160$
bilinear finite-element grid. Their exterior tractions are
$\boldsymbol\sigma^{\mathrm I}\boldsymbol n$ and
$\boldsymbol\sigma^{\mathrm{II}}\boldsymbol n$, where
\begin{equation}\label{eq:sen-tractions}
 \boldsymbol\sigma^{\mathrm I}=\begin{pmatrix}0&0\\0&1\end{pmatrix},
 \qquad
 \boldsymbol\sigma^{\mathrm{II}}=\begin{pmatrix}0&1\\1&0\end{pmatrix}.
\end{equation}
The shear tractions act on all four exterior edges and balance both force
and moment. At $(0,0)$ both displacement components are fixed, and at
$(1,0)$ the vertical component is fixed; these three constraints remove
rigid motions. Duplicated slit-face nodes share a single tip node.
Figure~\ref{fig:sen-geometry} specifies the geometry and loading. The
damage problem retains the complete structured rectangle with homogeneous
Neumann conditions; the notch enters through the prescribed drive.

Let $G_{ij}=\tfrac12\boldsymbol\varepsilon(\boldsymbol w_i):
\mathbb C:\boldsymbol\varepsilon(\boldsymbol w_j)$, with
$i,j\in\{\mathrm I,\mathrm{II}\}$. A nonnegative Gaussian recovery
of width $0.1W$, followed by bilinear interpolation on a fixed
$16\times32$ background grid, defines the stored energy Gram field
$\overline{\boldsymbol G}$. Its diagonal and determinant bounds are
checked outward at every stored vertex. Bilinear interpolation preserves
positive definiteness by convexity, so the prescribed density
\begin{equation}\label{eq:sen-drive}
 f(x,y,t)=\gamma\bigl[
 \alpha(t)^2\overline G_{11}(x,y)
 +2\alpha(t)\beta(t)\overline G_{12}(x,y)
 +\beta(t)^2\overline G_{22}(x,y)\bigr]
\end{equation}
is nonnegative. The signed interaction term is retained
(Fig.~\ref{fig:sen-driving}). The recovered Gram field remains fixed during all damage refinements.
A separate elasticity refinement from $80\times160$ to $160\times320$
changes it by approximately 0.312\% in relative Frobenius norm.
This measures mechanical-input sensitivity; the source certificate
concerns the fixed stored field.

\begin{figure}[!htb]\centering
\includegraphics[width=.95\linewidth]{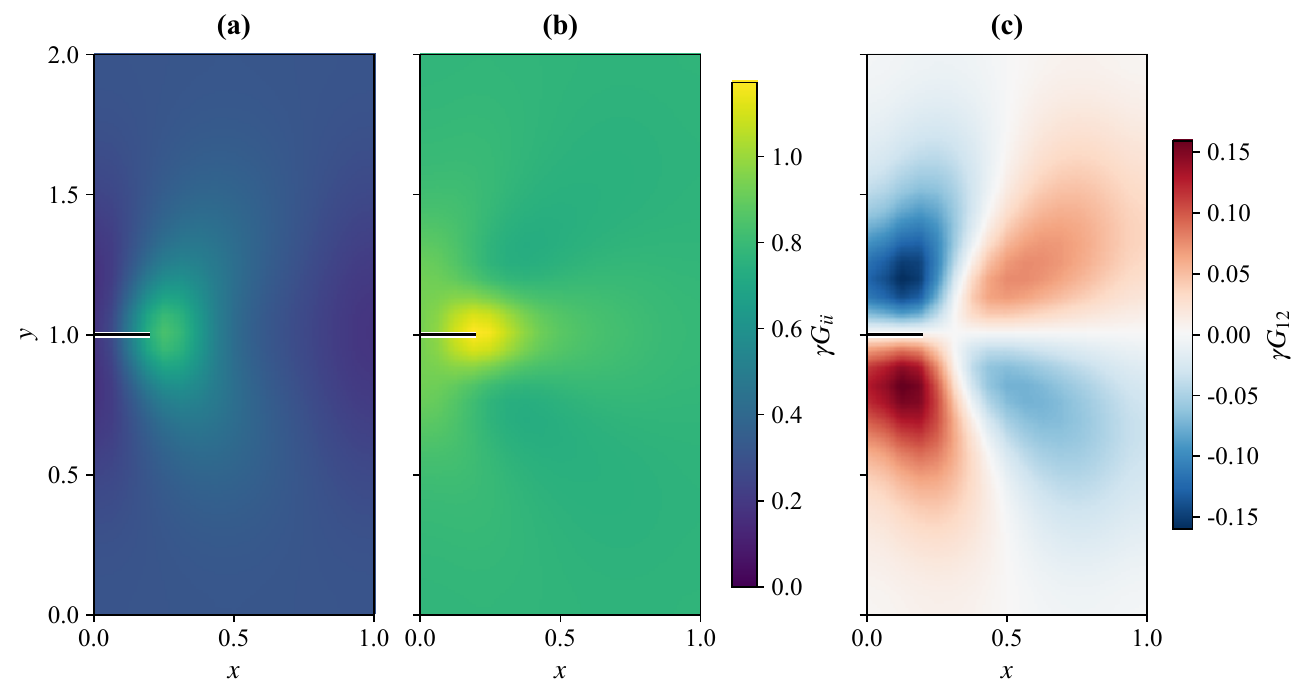}
\caption{Stored energy fields: (a) opening, $\gamma\overline G_{11}$;
(b) shear, $\gamma\overline G_{22}$; (c) signed interaction,
$\gamma\overline G_{12}$. Panels (a,b) share a scale.}
\label{fig:sen-driving}\end{figure}

The damage parameters are $a=1$, $b=2^{-8}$ and $d(0)=0$, giving the
interaction length $\sqrt{b/a}=W/16$. A declared scan considers
$b\in\{1/256,1/64\}$ and terminal peak drives $0.5,1,2,4$ for both
orders. The geometric interaction length is fixed first; the peak drive
closest to a maximum damage of $0.6$ is then selected. This gives a peak
drive of $2$ and $\gamma=0.6395008660428658$. Selection follows these predefined mechanical criteria, independently
of the measured source difference, and all sixteen scan records are retained. With $S(s)=s^3(10-15s+6s^2)$ on $[0,1]$, extended by
zero and one outside this interval, the two paths on $t\in[0,1]$ are
\begin{equation}\label{eq:sen-paths}
 (\alpha_A,\beta_A)=(S(2t),S(2t-1)),\qquad
 (\alpha_B,\beta_B)=(S(2t-1),S(2t)).
\end{equation}
Path A applies opening before shear, and Path B reverses that order
(Fig.~\ref{fig:sen-loading}). Both paths end exactly at $(1,1)$ and
therefore share the final elastic load and prescribed drive. The signed
interaction field $\overline G_{12}$ allows locally nonmonotone driving
despite increasing amplitudes. Both paths are therefore computed through their full obstacle histories.

\begin{figure}[!htb]\centering
\includegraphics[width=.88\linewidth]{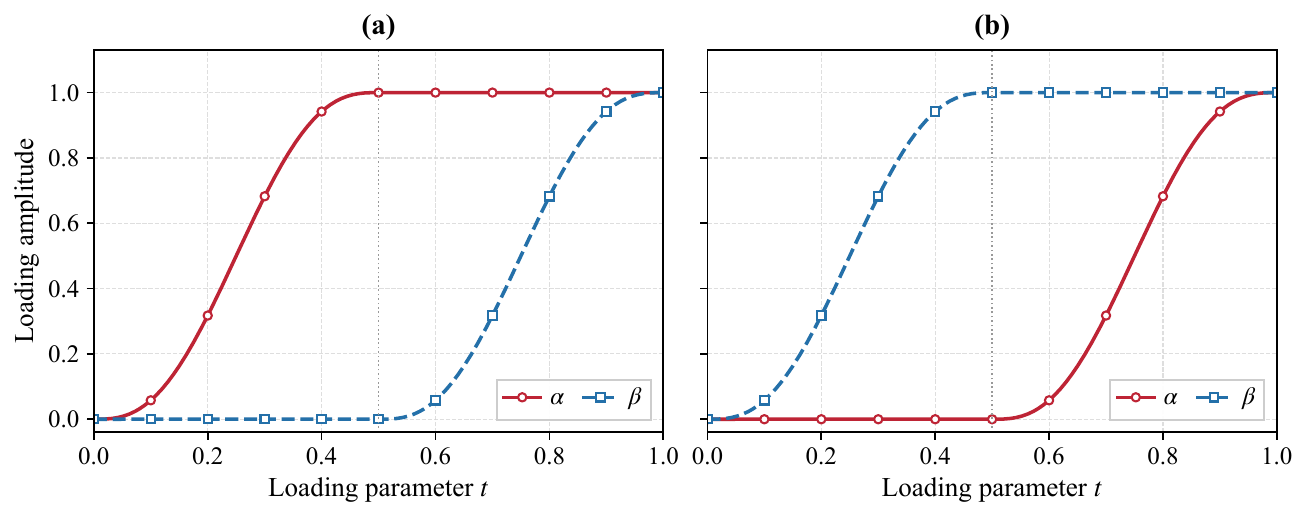}
\caption{Loading amplitudes: (a) opening then shear (Path A);
(b) shear then opening (Path B). Both terminal amplitude pairs equal $(1,1)$.}
\label{fig:sen-loading}\end{figure}

The certification begins with the constant causal provider. Let
$\rho_{j,K,n}$ bound residual channel $j$ on a complete cell and loading
interval, and let $\underline a_n>0$ be a certified lower bound for
$a+f$ throughout $\Omega\times I_n$. A spatially constant radius is
\begin{equation}\label{eq:sen-constant}
 e_{j,n}^{\mathrm C}
 =\max_{m\le n}\frac{\max_K\rho_{j,K,m}}{\underline a_m}.
\end{equation}
It is nonnegative and nondecreasing, and
$\mathcal L_t e_{j,n}^{\mathrm C}\ge\rho_{j,K,n}$ on the interval.
The tensor reconstruction retains every positive coefficient increment.
Cells are frozen exactly when all supporting coefficients are unchanged,
preserving multiplier complementarity.

An auxiliary elliptic response uses residual localization to construct
a spatially varying uncertainty radius. For each channel, set $\rho_{j,K}=\max_n\rho_{j,K,n}$ and
construct a continuous bilinear $g_j$ by assigning the maximum adjacent
$\rho_{j,K}$ to each vertex. Then $g_j\ge\rho_{j,K}$ throughout each
cell. The exact auxiliary response satisfies
\begin{equation}\label{eq:sen-auxiliary}
 (-b\Delta+a)w_j=g_j,\qquad \partial_n w_j=0.
\end{equation}
A numerical solve supplies a $C^2$ Neumann spline $z_j$.
The residual correction \eqref{snum:aux-correction} encloses the exact
supersolution $w_j$ in Bernstein arithmetic.

For Path B, a two-stage provider additionally uses the positive drive
after the first mode is fully applied. Bernstein bounds of
Eq.~\eqref{eq:sen-drive} give $a+f\ge a_*$ throughout the second
stage, with the certified value $a_*\ge1.65819725$.
Let $E_{0,j}$ be the first-stage constant prefix radius at $t=1/2$.
An auxiliary obstacle solve provides a Neumann spline candidate $z_j$.
Applying the stage correction \eqref{snum:stage-correction} with
all extrema and divisions enclosed outward gives
\begin{equation}\label{eq:sen-stage-correction}
 R_j=g_j+b\Delta z_j-a_*z_j,\quad
 \delta_j=\max\left\{
 \frac{\|(R_j)_+\|_\infty}{a_*},\,
 E_{0,j}-\inf_\Omega z_j,\,0\right\},\quad
 v_j=z_j+\delta_j.
\end{equation}
The correction gives $v_j\ge E_{0,j}\ge0$ and
$(-b\Delta+a_*)v_j=g_j-R_j+a_*\delta_j\ge g_j$.
Since $a+f\ge a_*$, it follows that
$\mathcal L_t v_j\ge g_j\ge\rho_j$. Joining the first-stage
prefix radii to this fixed second-stage supersolution preserves the
causal inheritance condition. The independently valid constant cap
$\max\{E_{0,j},\max_K\rho_{j,K}/a_*\}$ can also be used; the pointwise minimum of these two valid error enclosures gives a
sharper history bound for source transfer. Source transfer uses $E_{0,j}$ in the first stage and the sharpened
enclosure in the second, on the same computed history.

Every loading partition contains $t=1/2$. The quadratic products of
the quintic clocks are enclosed over complete intervals in degree-ten
Bernstein form. With affine loading interpolation $\mathcal I_n$ and
the spline history $p$, the exact residual identity is
\begin{equation}\label{eq:sen-loading-residual}
 f(1-p)-\mathcal I_n[f(1-p)]
 =(f-\mathcal I_n f)(1-p)
   +\theta(1-\theta)\Delta f_n\Delta p_n.
\end{equation}
Tensor-product Bernstein bounds enclose both terms throughout the
quintic loading interval. For the bilinear damage history interpolated affinely
on one loading interval, set $\Delta u=u_1-u_0$ and
$m_k=\int_0^1\theta^k f(\theta)\dd\theta$. Direct integration gives
the exact defect of the endpoint contribution,
\begin{equation}\label{eq:sen-endpoint}
 f_1(1-u_1)\Delta u-Q[u]\big|_{I_n}
 =\Delta u\bigl[(f_1-m_0)(1-u_0)-(f_1-m_1)\Delta u\bigr].
\end{equation}
Moments, products, and $L^2$ integration over physical area $WH=2$
are evaluated outward. Acceptance requires $q_->C$ and
$C/(q_--C)\leq0.05$.

Table~\ref{tab:sen-certification} and Fig.~\ref{fig:sen-certification}
show the progression from the original constant barrier to accepted
computations. At $32\times64$ cells and $N=64$, the two constant-barrier
certificates are 23.196\% and 27.115\%, with decimal bounds rounded
upward. Using the auxiliary provider on the same saved histories reduces
them to 8.640\% and 13.490\%. The numerical damage and $Q_{\raw}$
are identical in each such pair; this reduction measures sharper
certification. Subsequent joint refinement to $64\times128$ and
$128\times256$ cells, with $N=128$ and $256$, reduces the final
certificates to 3.256\% and 4.951\%. Both paths satisfy the 5\% test. The table reports these certificates
alongside independently enclosed actual-error intervals.

\begin{table}[!htbp]
\centering\small
\setlength{\tabcolsep}{5pt}
\caption{SEN certification for opening then shear (Path A) and the reverse order (Path B). C and A denote constant and auxiliary providers. The first two rows of each path use identical states and $Q_{\raw}$. Bounds are rounded outward.}
\label{tab:sen-certification}
\begin{tabular}{llrlrrr}
\toprule
Path & Grid & $N$ & Barrier & $C^{\rm rel}$ (\%) & $E^{\rm rel}$ (\%) & CPU (s) \\
\midrule
A & $32\times64$ & 64 & C & \textbf{23.196} & [2.024, 4.831] & 17.4 \\
A & $32\times64$ & 64 & A & 8.640 & [2.024, 4.831] & 19.8 \\
A & $64\times128$ & 128 & A & 4.956 & [0.826, 3.656] & 166.6 \\
A & $128\times256$ & 256 & A & \textbf{3.256} & [0.291, 3.144] & 873.3 \\
\midrule
B & $32\times64$ & 64 & C & \textbf{27.115} & [1.120, 7.734] & 21.1 \\
B & $32\times64$ & 64 & A & 13.490 & [1.120, 7.734] & 9.8 \\
B & $64\times128$ & 128 & A & 6.482 & [0.000, 5.782] & 177.9 \\
B & $128\times256$ & 256 & A & \textbf{4.951} & [0.000, 4.830] & 1281.2 \\
\bottomrule
\end{tabular}
\par\smallskip
\end{table}

\begin{figure}[!htb]\centering
\includegraphics[width=\linewidth]{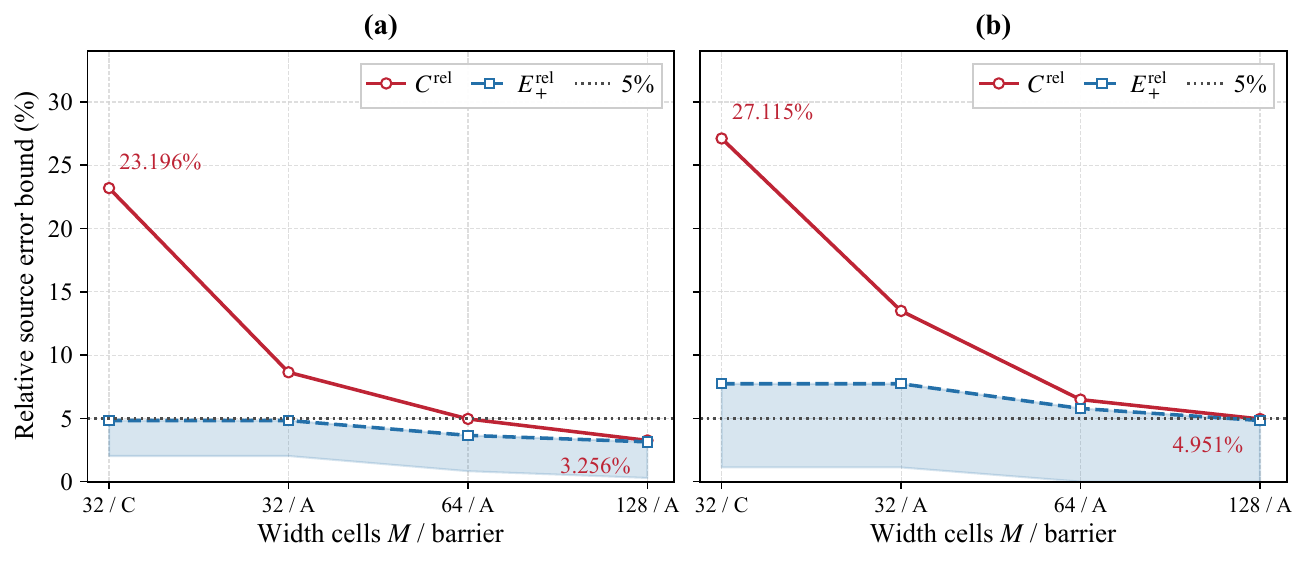}
\caption{SEN certification: (a) Path A; (b) Path B. C and A denote
constant and auxiliary barriers; $N=2M$. The first two points use the
same history. Shading retains reference uncertainty, and the dotted line
marks 5\%.}
\label{fig:sen-certification}\end{figure}

The spatial and loading studies retain the same frozen mechanical input.
The relative difference between the two computed terminal damage fields
is approximately 0.0736\%, 0.0716\% and 0.0710\% at the three
successive paired resolutions. The corresponding delivered-source
differences are 2.881\%, 2.693\% and 2.635\%, normalized by Path A.
Figures~\ref{fig:sen-damage} and~\ref{fig:sen-source} display the
accepted fields and their differences. The small terminal-damage
difference therefore coexists with a substantially larger cumulative-source
difference. At the accepted resolution the global delivered sources are
approximately $0.281169$ and $0.282028$, a difference of
$8.594\times10^{-4}$. The reference enclosure below tests separation of the continuous sources.

\begin{figure}[!htb]\centering
\includegraphics[width=.95\linewidth]{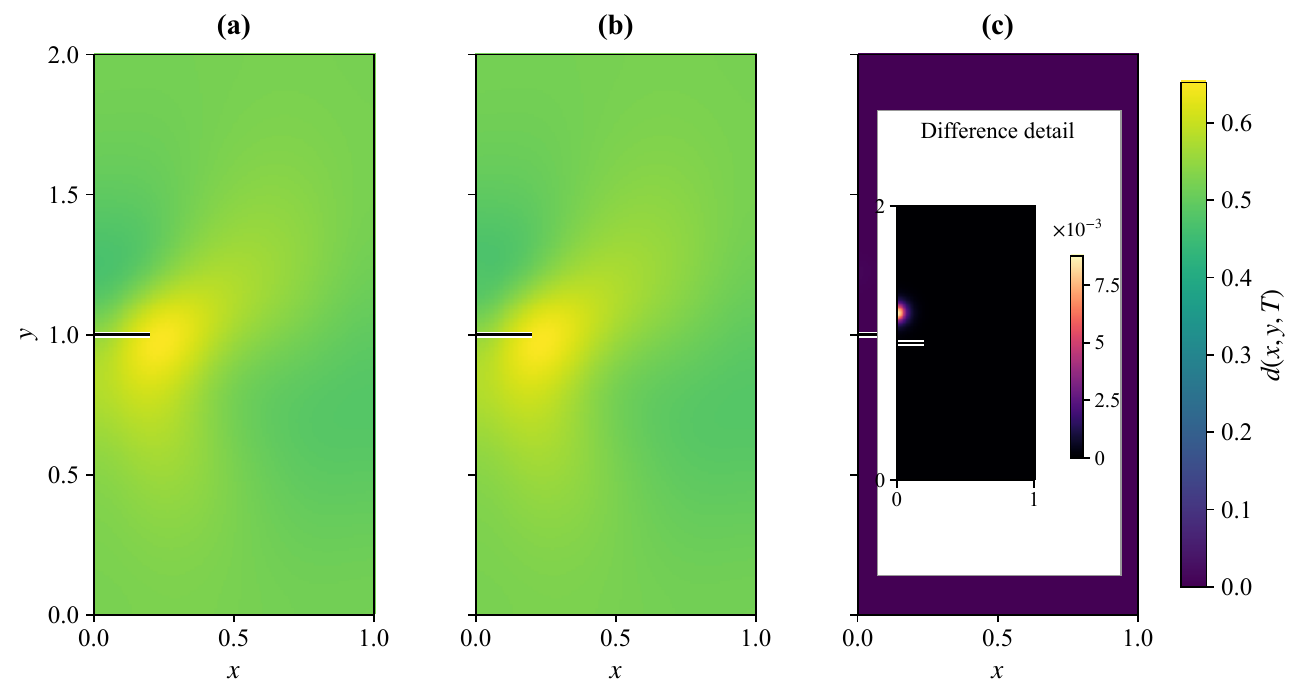}
\caption{SEN terminal damage: (a) Path A; (b) Path B; (c) absolute
difference. Main panels share a scale; the inset resolves the small
difference on its indicated scale.}
\label{fig:sen-damage}\end{figure}
\begin{figure}[!htb]\centering
\includegraphics[width=.95\linewidth]{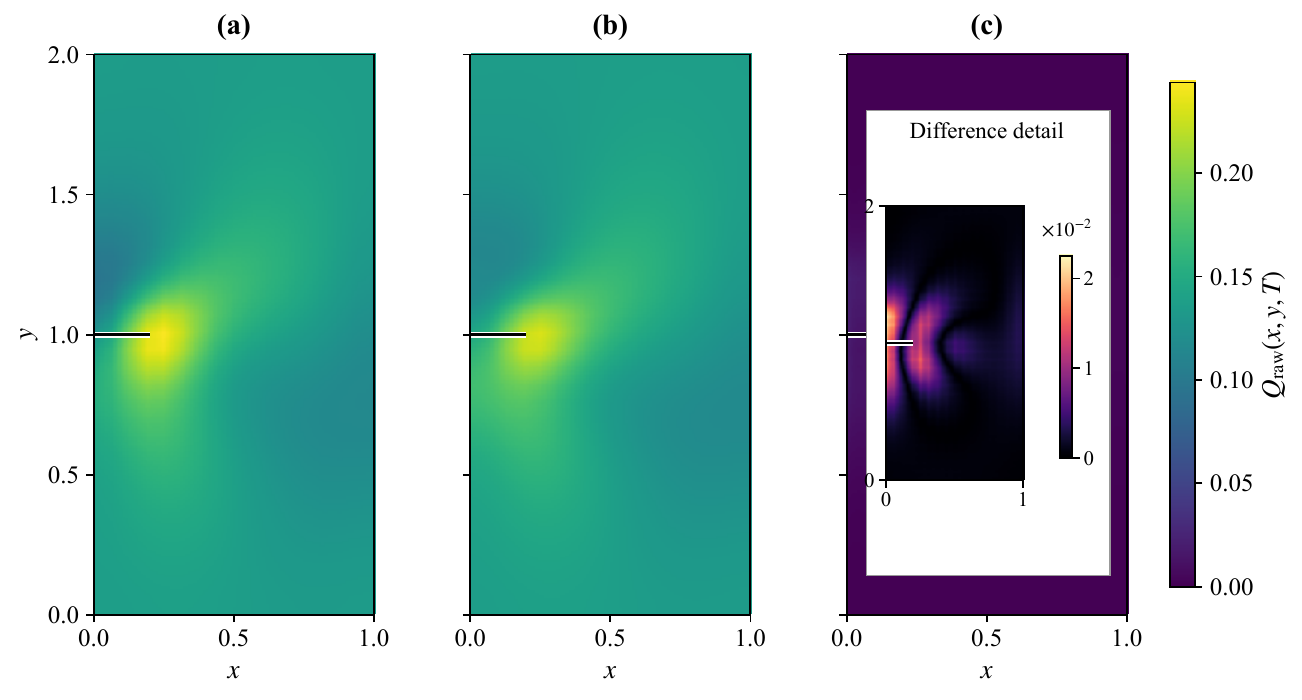}
\caption{SEN delivered source: (a) Path A; (b) Path B; (c) absolute
difference. Main panels share a scale; the inset uses its indicated
difference scale.}
\label{fig:sen-source}\end{figure}

Each reference uses separate state histories on $128\times256$ cells
and 512 loading intervals, direct sparse obstacle solves, and continuous
polynomial source integration. Production and reference share the
reconstruction and interval-arithmetic routines; each bound is computed
from its own residuals. The accepted relative-error intervals are
$[0.291,3.144]\%$ and $[0,4.830]\%$, both covered by the production
bounds. All 22 retained certificate records from nine distinct production
discretizations satisfy coverage, including alternative enclosures and
reconstruction controls on shared histories. Tolerance and contact-guard
outcomes, including rejected runs, remain in the data. CPU times include
the base solve and any retained re-enclosure cost.

Adding the two reference radii to the exact polynomial comparison of
their source centers yields
\begin{equation}\label{eq:sen-source-separation}
 \frac{\|Q[d_A]-Q[d_B]\|_{L^2(\Omega)}}
      {\|Q[d_A]\|_{L^2(\Omega)}}
 \in[0.7234,8.3276]\%.
\end{equation}
The local difference remains positive after reference uncertainty is
included. The signed global difference lies in $[-0.013036,0.012968]$,
which contains zero. Figure~\ref{fig:sen-uncertainty} shows the accepted
pointwise bounds.

\begin{figure}[!htb]\centering
\includegraphics[width=.70\linewidth]{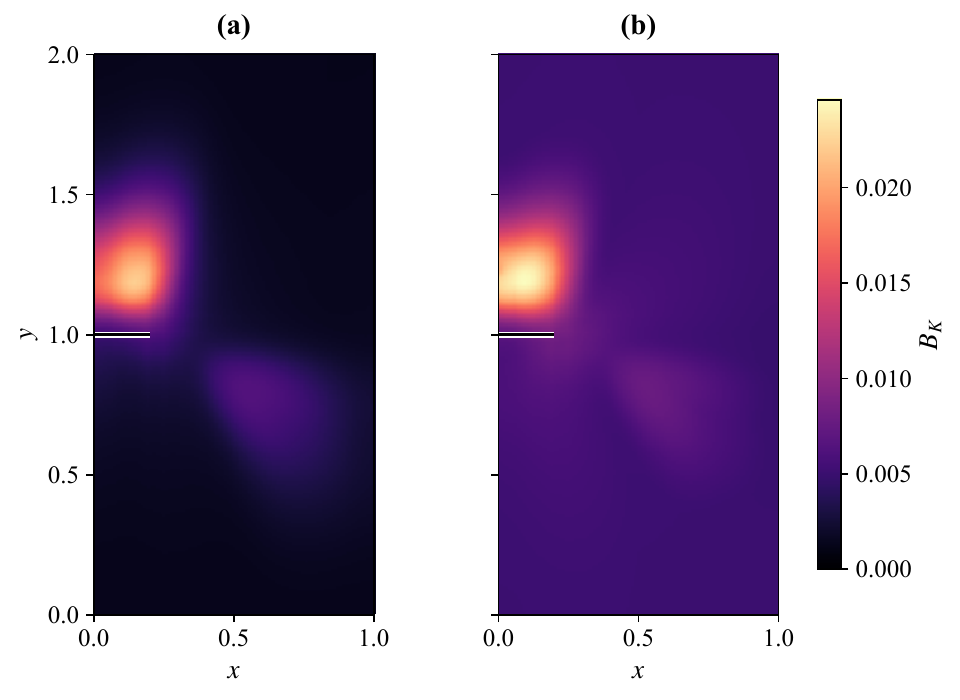}
\caption{Complete cellwise source-uncertainty bounds for the accepted
SEN calculations: (a) Path A; (b) Path B, on a common scale.}
\label{fig:sen-uncertainty}\end{figure}

Damage concentrates near the recovered notch-tip field, with diffuse
damage from the positive background drive. Nearly coincident terminal
fields coexist with distinct cumulative sources, and the auxiliary
providers bring both outputs within the prescribed 5\% accuracy target
after refinement.
\FloatBarrier

\section{Conclusions}\label{snum:conclusions}
We have established a causal a posteriori bound for the cumulative
energy field of scalar irreversible gradient damage. Nondecreasing
supersolutions enclose uncertainty inherited through the obstacle, and
a primitive identity transfers this history bound to the output.
The endpoint defect and outward arithmetic complete the certificate for
the delivered field. Admissible splines, Bernstein bounds, and constant
or elliptic supersolutions give computable realizations in one dimension
and on two-dimensional rectangles. The analysis also identifies
sufficient conditions for certificate decay.

The numerical evidence addresses both reliability and usefulness. All
417 saved adaptive histories and 25 fixed-parameter calculations have
independently enclosed errors below their production bounds. The narrow-localization
case decreases from approximately 158.78\% relative error to a verified
upper error of 2.125\% at the 5\% certified target. Source-oriented
marking saves 35.9\% and 51.9\% of accumulated matrix--vector work in
the two comparisons with damage-variation marking. The two-dimensional
mechanism separates local-history effects from identical terminal and
global quantities, while the mechanical benchmark demonstrates the
constant-to-auxiliary-to-refined certification sequence.

Further developments concern certified driving-density errors for
displacement--damage coupling, admissible reconstructions and pointwise
supersolutions on general meshes, and efficient history reuse. Establishing
convergence of the adaptive scores would complement the delivered-output
guarantee proved here.

\section*{Acknowledgment}
This work was supported by the National Key R\&D Program of China
(No. 2024YFB2606101), the National Natural Science Foundation of China
(No. 52478418), and the Sichuan Science and Technology Program
(No. 2025ZNSFSC1294).
\section*{Declarations}
Data and the implementation are available on request. The authors declare
no conflict of interest.
\bibliographystyle{ws-m3as}
\bibliography{references}
\end{document}